\newcommand\arxiv{}
\newtheorem{example}{Example}
\newtheorem{theorem}{Theorem}
\colorlet{simonColor}{Yellow!30!white}
\colorlet{christelColor}{Purple!30!white}
\colorlet{clemensColor}{Cyan!50!white}
\colorlet{hansColor}{Orange!30!white}
\def\namedlabel#1#2{\begingroup
    #2%
    \def\@currentlabel{#2}%
    \phantomsection\label{#1}\endgroup
}
  \newtheorem{theorem}{Theorem}
\newtheorem{corollary}[theorem]{Corollary}
\theoremstyle{definition}
\newtheorem{definition}{Definition}
\theoremstyle{remark}
\newtheorem*{example*}{Example}
\theoremstyle{remark}
\newtheorem*{remark*}{Remark}
\numberwithin{equation}{section}
\newcommand{\idf}[1]{\mathbf{1}_{#1}}
\newcommand{\af}{{\bm a}}
\newcommand{\bfat}{{\bm b}}
\newcommand{\uf}{{\bm u}}
\newcommand{\vf}{{\bm v}}
\newcommand{\wf}{{\bm w}}
\newcommand{\Ac}{\mathcal{A}}
\newcommand{\Dc}{\mathcal{D}}
\newcommand{\Rc}{\mathcal{R}}
\newcommand{\Bool}{\mathbb{B}}
\newcommand{\R}{\mathbb{R}}
\newcommand{\E}{\mathbb{E}}
\newcommand{\N}{\mathbb{N}}
\newcommand{\boolfuncs}{\Bool(X)}
\newcommand{\cgs}{\mathbb{G}(X)}
\newcommand{\switchmain}[2]{\mathrm{flip}_{#1}(#2)}
\newcommand{\switch}[2]{#2{\oplus}\idf{#1}}
\newcommand{\switchf}[2]{#2[#1/\overline #1]}
\newcommand{\switchfset}[3]{#2[#3/\overline #3 : #3 \in #1]}
\newcommand{\tiff}{\quad\text{iff}\quad}
\newcommand{\subjto}{\;\mathrm{s.t.}\;}
\newcommand{\Shapley}{{\bf Sh}}
\newcommand{\Banzhaf}{{\bf Bz}}
\newcommand{\ZeroVF}{{\bf Z}}
\newcommand{\var}{\mathtt{var}}
\newcommand{\image}{\mathtt{image}}
\newcommand{\cofactor}[2]{#1_{#2}}
\newcommand{\ass}[2]{#1{/}#2}
\newcommand{\Adiff}{\mathrm{D}}
\newcommand{\marginal}{\partial}
\newcommand{\dep}{\mathtt{dep}}
\newcommand{\rhostep}{ {\rho_{\mathrm{step}}} }
\newcommand{\rhofrac}{ {\rho_{\mathrm{frac}}} }
\newcommand{\rhoexp}{ {\rho_{\mathrm{exp}}} }
\newcommand{\scs}{\mathrm{scs}}
\newcommand{\mscs}{\mathrm{mscs}}
\newcommand{\shortestDist}{\mathrm{d}}
\newcommand{\cbanzhaf}{c_{\mathrm{Bz}}}
\newcommand{\cshapley}{c_{\mathrm{Sh}}}
\newcommand{\makeprop}[1]{\textnormal{\textsc{#1}}}
\newcommand{\Blame}{{\bf {B}}}
\newcommand{\MBlame}{{\bf {MB}}}
\newcommand{\Inf}{{\bf {I}}}
\newcommand{\JW}{{\bf {JW}}}
\newcommand{\Imp}{\mathfrak{I}}
\newcommand{\ImpCG}{{\mathfrak{E}}}
\newcommand{\HKR}{\mathrm{H}}
\newcommand{\kappaquad}{{\kappa_{\mathrm{quad}}}}
\newcommand{\kappalog}{{\kappa_{\mathrm{log}}}}
\newcommand{\kappaabs}{{\kappa_{\mathrm{abs}}}}
\newcommand{\atmost}{\texttt{atmost}}
\title{A Unifying Formal Approach to Importance Values in Boolean Functions}
\author{
Hans Harder$^{1,2}$
\and
Simon Jantsch$^2$\and
Christel Baier$^{2,4}$\And
Clemens Dubslaff$^{3,4}$
\affiliations
$^1$Paderborn University, Paderborn, Germany\\
$^2$Dresden University of Technology, Dresden, Germany \\
$^3$Eindhoven University of Technology, Eindhoven, The Netherlands \\
$^4$Centre for Tactile Internet with Human-in-the-Loop (CeTI), Dresden, Germany\\
\emails
hans.harder@uni-paderborn.de,
\{simon.jantsch, christel.baier\}@tu-dresden.de,
c.dubslaff@tue.nl
}
\begin{document}
\maketitle
\begin{abstract}
    
Boolean functions and their representation through logics, circuits, machine learning classifiers,
or binary decision diagrams (BDDs) play a central role in the design and analysis of computing systems. 
Quantifying the relative impact of variables on the truth value by means of \emph{importance values}
can provide useful insights to steer system design and debugging.
In this paper, we introduce a uniform framework for reasoning about such values,
relying on a generic notion of \emph{importance value functions (IVFs)}.
The class of IVFs is defined by axioms motivated from several notions of importance values
introduced in the literature, including Ben-Or and Linial's \emph{influence} and 
Chockler, Halpern, and Kupferman's notion of \emph{responsibility} and \emph{blame}.
We establish a connection between IVFs and game-theoretic concepts such as \emph{Shapley} and
\emph{Banzhaf} values, both of which measure the impact of players on outcomes in cooperative games.
Exploiting BDD-based symbolic methods and projected model counting, we devise and evaluate
practical computation schemes for IVFs.

\end{abstract}

\section{Introduction}

Boolean functions arise in many areas of computer science and mathematics, e.g., in circuit design,
formal logics, coding theory, artificial intelligence, machine learning, and system analysis
~\cite{crama_hammer_2011,o2014analysis}.
When modeling and analyzing systems through Boolean functions, many design decisions are
affected by the relevance of variables for the outcome of the function.
Examples include noise-reduction components for important input variables to increase 
reliability of circuits, prioritizing important variables in decision-making of protocols, 
or the order of variables in BDDs~\cite{bryantSymbolicBooleanManipulation1992,bartlettComparisonTwoNew2001}.
Many ideas to quantify such notions of \emph{importance} of variables in Boolean functions 
have since been considered in the literature. 
To mention a few, \emph{influence}~\cite{ben-orCollectiveCoinFlipping1985} 
is used to determine power of actors in voting schemes, \cite{hammerEvaluationStrengthRelevance2000} 
devised measures based on how constant a function becomes depending on variable assignments,
\emph{blame}~\cite{chocklerResponsibilityBlameStructuralModel2004} quantifies the average \emph{responsibility}~\cite{chocklerWhatCausesSystem2008}
of input variables on the outcome of circuits or on causal
reasoning, and
the \emph{Jeroslow-Wang value}~\cite{jeroslowSolvingPropositionalSatisfiability1990} quantifies 
importance of variables in CNFs to derive splitting rules for SAT-solvers~\cite{hookerBranchingRulesSatisfiability1995}.
Closely related are notions of impact in cooperative games, e.g.,
through the \emph{Shapley value}~\cite{shapleyValueNPersonGames1953}
or the \emph{Banzhaf value}~\cite{banzhafWeightedVotingDoesn1965}.

Although some of the aforementioned concepts %
are of quite different nature 
and serve different purposes, they share some common ideas.
This raises the question of what characteristics importance values 
have and how the notions of the literature relate.
The motivation of this paper is to advance the understanding of importance values, independent 
of concrete applications. For this purpose, we introduce a generic 
axiomatic framework that constitutes the class
of \emph{importance value functions (IVFs)}. 
Our axioms are motivated by properties one would intuitively expect from IVFs, e.g., 
that independent variables have no importance or that permutations do not change 
importance values. %
We show basic relationships within and between IVFs and provide new insights
for existing and new importance measures. 
By connecting Boolean functions and cooperative games through \emph{cooperative game mappings (CGMs)} and using Shapley and Banzhaf values, 
we show how to generically derive new IVFs .
All aforementioned notions of importance values from the literature %
satisfy our IVF axioms, showing that we provide a \emph{unifying framework} 
for all these notions, including CGM-derived ones.

Most notions of importance are known to be computationally hard, e.g., computing 
influence or the Shapley value is \#P-complete~\cite{traxler2009variable,FaiKer92,DenPap94}.
We address computational aspects by devising practical computation schemes for IVFs using 
projected model counting~\cite{aziz2015exists} and BDDs. %

\paragraph{Contributions and outline.}
In summary, our main contribution is an axiomatic definition of IVFs for variables in Boolean functions (\Cref{sec:ivfs_general}),
covering notions of importance from the literature (\Cref{ssec:instances_blame,ssec:instances_influence}).
Moreover, we derive novel IVFs by linking Boolean functions with cooperative games and related values (\Cref{ssec:cgms}).
Finally, we provide practical computation schemes for IVFs (\Cref{sec:computation}).

\paragraph{Supplemental material.} 
\ifx\arxiv\undefined
    All proofs %
    can be found in an extended version at (...).
\else
    This is a preprint of the paper accepted at the 32nd International Joint Conferences on Artificial Intelligence (IJCAI'23). It includes proofs and other additional material in the appendix.
\fi
An implementation of the computing schemes for IVFs can be found at \href{https://github.com/graps1/impmeas}{https://github.com/graps1/impmeas}.

\section{Preliminaries}
\begin{toappendix}
    \subsection*{Notation} 
\end{toappendix}
Let $X = \{x,y,z,\dots\}$ be a finite set of $n = |X|$ variables, which we assume to be fixed throughout
the paper.

\paragraph{Assignments.} An \emph{assignment over $U \subseteq X$} is a function 
$\uf  \colon U \rightarrow \{0,1\}$, written in the form 
$\uf = \ass x 0; \ass y 1; \dots$.
We denote assignments by bold lower-case letters and their domains by 
corresponding upper-case letters.
If $\uf$ and $\vf$ have disjoint domains, we write their \emph{concatenation} as $\wf = \uf; \vf$ with $W = V \cup U$ and $
\wf(x) = \uf(x)$ if $x \in U$ and $\wf(x) = \vf(x)$ if $x \in V$.
The \emph{restriction} of $\uf$ to a domain $S \subseteq U$ is denoted by $\uf_S$.
For a permutation $\sigma$ of $X$, we define $\sigma \uf$ as the assignment over $\sigma(U)$ with $(\sigma \uf)(x) = \uf(\sigma^{-1}(x))$.

\paragraph{Boolean functions.} We call $f,g,h,\dots : \{0,1\}^X \rightarrow \{0,1\}$ \emph{Boolean functions}, collected in a set $\boolfuncs$.
We write $g = x$ if $g$ is the indicator function of $x$, and we write $\overline g$ for negation, $f \lor g$ for disjunction, $fg$ for conjunction and
$f \oplus g$ for exclusive disjunction.
The \emph{cofactor of $f$ w.r.t. an assignment $\vf$} 
is the function $\cofactor f \vf$ that always sets variables in $V$ to the value given by $\vf$, and is defined as $\cofactor f \vf(\uf) = f(\vf; \uf_{U \setminus V})$.
The \emph{Shannon decomposition of $f$ w.r.t. variable $x$} is a decomposition rule
stating that $f = x \cofactor f {\ass x 1} \lor \overline x \cofactor f {\ass x 0}$ holds, 
where $\cofactor f {\ass x 1}$ and $\cofactor f {\ass x 0}$ are the \emph{positive} and \emph{negative} cofactor of $f$ w.r.t. $x$. 
For a Boolean function $f$, variable $x$, and Boolean function or variable $s$, let $f[x/s] = s \cofactor f {\ass x 1} \lor \overline s \cofactor f {\ass x 0}$ be the function that replaces $x$ by $s$.
For example, if 
$
    f = y \lor xz,
$
then $\cofactor f {\ass x 1} = y \lor z$ and $\cofactor f {\ass x 0} = y$. Moreover, for $s = x_1 x_2$, we have 
$$
    f[x/s] = s (y \lor z) \lor \overline s y = y \lor s z = y \lor x_1 x_2 z.
$$
For $\mathbin{\sim} \in \{ \leq, \geq, = \}$, we write $f \sim g$ if $f(\uf) \sim g(\uf)$ is true for all assignments.
We collect the variables that $f$ depends on in
the set $\dep(f) = \{ x \in X : \cofactor f {\ass x 1} \neq \cofactor f {\ass x 0} \}$.
If $\vf$ is an assignment with $\dep(f) \subseteq V$, then $f(\vf)$ denotes the only possible value that $\cofactor f \vf$ can take.

We say that $f$ is \emph{monotone in} $x$ if $\cofactor f {\ass x 1} \geq \cofactor f {\ass x 0}$, and call $f$ \emph{monotone} if $f$ is monotone in all of its variables.
Furthermore, $f$ is the \emph{dual} of $g$ if $f(\uf) = \overline g(\overline \uf)$, where $\overline \uf$ is the variable-wise negation of $\uf$.
We call $f$ \emph{symmetric} if $f = \sigma f$ for all permutations $\sigma$ of $X$, where $\sigma f(\uf) = f(\sigma^{-1} \uf)$.

\paragraph{Expectations.} We denote the expectation of $f$ w.r.t. the uniform distribution over $D$ by $\E_{d \in D}[f(d)]$ for $f: D \rightarrow \R$. We only consider 
cases where $D$ is finite, so 
\[
    \E_{d\in D}[f(d)]= \frac 1 {|D|} \sum_{d\in D} f(d).
\]
If the domain of $f$ is clear, we simply write $\E[f]$.
For $f\in\boolfuncs$, $\E[f]$ is the fraction of satisfying assignments of $f$. %

\begin{toappendix}

Since most of the ``heavy lifting'' is done here in the appendix, we need to introduce some more notation in order to keep everything as simple as possible.

\begin{itemize}
    \item We consider pseudo Boolean functions, which take $\{0,1\}^X$ to $\R$ and can be composed using $+,-$, etc. Cofactors, the set of dependent variables and comparison operations are defined as for Boolean functions.
    \item If two assignments $\uf,\vf$ are defined over the same domain, we can compose them variable-wise using operations such as $\land$, $\lor$, and $\oplus$. I.e., $(\uf \oplus \vf)(x) = \uf(x) \oplus \vf(x)$, etc.
    \item For $S\subseteq X$, denote the corresponding indicator function by $\idf{S}$, with $\idf{S}(x) = 1$ iff $x \in S$. The domain of $\idf{S}$ depends on the context, but contains at least $S$. Write $\idf{x} = \idf{ \{ x \} }$.
    \item Instead of substituting only one variable, we can use substitution chains of the form $f[x_i/s_i : i \in N]$ or similar if the order of replacement does not matter. 
\end{itemize}
Note that:
\begin{itemize}
    \item For any permutation $\sigma$ of $X$, the variable-wise interpretations of $\lor,\land,$ etc. imply that $\sigma$ distributes over these operations when applied to functions or assignments, e.g. $\sigma(\uf \land \vf) = \sigma \uf \land \sigma \vf$ or $\sigma(\overline \uf) = \overline {\sigma \uf }$ or $\sigma(\uf; \vf) = \sigma \uf; \sigma \vf$ or $\sigma( \overline g) = \overline{ \sigma g}$. The same holds for sets $A,B \subseteq X$ with e.g. $\sigma(A \setminus B) = \sigma(A) \setminus \sigma(B)$. 
    \item The identity $\switch{S}{\uf} = \switchmain{S}{\uf}$ applies, which makes it obvious that the flipping-operation can be written as a composition of assignments. For more basic properties, we refer to \Cref{thr:perm-props}.
\end{itemize}

\begin{apxlemma}%
    \label{thr:perm-props}
    Let $\sigma$ be a permutation of $X$, $h$ be a Boolean function and $T \subseteq S \subseteq X$ be arbitrary subsets. Then
    \begin{align}
        &\{0,1\}^S \rightarrow \{0,1\}^{\sigma(S)}, \uf \mapsto \sigma (\switch{T}{\uf})
            \text{ is a bijection.} 
            \label{eq:perm-prop-2} \\
        &(\sigma \uf)_{\sigma(T)} = \sigma (\uf_T)
            && \forall \uf \in \{0,1\}^S 
            \label{eq:perm-prop-6} \\
        &\switch{\sigma(T)}{\sigma \uf} = \sigma(\switch{T}{\uf})
            && \forall \uf \in \{0,1\}^S 
            \label{eq:perm-prop-10} \\ 
        &\cofactor {(\sigma h)} {\sigma \uf} = \sigma (\cofactor h \uf)
            && \forall \uf \in \{0,1\}^S 
            \label{eq:perm-prop-4} %
    \end{align}
\end{apxlemma}
\begin{proof}
    \hfill 
    \begin{enumerate}[leftmargin=1cm]
        
        \item[\eqref{eq:perm-prop-2}] This mapping can be inverted by taking $\{0,1\}^{\sigma(S)}$ back to $\{0,1\}^S$ with $\vf \mapsto \switch{T}{(\sigma^{-1}\vf)}$. Thus, it is a bijection.
        
        \item[\eqref{eq:perm-prop-6}] If $x \in \sigma(T)$ and $y = \sigma^{-1}(x)$, then 
        \begin{align*}
            (\sigma \uf)_{\sigma(T)}(x) = \sigma \uf(x) = \uf(y).
        \end{align*}
        Since $y \in T$ by construction,
        \begin{align*}
            \uf(y) = \uf_T(y) = \uf_T(\sigma^{-1}(x)) = \sigma(\uf_T)(x).
        \end{align*}
        Furthermore, if $x \not\in \sigma(T)$, then $(\sigma \uf)_{\sigma(T)}$ is not defined on $x$ and neither is $\sigma(\uf_T)$ due to $y \not\in T$. Therefore, both assignments must be equal.

        \item[\eqref{eq:perm-prop-10}] Follows from \eqref{eq:perm-prop-6},
        \begin{align*}
            \switch{\sigma(T)}{\sigma \uf} 
            &= (\sigma \uf)_{X \setminus \sigma(T)}; \overline{\sigma\uf}_{\sigma(T)} \\
            &= \sigma(\uf_{X \setminus T}); \sigma(\overline\uf_T) \\
            &= \sigma(\uf_{X \setminus T}; \overline\uf_T) \\
            &= \sigma(\switch{T}{\uf}).
        \end{align*}

        \item[\eqref{eq:perm-prop-4}] For all $\vf \in \{0,1\}^X$,
        \begin{align*}
            \cofactor {(\sigma h)} {\sigma \uf}(\vf) 
            &= \sigma h (\sigma \uf; \vf_{X \setminus \sigma(S)}) 
                & \text{definition} \\
            &= \sigma h (\sigma (\uf; \sigma^{-1}(\vf_{X \setminus \sigma(S)}))) 
                & \vf_{X \setminus \sigma(S)} = \sigma (\sigma^{-1} (\vf_{X \setminus \sigma(S)})) \\
            &= h(\uf; \sigma^{-1}(\vf_{X \setminus \sigma(S)})) 
                & \text{resolving $\sigma$} \\
            &= h(\uf; (\sigma^{-1}\vf)_{X \setminus S}) 
                & \eqref{eq:perm-prop-6} \\
            &= \cofactor h \uf(\sigma^{-1}\vf) 
                & \text{definition} \\
            &= \sigma(\cofactor h \uf)(\vf).
        \end{align*}

    \end{enumerate}
\end{proof}

There are a few properties of the derivative operator which are useful to point out. First, taking the derivative with respect to a function that does not depend on $x$ results in the constant zero function \eqref{eq:diff-op-prop-2}. Second, it is multiplicative in the sense that functions which do not depend on $x$ can be moved in front of the operator \eqref{eq:diff-op-prop-3}. And third, it is additive in the sense of \eqref{eq:diff-op-prop-5}:

\begin{apxlemma}%
\label{thr:propertiesDerivativeOperator}
    Let $f, h, g$ be Boolean functions such that $x \not\in \dep(g)$. Then
    \begin{align}
        &\Adiff_x g = 0 
            \label{eq:diff-op-prop-2} \\ 
        &\Adiff_x (g f) = g \Adiff_x f
            \label{eq:diff-op-prop-3} \\ 
        &\Adiff_x (f \oplus h) = \Adiff_x f \oplus \Adiff_x h 
            \label{eq:diff-op-prop-5}.
    \end{align}
\end{apxlemma}
\begin{proof}
    \eqref{eq:diff-op-prop-2} follows from the fact that $\cofactor g {\ass x 1} = \cofactor g {\ass x 0}$. For \eqref{eq:diff-op-prop-3}, we get $\cofactor {(g f)} {\ass x 1} \oplus \cofactor{(g f)} {\ass x 0} = g (\cofactor f {\ass x 1} \oplus \cofactor f {\ass x 0})$ since $\cofactor g {\ass x 1} = \cofactor g {\ass x 0} = g$. Finally, for \eqref{eq:diff-op-prop-5}, note that 
    $ %
        \cofactor {(f \oplus h)} {\ass x 1} \oplus \cofactor {(f \oplus h)} {\ass x 0} = (\cofactor f {\ass x 1} \oplus \cofactor f {\ass x 0}) \oplus (\cofactor h {\ass x 1} \oplus \cofactor h {\ass x 0}).
    $ %
\end{proof}

Expectations are frequently used throughout this paper, so let us establish a few simple equalities. First, one can compute the expected value of a pseudo Boolean function with respect to every superset of its dependent variables. Second, if two pseudo Boolean functions depend on mutually exclusive variables (dependence defined as for Boolean functions), then the expectation of their product can be decomposed. And third, expected values of modular Boolean functions admit decompositions as well:

\begin{apxlemma}%
    \label{thr:decompositionOfExpectations}
    \hfill
    \begin{enumerate}
        \item[\namedlabel{prop:expectations-1}{\textnormal{(i)}}] If $f$ is a pseudo Boolean function and $\dep(f) \subseteq R$, then $\E_{\uf \in \{0,1\}^X}[f(\uf)] = \E_{\uf \in \{0,1\}^R}[f(\uf)]$.
        \item[\namedlabel{prop:expectations-2}{\textnormal{(ii)}}] If $f,g$ are pseudo Boolean functions such that $\dep(f) \cap \dep(g) = \varnothing$, then $\E[ f \cdot g ] = \E[f ] \cdot \E[g]$.
        \item[\namedlabel{prop:expectations-3}{\textnormal{(iii)}}] If $f,g$ are Boolean functions such that $f$ is modular in $g$, then 
        $
            \E[f] = \E[g] \E[\cofactor f {\ass g 1}] + (1-\E[g]) \E[\cofactor f {\ass g 0}].
        $ \label{thr:decompositionOfExpectations-3}
    \end{enumerate}
\end{apxlemma}
\begin{proof}
    \hfill
    \begin{enumerate}[leftmargin=1cm]
        \item[\ref{prop:expectations-1}]
        \begin{align*}
            \E_{\uf \in \{0,1\}^X}[f(\uf)] 
            &= \E_{\vf \in \{0,1\}^{X \setminus R}}[ \E_{\uf \in \{0,1\}^R}[ f(\vf;\uf) ] ]
                && \text{uniform distribution} \\
            &= \E_{\vf \in \{0,1\}^{X \setminus R}}[ \E_{\uf \in \{0,1\}^R}[ f(\uf) ] ]
                && \dep(f) \cap (X \setminus R) = \varnothing \\
            &= \E_{\uf \in \{0,1\}^R}[f(\uf)].
                && \E[\text{const.}] = \text{const.}
        \end{align*}
        \item[\ref{prop:expectations-2}]
        \begin{align*}
            \E[f \cdot g] 
            &= \E_{\uf \in \{0,1\}^{\dep(f)}} [\E_{\vf \in \{0,1\}^{X \setminus \dep(f)}}[ f(\uf;\vf) \cdot g(\uf;\vf) ]] 
                && \text{uniform distribution} \\
            &= \E_{\uf \in \{0,1\}^{\dep(f)}} [\E_{\vf \in \{0,1\}^{X \setminus \dep(f)}}[ f(\uf) \cdot g(\vf) ]] 
                && \dep(f) \cap \dep(g) = \varnothing \text{ and } \dep(g) \subseteq X \setminus \dep(f)\\
            &= \E_{\uf \in \{0,1\}^{\dep(f)}}[ f(\uf)] \cdot \E_{\vf \in \{0,1\}^{X \setminus \dep(f)}}[ g(\vf) ] 
                \\
            &= \E_{\uf \in \{0,1\}^{X}}[ f(\uf)] \cdot \E_{\vf \in \{0,1\}^{X}}[ g(\vf) ] 
                && \text{using \ref{prop:expectations-1}} \\
            &= \E[ f] \cdot \E[ g ].
        \end{align*}
        \item[\ref{prop:expectations-3}]
        \begin{align*}
            \E[f] 
            &= \E[g \cofactor f {\ass g 1} + (1-g) \cofactor f {\ass g 0}] 
                && g \cofactor f {\ass g 1} \lor \overline g \cofactor f {\ass g 0} = g \cofactor f {\ass g 1} + (1-g) \cofactor f {\ass g 0} \\
            &= \E[g \cofactor f {\ass g 1}] + \E[(1-g) \cofactor f {\ass g 0}] 
                && \text{linearity of $\E$} \\
            &= \E[g] \E[\cofactor f {\ass g 1}] + (1-\E[g]) \E[\cofactor f {\ass g 0}].
                && \text{independence, using \ref{prop:expectations-2}}
        \end{align*}
    \end{enumerate}
\end{proof}
\end{toappendix}

\begin{toappendix}
    \subsubsection*{Modular Decompositions} 
\end{toappendix}
\paragraph{Modular decompositions.}
We introduce a notion of \emph{modularity} to capture independence of subfunctions
as common in the theory of Boolean functions and related 
fields~\cite{ashenhurstDecompositionSwitchingFunctions1957,birnbaumModulesCoherentBinary1965,shapleyCompoundSimpleGames1967,biochDecompositionBooleanFunctions2010}. %
Intuitively, $f$ is modular in $g$ if $f$ treats $g$ like a subfunction and otherwise ignores all variables that $g$ depends on. 
We define modularity in terms of a \emph{template function} $\ell$
in which $g$ is represented by a variable $x$:

\begin{definition} %
  Let $f,g \in \boolfuncs$. We call $f$ \emph{modular in $g$} if $g$ is not constant and 
  there is $\ell \in \boolfuncs$ and $x \in X$ such that $\dep(\ell) \cap \dep(g) = \varnothing$
  and $f = \ell[x/g]$. 
  If $\ell$ is monotone in $x$, then $f$ is \emph{monotonically modular in $g$}.
\end{definition}

\begin{toappendix}
    The following proposition shows that the cofactors of a modular Boolean function are indeed uniquely determined: 
    \begin{apxproposition} 
        \label{thr:modularUniquenessOfCofactors}
        Suppose $f$ is modular in $g$ and let $s, t \in \boolfuncs$ and $x,y \in X$ such that $f = s[y/g] = t[x/g]$ with $\dep(g) \cap (\dep(s) \cup \dep(t)) = \varnothing$. Then $\cofactor s {\ass y 1} = \cofactor t {\ass x 1}$ and $\cofactor s {\ass y 0} = \cofactor t {\ass x 0}$.
    \end{apxproposition}
    \begin{proof}
        Recall that
        \begin{align*}
            f 
              &= g \cofactor s {\ass y 1} \lor \overline g \cofactor s {\ass y 0} \\
              &= g \cofactor t {\ass x 1} \lor \overline g \cofactor t {\ass x 0}.
        \end{align*}
        
        Since $g$ is not constant,
        pick $\uf_1,\uf_0 \in \{0,1\}^{\dep(g)}$ such that $\cofactor g {\uf_1} = 1$ and $\cofactor g {\uf_0} = 0$. Then, due to the fact that the cofactors $\cofactor s {\ass y 1}, \cofactor s {\ass y 0}, \cofactor t {\ass x 1}$ and $\cofactor t {\ass x 0}$ do not depend on any variable of $g$,
        \begin{align*}
            &\cofactor s {\ass y 1} = \cofactor f {\uf_1} = \cofactor t {\ass x 1}, \\
            &\cofactor s {\ass y 0} = \cofactor f {\uf_0} = \cofactor t {\ass x 0}.
        \end{align*}
    \end{proof}
\end{toappendix}

\noindent If $f$ is modular in $g$ with $\ell$ and $x$ as above, then
    $f(\uf) = \ell(\wf)$,
where $\wf$ is defined for $y \in X$ as
\begin{center}
    $\wf(y) = \begin{cases} g(\uf) & \text{ if } y = x, \text{ and } \\ \uf(y) & \text{otherwise}. \end{cases}$
\end{center} 
Thus, the value computed by $g$ is assigned to $x$ and then used by $\ell$, which 
otherwise is not influenced by the variables that $g$ depends on.
For example, $f = x_1 \lor z_1 z_2 x_2$ is modular in $g = z_1 z_2$ since $f$ can be obtained by
replacing $x$ in $\ell = x_1 \lor x x_2$ by $g$. 
Note that $\dep(\ell) = \{x,x_1,x_2\}$ and $\dep(g) = \{z_1,z_2\}$ are disjoint.
This property is crucial, since it ensures $f$ and $g$ are coupled
through variable $x$ only. 

If $f$ is modular in $g$, then the cofactors $\cofactor \ell {\ass x 1}$ and $\cofactor \ell {\ass x 0}$ must be unique since $g$ is not constant.
\ifx\arxiv\undefined
\else
(See \Cref{thr:modularUniquenessOfCofactors} in the appendix.)
\fi
Hence, we can define the \emph{cofactors of $f$ w.r.t. $g$} as $\cofactor f {\ass g 1} = \cofactor \ell {\ass x 1}$ and $\cofactor f {\ass g 0} = \cofactor \ell {\ass x 0}$. The instantiation 
is reversed by setting $f[g/x] = x \cofactor f {\ass g 1} \lor \overline x \cofactor f {\ass g 0}$.

\begin{toappendix}
    \subsubsection*{Boolean Derivatives} 
\end{toappendix}
\paragraph{Boolean derivatives.}
We frequently rely on the \emph{derivative of a Boolean function $f$ w.r.t. variable $x$,} 
$$\Adiff_x f = \cofactor f {\ass x 1} \oplus \cofactor f {\ass x 0},$$ 
which encodes the undirected change of $f$ w.r.t. $x$. 
For example, $f = x \lor y$ has the derivative $\Adiff_x f = \overline y$, with the intuition that $x$ can only have an impact if $y$ is set to zero.
Furthermore, if $f$ is modular in $g$, we define the \emph{derivative of $f$ w.r.t. $g$} as $\Adiff_g f = \cofactor f {\ass g 1} \oplus \cofactor f {\ass g 0}$. 
Given this, we obtain the following lemma corresponding to the chain rule known in calculus:

\begin{toappendix}
    The following property mirrors the chain rule as known in calculus. It follows from some basic rules:
\end{toappendix}

\begin{apxlemmarep}
    \label{thr:derivativeDecomposition}
    Let $f$ be modular in $g$ and $x \in \dep(g)$. Then
    \begin{center} $\Adiff_x f = (\Adiff_x g) (\Adiff_g f).$ \end{center}
\end{apxlemmarep}
\begin{appendixproof}
    \begin{align*}
        \Adiff_x f 
        &= \Adiff_x( g \cofactor f {\ass g 1} \oplus (1 \oplus g) \cofactor f {\ass g 0}) 
            && g \cofactor f {\ass g 1} \lor \overline g \cofactor f {\ass g 0} = g \cofactor f {\ass g 1} \oplus (1 \oplus g) \cofactor f {\ass g 0} \\
        &= (\Adiff_x g ) \cofactor f {\ass g 1} \oplus (( \Adiff_x 1) \oplus (\Adiff_x g)) \cofactor f {\ass g 0} 
            && \eqref{eq:diff-op-prop-5}, \eqref{eq:diff-op-prop-3} \\
        &= (\Adiff_x g) (\cofactor f {\ass g 1} \oplus \cofactor f {\ass g 0}) 
            && \eqref{eq:diff-op-prop-2} \\
        &= (\Adiff_x g) (\Adiff_g f).
    \end{align*}
\end{appendixproof}

\begin{toappendix}
\end{toappendix}

\section{Importance Value Functions} \label{sec:ivfs_general}

In this section, we devise axiomatic properties %
that should be fulfilled by 
every \emph{reasonable} importance attribution scheme. %

For a Boolean function $f$ and a variable $x$, we quantify the importance 
of $x$ in $f$ by a number $\Imp_x(f) \in \R$, computed by some \emph{value function} $\Imp$.
Not every value makes intutive sense when interpreted as the ``importance''
of $x$, so we need to pose certain restrictions on $\Imp$.

We argue that $\Imp$ should be bounded, with $1$ marking the highest and $0$ the lowest importance; that functions which are independent of a variable should rate these variables the lowest importance (e.g., $\Imp_x(f) = 0$ if $f = y \lor z$); that functions which depend on one variable only should rate these variables the highest importance (e.g., $\Imp_x(f)=1$ for $f=x$); that neither variable names nor polarities should play a role in determining their importance (e.g.,  $\Imp_x(x \overline z) = \Imp_z(x \overline z)$, cf. ~\cite{slepianNumberSymmetryTypes1953,golombClassificationBooleanFunctions1959}):

\begin{definition}[IVF]\label{def:importanceValueFunction}
    A \emph{value function} is a mapping of the form $\Imp\colon X{\times}\boolfuncs \rightarrow \R$ 
    with $(x,f) \mapsto \Imp_x(f)$.
    An \emph{importance value function (IVF)} is a value function $\Imp$ where for all $x,y\in X$, permutations 
    $\sigma\colon X\rightarrow X$, and $f,g,h \in\Bool(X)$:
    \begin{enumerate}[leftmargin=1.8cm]
        \item[(\namedlabel{prop:B}{\makeprop{Bound}})] $0 \leq \Imp_x(f) \leq 1$.
        \item[(\namedlabel{prop:DUM}{\makeprop{Dum}})] $\Imp_x(f) = 0$ if $x\not\in\dep(f)$.
        \item[(\namedlabel{prop:DIC}{\makeprop{Dic}})] $\Imp_x(x) = \Imp_x(\overline x) = 1$.
        \item[(\namedlabel{prop:TI}{\makeprop{Type}})]
          (i) $\Imp_x(f) = \Imp_{\sigma(x)}(\sigma f)$ and  \\
          (ii) $\Imp_x(f) = \Imp_x(\switchf{y}{f})$.
        \item[(\namedlabel{prop:SUD}{\makeprop{ModEC}})] $\Imp_x(f) \geq \Imp_x(h)$ if \\
        (i) $f$ and $h$ are monotonically modular in $g$, \\ 
        (ii) $\cofactor f {\ass g 1} \geq \cofactor h {\ass g 1}$ and $\cofactor h {\ass g 0} \geq \cofactor f {\ass g 0}$, and \\
        (iii) $x \in \dep(g)$.
    \end{enumerate}
\end{definition}
\ref{prop:B}, \ref{prop:DUM} for ``dummy'', \ref{prop:DIC} for ``dictator'' and \ref{prop:TI} for ``type invariance'' were discussed above.
\ref{prop:SUD} (for ``modular encapsulation consistency'') is the only property that allows 
the inference of non-trivial importance inequalities in different functions.
Let us explain its intuition.
We say that \emph{$f$ encapsulates $h$ on $g$} if these functions satisfy (i) and (ii) from \ref{prop:SUD}. 
Intuitively, together with (i), condition (ii) states that \emph{if one can control the output of $g$, it is both easier to satisfy $f$ than $h$} (using $\cofactor f {\ass g 1} \geq \cofactor h {\ass g 1}$) \emph{and to falsify $f$ than $h$} (using $\cofactor h {\ass g 0} \geq \cofactor f {\ass g 0}$).
We argue in \ref{prop:SUD} that if $f$ encapsulates $h$ on $g$, then $g$'s impact on $f$ is higher
than on $h$, and thus, the importance of variables in $\dep(g)$ (cf. (iii)) should be also higher w.r.t. $f$ than w.r.t. $h$.

\begin{toappendix}
  The Winder preorder according to \cite{hammerEvaluationStrengthRelevance2000} (Section 3 therein) is defined for a Boolean function $f$ and variables $x,y$ as 
  \[ 
    x \geqslant_{f} y \tiff \cofactor f {\ass x 1;\ass y 0} \geq \cofactor f {\ass x 0; \ass y 1}.
  \]
  Similar to \cite{hammerEvaluationStrengthRelevance2000} (Theorem 3.1 therein), we focus only on the case were $f$ is monotone in $x$ and $y$. \ref{prop:SUD} can then be seen as a generalization of this order in two senses: first, we abstract this order to the level of modules. Second, we abstract from an inequality between variables in one function to an inequality between variables in two functions. The following proposition now links the Winder preorder with \ref{prop:SUD} and \ref{prop:TI}:
  \begin{apxproposition}
  \label{prop:winder}
    Suppose $\Imp$ is an importance value function and let $f$ be a Boolean function monotone in $x$ and $y$. Then: 
    \[ 
      x \geqslant_f y \implies \Imp_x(f) \geq \Imp_y(f).
    \]
  \end{apxproposition}
  \begin{proof}
    Suppose that $x \geqslant_f y$. Define $h = \sigma f$, where $\sigma$ is the permutation of $X$ that swaps $x$ and $y$. Since $\cofactor f {\ass x c; \ass y d}$ does not depend on $x$ nor $y$, note that $\cofactor f {\ass x c; \ass y d} = \sigma(\cofactor f {\ass x c; \ass y d})$, which in turn implies $\cofactor f {\ass x c; \ass y d} = \cofactor h {\ass x d; \ass y c}$ according to \Cref{thr:perm-props}. Thus, we can derive $\cofactor f {\ass x 1} \geq \cofactor h {\ass x 1}$ and $\cofactor h {\ass x 0} \geq \cofactor f {\ass x 0}$ since
    \begin{align*}
      \cofactor f {\ass x 1} \geq \cofactor h {\ass x 1} \Longleftrightarrow \begin{cases}
        \cofactor f {\ass x 1; \ass y 1} \geq \cofactor h {\ass x 1; \ass y 1} \\
        \cofactor f {\ass x 1; \ass y 0} \geq \cofactor h {\ass x 1; \ass y 0} 
      \end{cases}
      \quad\text{and}\quad
      \cofactor h {\ass x 0} \geq \cofactor f {\ass x 0} \Longleftrightarrow \begin{cases}
        \cofactor h {\ass x 0; \ass y 0} \geq \cofactor f {\ass x 0; \ass y 0} \\
        \cofactor h {\ass x 0; \ass y 1} \geq \cofactor f {\ass x 0; \ass y 1}.
      \end{cases}
    \end{align*} 
    Since both $f$ and $h$ are monotone in $x$, apply \ref{prop:SUD} to derive $\Imp_x(f) \geq \Imp_x(h)$. Using \ref{prop:TI}, conclude $\Imp_x(h) = \Imp_y(f)$.
  \end{proof}
\end{toappendix}

\begin{example*} 
    Let $f = x_1 x_2 \lor x_3 x_4 x_5$, $h = x_3 x_4 \lor x_1 x_2 x_5$, and $\Imp$ be an IVF.
    Then $f$ encapsulates $h$ on $g = x_1 x_2$, since
      \begin{center}$
    \underbrace{1}_{\cofactor f {\ass g 1}}
    \ \geq\ \underbrace{x_3 x_4 \lor x_5}_{\cofactor h {\ass g 1}}
    \ \geq\ \underbrace{x_3 x_4}_{\cofactor h {\ass g 0}}
    \ \geq\ \underbrace{x_3 x_4 x_5}_{\cofactor f {\ass g 0}}.$\end{center}
    We then get $\Imp_{x_1}(f) \geq \Imp_{x_1}(h)$ 
    by application of \ref{prop:SUD}. Swapping $x_1$ with $x_3$ and $x_2$ with $x_4$, 
    we obtain a permutation $\sigma$ such that $h = \sigma f$. By \ref{prop:TI}, 
    we derive $\Imp_{x_1}(h) = \Imp_{x_3}(f)$. Using \ref{prop:TI} on the other variables yields
    \begin{center} %
    $\Imp_{x_1}(f) = \Imp_{x_2}(f) \geq \Imp_{x_3}(f) = \Imp_{x_4}(f) = \Imp_{x_5}(f).$\end{center} %
\end{example*}

Together with \ref{prop:TI}, \ref{prop:SUD} implies the \emph{Winder preorder}, which is similar in spirit
(see~\cite{hammerEvaluationStrengthRelevance2000}). However, \ref{prop:SUD} generalizes 
to modular decompositions and allows inferring importance inequalities w.r.t. to different 
functions.
\ifx\arxiv\undefined
\else
(See~\Cref{prop:winder} in the appendix). 
\fi

\paragraph{Biased and unbiased.}
We say that an IVF is \emph{unbiased} if $\Imp_x(g) = \Imp_x(\overline g)$
holds for all Boolean functions $g$ and variables $x$.
That is, unbiased IVFs measure the impact of variables without any preference for one particular
function outcome, while biased ones quantify the impact to enforce a function to return one or
zero. Biased IVFs can, e.g., be useful when the task is to assign responsibility values for the 
violation of a specification.

\subsection{Further Properties}\label{sec:further_properties}
We defined IVFs following a conservative approach, collecting minimal requirements on IVFs.
Further additional properties can improve on the 
predictability and robustness of IVFs.
\begin{definition}
  A value function $\Imp$ is called 
   \label{def:optionalProperties}
    \begin{itemize}[noitemsep] %
        \item \emph{rank preserving}, if for all $f,g \in \boolfuncs$ such that $f$ is modular 
        	in $g$ and $x,y\in \dep(g)$:
        \begin{center}$\Imp_x(g) \geq \Imp_y(g) \ \implies \ \Imp_x(f) \geq \Imp_y(f), $\end{center}
        \item \emph{chain-rule decomposable}, if for all $f,g \in \boolfuncs$ such that $f$ is 
        	modular in $g$ and $x \in \dep(g)$:
          \begin{center}$\Imp_x(f) \ = \ \Imp_x(g) \Imp_g(f),$\end{center}
        where $\Imp_g(f) = \Imp_{x_g}(f[g/x_g])$ for some $x_g \not\in \dep(f)$,
        \item and \emph{derivative dependent}, if for all $f,g \in \boolfuncs$, $x \in X$:
        \begin{center}$\Adiff_x f \geq \Adiff_x g \ \implies \ \Imp_x(f) \geq \Imp_x(g).$\end{center}
    \end{itemize}
    We also consider \emph{weak} variants of \emph{rank preserving} and \emph{chain-rule decomposable} where $f$ ranges only over functions that are
    \emph{monotonically} modular in $g$.
\end{definition}
\paragraph{Rank preservation.}
Rank preservation states that the relation between two variables should not change if the function 
is embedded somewhere else. 
This can be desired, e.g., during a modeling process in which distinct Boolean functions are composed or 
fresh variables added, where rank preserving IVFs maintain the relative importance order of variables.
We see this as a useful but optional 
property of IVFs since an embedding could 
change some parameters of a function that might be relevant for the relationship of both variables. 
For example, if $f = g z$ with $z \not\in \dep(g)$, then %
the relative number of satisfying assignments is halved compared to $g$. If $x$ is more important 
than $y$ in $g$ but highly relies on $g$ taking value one, it might be that this relationship 
is reversed for $f$ (cf. example given in \Cref{ssec:instances_blame}).

\paragraph{Chain-rule decomposability.}
If an IVF is chain-rule decomposable, then the importance of a variable in a module is the product of (i) its importance w.r.t. the module and (ii) the importance of the module w.r.t. the function. 
Many values studied in this paper
satisfy this property (\Cref{sec:instances}). 

\begin{example*}
Let $f = x_1 \oplus \cdots \oplus x_m$, and let $\Imp$ be a chain-rule decomposable IVFs with $\Imp_x(x\oplus y)=\alpha$. Since $f$ is modular in $g = x_1 \oplus \cdots \oplus x_{m-1}$, and $g$ modular in $x_1 \oplus \cdots \oplus x_{m-2}$, etc., we can apply the chain-rule property iteratively to get 
\begin{center}
$ 
  \Imp_{x_1}(f) 
  = \Imp_{x_1}(g)\Imp_g(f)
  = \Imp_{x_1}(g) \alpha
  = \dots = \alpha^{m-1},
$
\end{center}
where we use \ref{prop:TI} to derive $\Imp_g(f) = \Imp_{x_g}(x_g \oplus x_m) = \alpha$.
\end{example*}

\paragraph{Derivative dependence.}
Derivative dependence %
states that an IVF should quantify the \emph{change} 
a variable induces on a Boolean function. It can be used to derive, e.g., the inequality 
$\Imp_{x_1}(x_1 \oplus x_2 x_3) \geq \Imp_{x_1}(x_2 \oplus x_1 x_3)$, which is not possible solely using 
\ref{prop:SUD} since $x_1 \oplus x_2 x_3$ is neither monotone in $x_1$ nor in $x_2$. If a value function $\Imp$ 
(that is not necessarily an IVF) is derivative dependent, then this has some interesting implications.
First, $\Imp$ is unbiased and satisfies \ref{prop:SUD}. 
Second, if $\Imp$ is weakly chain-rule decomposable (weakly rank preserving), 
then it is also chain-rule decomposable (rank preserving). Finally, if $\Imp$ 
satisfies \ref{prop:DIC} and \ref{prop:DUM}, then it is also bounded by zero and one. 
As a consequence, if $\Imp$ is derivative dependent and satisfies \ref{prop:DIC}, 
\ref{prop:DUM}, and \ref{prop:TI}, then $\Imp$ is an IVF.
\ifx\arxiv\undefined
\else
(See \Cref{thr:implicationsOfDD} in the appendix.)
\fi

\begin{toappendix}
  The following proposition formalizes and proves claims that were made in \Cref{sec:ivfs_general} with respect to various implications of derivative dependence:
  \begin{apxproposition}
    \label{thr:implicationsOfDD}
    Suppose $\Imp$ is a derivative-dependent value function. Then,
    \begin{enumerate}
        \item It is unbiased.
        \item It satisfies \ref{prop:SUD}.
        \item If $\Imp$ is weakly chain rule decomposable, it is chain rule decomposable.
        \item If $\Imp$ is weakly rank preserving, then it is rank preserving.
        \item If $\Imp$ satisfies \ref{prop:DIC} and \ref{prop:DUM}, then it satisfies \ref{prop:B}.
    \end{enumerate}
  \end{apxproposition}
  \begin{proof}
      \hfill
      \begin{enumerate}
          \item Follows from the fact that $\Adiff_x g = \Adiff_x \overline g$ for all Boolean functions $g$ and variables $x$: First note $\overline g = 1 \oplus g$, then apply \eqref{eq:diff-op-prop-5} and \eqref{eq:diff-op-prop-2}.
          
          \item Assume that $f$ encapsulates $h$ on $g$ and $x \in \dep(g)$. We want to show that $\Imp_x(f) \geq \Imp_x(h)$ must follow. First note that due to $\cofactor f {\ass g 1} - \cofactor f {\ass g 0} \geq \cofactor h {\ass g 1} - \cofactor h {\ass g 0} \geq 0$ and $\cofactor f {\ass g 1} - \cofactor f {\ass g 0} = \cofactor f {\ass g 1} \oplus \cofactor f {\ass g 0}$ and $\cofactor h {\ass g 1} - \cofactor h {\ass g 0} = \cofactor h {\ass g 1} \oplus \cofactor h {\ass g 0}$, it is indeed the case that $\Adiff_g f \geq \Adiff_g h$. Furthermore, by \Cref{thr:derivativeDecomposition}, mind that $\Adiff_x f = (\Adiff_x g) (\Adiff_g f)$ and $\Adiff_x h = (\Adiff_x g) (\Adiff_g h)$. Due to $\Adiff_x g \geq 0$, it must also be the case that $\Adiff_x f \geq \Adiff_x h$. Application of derivative dependence now shows $\Imp_x(f) \geq \Imp_x(h)$.
          
          \item Suppose that $\Imp$ is weakly chain rule decomposable, that $f$ is modular in $g$ and $x \in \dep(g)$. We want to show $\Imp_x(f) = \Imp_x(g) \Imp_g(f)$. Let us define $h = g  ( \cofactor f {\ass g 1} \lor \cofactor f {\ass g 0} ) \lor \overline g \cofactor f {\ass g 1} \cofactor f {\ass g 0}$. It is easy to see that $h$ is monotonically modular in $g$ and that $\Adiff_g h = \Adiff_g f$. Furthermore, by \Cref{thr:derivativeDecomposition}, note that $\Adiff_x f = (\Adiff_x g) (\Adiff_g f) = (\Adiff_x g) (\Adiff_g h) = \Adiff_x h$. Therefore, since $\Imp$ is derivative dependent and weakly chain rule decomposable,
          \[ \Imp_x(f) = \Imp_x(h) = \Imp_x (g)\Imp_g(h) =  \Imp_x(g) \Imp_g(f). \] 
          
          \item Suppose that $\Imp$ is weakly rank preserving, that $f$ is modular in $g$ and $x,y \in \dep(g)$ such that $\Imp_x(g) \geq \Imp_y(g)$. We want to show $\Imp_x(f) \geq \Imp_y(f)$. If we define $h$ as in 3., we can derive $\Imp_x(f)= \Imp_x(h)$ and $\Imp_y(f) = \Imp_y(h)$ due to $\Imp$'s derivative dependence. Furthermore, note that $\Imp_x(h) \geq \Imp_y(h)$ must hold by weak rank preservation. So $\Imp_x(f) \geq \Imp_y(f)$ holds as well.
          
          \item Suppose $f$ is an arbitrary Boolean function and $x$ a variable. We want to show $0 \leq \Imp_x(f) \leq 1$. Mind that $\Adiff_x f \leq 1 = \Adiff_x x$, so $\Imp_x(f) \leq \Imp_x(x) = 1$ due to \ref{prop:DIC} and $\Imp$'s derivative dependence. Furthermore, if $g$ is an arbitrary Boolean function with $x\not\in\dep(g)$, then $\Adiff_x g = 0 \leq \Adiff_x f$, which allows the application of \ref{prop:DUM} and derivative dependence to conclude $0 = \Imp_x(g) \leq \Imp_x(f)$.
      \end{enumerate}
  \end{proof}
\end{toappendix}

\subsection{Induced Relations}
In this section, we will establish foundational relations between IVFs.
Recall that $f$ is a \emph{threshold function} if
\begin{center}
  $f(\uf) = 1 \tiff \textstyle\sum_{x \in X} w_x \uf(x) \geq \delta \quad \forall \uf \in \{0,1\}^X,$
\end{center}
where $\{ w_x \}_{x \in X} \subseteq \R$ is a set of weights and $\delta {\in} \R$ a threshold.

\begin{apxtheoremrep}%
    \label{thr:propertyRelationsOfVFs}
    Let $\Imp$ be an IVF, $f,g,h{\in}\boolfuncs$, $x,y{\in}X$. Then:
    \begin{enumerate}[label=(\arabic*)]
        \item If $f$ is symmetric, then $\Imp_x(f) = \Imp_y(f)$.
        \item If $\Imp$ is unbiased and $f$ is dual to $g$, then $\Imp_x(f) = \Imp_x(g)$.
        \item If $f$ is a threshold function with weights $\{ w_x \}_{ x \in X} \subseteq \R$,
        then $|w_x| \geq |w_y|$ implies
        $ %
          \Imp_x(f) \geq \Imp_y(f).
        $ %
        \item If $f$ is monotonically modular in $g$ and $x \in\dep(g)$, then $\Imp_x(g) \geq \Imp_x(f)$.
        \item If $\Imp$ is derivative dependent and $x\not\in \dep(g)$, then $\Imp_x(h \oplus g) = \Imp_x(h)$.
        \item If $\Imp$ is (weakly) chain-rule decomposable, then it is (weakly) rank preserving.
    \end{enumerate}
\end{apxtheoremrep}
\begin{appendixproof}
    \hfill
    \begin{enumerate}
        \item Define $\sigma$ to be the permutation that swaps $x$ and $y$. Then $\sigma f = f$ by symmetry of $f$, so we get $\Imp_x(f) = \Imp_{\sigma(x)}(\sigma f) = \Imp_y(f)$ due to \ref{prop:TI}.
        
        \item Follows directly from the fact that $f = \switchfset{X}{\overline{g}}{y}$. So $\Imp_x(f) = \Imp_x(\overline g)$ from repeated application of \ref{prop:TI}. Finally, $\Imp_x(\overline g) = \Imp_x(g)$ due to $\Imp$'s unbiasedness.

        \item We first assume that all weights are non-negative and show the general case later. Define $h = \sigma f$, where $\sigma$ is the (self-inverse) permutation that swaps $x$ and $y$. Note that $h(\uf) = 1$ iff 
        \[ 
          \uf(x) w_y + \uf(y) w_x + \sum_{z \in X \setminus \{ x, y \} } \uf(z) w_z \geq \delta.
        \]
        Let us demonstrate that \ref{prop:SUD} is applicable to derive $\Imp_x(f) \geq \Imp_x(h)$: as $w_x \geq 0$ and $w_y \geq 0$, both $f$ and $h$ must be monotone in $x$. Further, due to $w_x \geq w_y$, we also have $\cofactor f {\ass x 1} \geq \cofactor h {\ass x 1}$ and $\cofactor h {\ass x 0} \geq \cofactor f {\ass x 0}$, i.e. $f$ encapsulates $h$ on $x$. So use \ref{prop:SUD} to get $\Imp_x(f) \geq \Imp_x(h)$ and then \ref{prop:TI} to get $\Imp_x(f) \geq \Imp_{\sigma(x)}(\sigma h) = \Imp_{y}(f)$.
        
        Finally, suppose now that $f$ is a threshold function with possibly negative weights. Let $Y = \{ z \in X : w_z < 0 \}$ and define $\ell = \switchfset{Y}{f}{z}$. Then $\ell(\uf) = 1$ iff 
        \begin{align*}
          & &&\sum_{z \in Y} (1-\uf(z)) w_z + \sum_{z \in X \setminus Y} \uf(z) w_z \geq \delta \\
          &\Longleftrightarrow &&\sum_{z \in X} \uf(z) | w_z | \geq \delta - \sum_{z \in Y} w_z.
        \end{align*}
        Thus, $\ell$ is a threshold function with non-negative weights and we can apply \ref{prop:TI} and the same argument as above to get $\Imp_x(f) = \Imp_x(\ell) \geq \Imp_y(\ell) = \Imp_y(f)$.

        \item Note that 
        $ 
          \cofactor g {\ass g 1} = 1 \geq \cofactor f {\ass g 1}
        $ 
        and 
        $ 
          \cofactor f {\ass g 0} \geq 0 = \cofactor g {\ass g 0}.
        $
        Since $f$ is monotonically modular in $g$ and $x \in \dep(g)$ by assumption, we can apply \ref{prop:SUD} and conclude $\Imp_x(g) \geq \Imp_x(f)$.

        \item Assume that $\Imp$ is derivative dependent. Since $\Adiff_x (h \oplus g) = \Adiff_x h \oplus \Adiff_x g = \Adiff_x h$ due to \eqref{eq:diff-op-prop-5} and \eqref{eq:diff-op-prop-2}, note that $\Imp_x(h \oplus g) = \Imp_x(h)$ must hold.

        \item Assume that $\Imp$ is (weakly) chain rule decomposable. We want to show that it is (weakly) rank preserving. Suppose $f$ is (monotonically) modular in $g$ and pick $x,y \in \dep(g)$ such that $\Imp_x(g) \geq \Imp_y(g)$. Since $\Imp$ is (weakly) chain rule decomposable, $\Imp_x(f) = \Imp_x(g) \Imp_g(f)$ and $\Imp_y(f) = \Imp_y(g) \Imp_g(f)$. Now apply the fact that $\Imp_g(f) \geq 0$ to conclude $\Imp_x(f) \geq \Imp_y(f)$. Therefore, $\Imp$ must be (weakly) rank preserving.
       
    \end{enumerate}
\end{appendixproof}
For the case of threshold functions, \Cref{thr:propertyRelationsOfVFs} shows in (3) that any IVF will rank variables according to their absolute weights.
In (4), it is stated that the if a function is monotonically embedded somewhere, 
the importance of variables in that function can only 
decrease, e.g., $\Imp_x(x y) \geq \Imp_x(x y z)$. 
Moreover, in (5), if derivative dependence is satisfied, $\oplus$-parts 
without the variable can be dropped. 
As a consequence, $\Imp_x(f)=1$ whenever $f$ is a parity 
function and $x \in \dep(f)$.

\section{Instances of Importance Value Functions}\label{sec:instances}
In this section, we show that IVFs can be instantiated with
several notions for importance values from the literature and thus provide 
a unifying framework.
\subsection{Blame}\label{ssec:instances_blame}
\begin{toappendix}
    \subsection*{Blame} 
\end{toappendix}
Chockler, Halpern, and Kupferman's (CHK) notions of \emph{responsibility}~\cite{chocklerWhatCausesSystem2008} 
and \emph{blame}~\cite{chocklerResponsibilityBlameStructuralModel2004}
measure the importance of %
$x$ in 
$f$ through the number of variables 
that have to be flipped in an assignment $\uf$ until $x$ becomes \emph{critical}, i.e., ``flipping'' $x$ changes the outcome of $f$ to its complement.
Towards a formalization, let 
\begin{center}
  $\switchmain{S}{\uf}(x) = \begin{cases} 
    \overline \uf(x) & \text{if } x \in S \\
    \uf(x) & \text{otherwise} \end{cases}$
\end{center}
denote the assignment that flips variables in $S$.
We now rely on the following notion of critical set:

\begin{definition}[Critical sets]
    \label{def:smallestCriticalSet}
    A \emph{critical set} of $x\in X$ in $f\in\boolfuncs$ under assignment 
    $\uf$ over $X$ is a set $S \subseteq X{\setminus}\{ x \}$ where
    \begin{center}
      $f(\uf) = f(\switchmain{S}{\uf}) \text{ and } f(\uf) \neq f(\switchmain{S\cup\{x\}}{\uf}).$
    \end{center}
    We define $\scs^\uf_x(f)$ as the size of the smallest critical set, and 
    set $\scs^\uf_x(f) = \infty$ if there is no such critical set.
\end{definition}

\begin{example*}
  The set $S = \{ y \}$ is critical for $x$ in $f = x \lor y$ under $\uf = \ass x 1; \ass y 1$.
  It is also the smallest critical set. On the other hand, there is no critical set if $\uf = \ass x 0; \ass y 1$. 
\end{example*}

The responsibility of $x$ for $f$ under $\uf$ is inversely related to $\scs^\uf_x(f)$. 
Using the following notion of a \emph{share function}, we generalize the original notion of 
responsibility~\cite{chocklerWhatCausesSystem2008}: %

\begin{toappendix}
  Let us introduce a modified version of Definition \ref{def:smallestCriticalSet}, which introduces the dependence on a constant $c \in \{0,1\}$. This is a bit more useful for the coming proofs and statements.

  \begin{definition}
      Let $f$ be a Boolean function, $x$ be a variable, $c \in \{0,1\}$ be a constant and $\uf$ be an assignment over $X$. Define
      \begin{align*}
        \scs^\uf_x(f,c) = \min_{S\subseteq X \setminus \{ x \}} |S| \quad\subjto\quad c = f(\switch{S}{\uf}) \neq f(\switch{x}{\switch{S}{\uf}}).
      \end{align*}
      If there is no solution, set $\scs^\uf_x(f,c) = \infty$.
  \end{definition}

  Note that $\scs^\uf_x(f) = \scs^\uf_x(f, f(\uf))$. We say that a subset $S \subseteq X \setminus \{ x \}$ is a \emph{candidate solution} of $\scs^\uf_x(f,c)$ if it satisfies $c = f(\switch{S}{\uf}) \neq f(\switch{x}{\switch{S}{\uf}})$. We say that $S$ is an \emph{optimal solution} if $|S|$ is minimal as well. When showing an inequality $\scs^\uf_x(f,c) \geq \scs^\vf_z(g,e)$, it suffices to show that every candidate solution of $\scs^\uf_x(f,c)$ translates to a candidate solution of $\scs^\vf_z(g,e)$ of equal cardinality.
\end{toappendix}

\begin{definition}[Share function]
    Call $\rho\colon \N \cup \{\infty\} \rightarrow \R$ a \emph{share function} if
    (i) $\rho$ is monotonically decreasing, 
    (ii) $\rho(\infty) = \lim_{n \rightarrow \infty} \rho(n) = 0$, and 
    (iii) $\rho(0) = 1$.
\end{definition}
\noindent In particular, we consider three instances of share functions:
    \begin{itemize}[noitemsep]
        \item $\rhoexp(k) = \nicefrac{1}{2^{k}}$,
        \item $\rhofrac(k) = \nicefrac{1}{(k{+}1)}$,
        \item $\rhostep(k) = 1$ for $k = 0$ and $\rho(k) = 0$ otherwise.
    \end{itemize} 
Given a share function $\rho$, the \emph{responsibility} of $x$ for $f$ under $\uf$ 
is defined as $\rho(\scs^\uf_x(f))$. Note that $\rhofrac(\scs^\uf_x(f))$ implements the
classical notion of responsibility~\cite{chocklerWhatCausesSystem2008}.
While responsibility corresponds to the size of the smallest critical set 
in a fixed assignment, CHK's \emph{blame}~\cite{chocklerWhatCausesSystem2008} is a global perspective and fits our notion of value function.
It is the expected value of the responsibility (we restrict ourselves to uniform distributions):

\begin{definition}[Blame]
    For a share function $\rho$, we define the \emph{$\rho$-blame} as value function 
    $\Blame^\rho$ %
    where for any $x\in X$, $f\in\boolfuncs$:
    \begin{center}
        $\Blame^\rho_x(f) = \E_{\uf \in \{0,1\}^X}[ \rho( \scs^{\uf}_x(f) ) ].$
    \end{center}
\end{definition}

\begin{example*}
  Let $f = x\lor y$. To compute the importance of $x$ we can count the number of times $\scs^\uf_x(f)=0,1,2,\dots,\infty$ occurs if $\uf$ ranges over the assignments for $\{x,y\}$: $\scs^\uf_x(f)=\infty$ happens once, $\scs^\uf_x(f) = 0$ happens twice, and $\scs^\uf_x(f)=1$ occurs once. Therefore, 
  $$
    \Blame^\rho_x(f) = \nicefrac 1 4 \cdot \rho(\infty) 
    + \nicefrac 1 2 \cdot \rho(0) 
    + \nicefrac 1 4 \cdot \rho(1),
  $$
  which is $\nicefrac 5 8$ for $\rho=\rhoexp$.
\end{example*}

\begin{toappendix}
Furthermore, let us introduce the following value, which measures the minimal number of changes one has to make to an assignment to make it satisfy a function:

\begin{definition}
    Let $f$ be a Boolean function and $\uf$ be an assignment over $X$. Define 
    \[ 
      \shortestDist^\uf(f) = \min_{S \subseteq X} |S| \quad\subjto\quad f(\switch{S}{\uf}) = 1.
    \]
    If there is no solution, set $\shortestDist^\uf(f) = \infty$.
\end{definition}

As for $\scs^\uf_x(f,c)$, we say that $S \subseteq X$ is \emph{a candidate solution} of $\shortestDist^\uf(f)$ if $f(\switch{S}{\uf}) = 1$, and if $|S|$ is minimal as well, we say that $S$ is \emph{an optimal solution}. It is now possible to establish some helpful identities regarding the smallest critical set. More precisely, we can see that it does not distinguish between functions and their complements (equality \eqref{eq:scs-prop-1}), that an assignment $\uf$ and its ``neighbour'' $\switch{x}{\uf}$ share similarities (equality \eqref{eq:scs-prop-2}), and that the smallest critical set does not care about variable names or polarities (equalities \eqref{eq:scs-prop-3} and \eqref{eq:scs-prop-4}). Finally, we observe that there is a useful decomposition of $\scs$ for the monotonically modular case (equality \eqref{eq:scs-prop-5}).

\begin{apxlemma}[Properties of $\scs$.]
    \label{thr:scsProperties}
    Suppose $g$ is a Boolean function, $\uf$ an assignment, $x,z$ variables, $c \in \{0,1\}$ a constant and $\sigma$ a permutation. Then
    \begin{align}
        \scs^\uf_x(g, c) &= \scs^\uf_x(\overline g, \overline c) \label{eq:scs-prop-1} \\
        &= \scs^{\switch{x}{\uf}}_x(g, \overline c) \label{eq:scs-prop-2} \\
        &= \scs^{\switch{z}{\uf}}_x(\switchf{z}{g}, c) \label{eq:scs-prop-3} \\
        &= \scs^{\sigma \uf}_{\sigma(x)}( \sigma g, c) \label{eq:scs-prop-4}.
    \end{align}
    If $f$ is monotonically modular in $g$ and $x \in \dep(g)$, then
    \begin{align}
        &\scs^\uf_x(f, c) = \scs^\uf_x(g, c) + \shortestDist^\uf(\cofactor f {\ass g 1} \overline{\cofactor f {\ass g 0}}). \label{eq:scs-prop-5} 
    \end{align}
\end{apxlemma}
\begin{proof}
    \hfill 
    \begin{enumerate}[leftmargin=1cm]
        \item[\eqref{eq:scs-prop-1}] One can simply check that every candidate solution $T$ of $\scs^\uf_x(g,c)$ is also a candidate solution of $\scs^\uf_x(\overline g, \overline c)$ and vice versa. So both values must be equal. The same line of argument is applicable to \eqref{eq:scs-prop-2} and \eqref{eq:scs-prop-3}. I.e., for all $T \subseteq X \setminus \{ x \}$,
        \begin{align*}
          c = g(\switch{T}{\uf}) \neq g(\switch{x}{\switch{T}{\uf}})
          &\Longleftrightarrow
          \overline{c} = \overline{g}(\switch{T}{\uf}) \neq \overline g(\switch{x}{\switch{T}{\uf}})
            &&\eqref{eq:scs-prop-1} \\
          &\Longleftrightarrow
          \overline c = g(\switch{T}{(\switch{x}{\uf})}) \neq g(\switch{x}{\switch{T}{(\switch{x}{\uf})}})
            &&\eqref{eq:scs-prop-2} \\
          &\Longleftrightarrow
          c = \switchf{z}{g}(\switch{T}{(\switch{z}{\uf})}) \neq \switchf{z}{g}(\switch{x}{\switch{T}{(\switch{z}{\uf})}})
            &&\eqref{eq:scs-prop-3}
        \end{align*}

        \item[\eqref{eq:scs-prop-4}] Note that for all $T \subseteq X$, by \eqref{eq:perm-prop-10},
        \begin{align*}
          & &&x \not\in T,\; c = g(\switch{T}{\uf}) \neq g(\switch{x}{\switch{T}{\uf}}) \\
          &\Longleftrightarrow
          &&\sigma(x) \not\in \sigma(T),\;
          c = \sigma g( \switch{\sigma(T)}{\sigma \uf} ) \neq \sigma g( \switch{\sigma(x)}{\switch{\sigma(T)}{\sigma \uf}}),
        \end{align*}
        Since $|T| = |\sigma(T)|$, the values $\scs^\uf_x(g,c)$ and $\scs^{\sigma \uf}_{\sigma(x)}(\sigma g, c)$ must be equal.

        \item[\eqref{eq:scs-prop-5}] 
        Let us show 
        \[ 
          \scs^\uf_x(f,c) = \scs^\uf_x(g,c) + \shortestDist^\uf(\cofactor f {\ass g 1} \overline{\cofactor f {\ass g 0}})
        \] 
        for $x \in \dep(g)$ by considering both directions ``$\leq$'' and ``$\geq$''. First however, note that due to $f$'s modularity, for all $G \subseteq \dep(g)$ and $F \subseteq X \setminus \dep(g)$,
        \begin{align*}
          f(\switch{G \cup F}{\uf}) &= g(\switch{G}{\uf}) \cofactor f {\ass g 1}(\switch{F}{\uf}) \lor \overline{g}(\switch{G}{\uf}) \cofactor f {\ass g 0}(\switch{F}{\uf}),
            && (*) \\
          f(\switch{x}{\switch{{G \cup F}}{\uf}}) &= g(\switch{x}{\switch{G}{\uf}}) \cofactor f {\ass g 1}(\switch{F}{\uf}) \lor \overline{g}(\switch{x}{\switch{G}{\uf}}) \cofactor f {\ass g 0}(\switch{F}{\uf}).
            && (**)
        \end{align*}
        \begin{itemize}
          \item[$\leq\colon$] Let $G,F \subseteq X$ be optimal solutions to $\scs^\uf_x(g,c)$ and $\shortestDist^\uf(\cofactor f {\ass g 1}\overline{\cofactor f {\ass g 0}})$, respectively. (If one of them does not exist, then the right side is $\infty$ and the inequality holds automatically.) I.e., we assume 
          \begin{align*}
            c = g(\switch{G}{\uf}) \neq g(\switch{x}{\switch{G}{\uf}}) \text{ and } 
            \cofactor f {\ass g 1}(\switch{F}{\uf}) = 1 \text{ and } \cofactor f {\ass g 0}(\switch{F}{\uf}) = 0.
          \end{align*}
          Furthermore, mind that $G \subseteq \dep(g)$ and $F \subseteq \dep(\cofactor f {\ass g 1}) \cup \dep(\cofactor f {\ass g 0}) \subseteq X \setminus \dep(g)$ by minimality of $|G|$ and $|F|$. Set $T = F \cup G$. Using $(*)$ and $(**)$,
          \begin{align*}
            \cofactor f {\ass g 1}(\switch{F}{\uf}) = 1 \text{ and } \cofactor f {\ass g 0}(\switch{F}{\uf}) = 0 
            \implies 
            f(\switch{T}{\uf}) = g(\switch{G}{\uf}) \text{ and } f(\switch{x}{\switch{T}{\uf}}) = g(\switch{x}{\switch{G}{\uf}}).
          \end{align*}
          Thus, 
          \begin{align*}
            c = g(\switch{G}{\uf}) \neq g(\switch{x}{\switch{G}{\uf}}) 
            \implies 
            c = f(\switch{T}{\uf}) \neq f(\switch{x}{\switch{T}{\uf}}),
          \end{align*}
          which shows that $T$ is a candidate solution to $\scs^\uf_x(f,c)$. As $F \cap G = \varnothing$, mind that 
          \[ 
            \scs^\uf_x(f,c) \leq |T| = |G|+|F| = \scs^\uf_x(g,c) + \shortestDist^\uf(\cofactor f {\ass g 1}\overline{\cofactor f {\ass g 0}}).
          \]
          \item[$\geq\colon$] Let $T \subseteq X$ be an optimal solution to $\scs^\uf_x(f,c)$, i.e. assume \[ 
            c = f(\switch{T}{\uf}) \neq f(\switch{x}{\switch{T}{\uf}}).
          \]
          Define $G = T \cap \dep(g)$ and $F = T \setminus \dep(g)$. Due to $G \subseteq \dep(g)$ and $F \subseteq X \setminus \dep(g)$ and $T = F \cup G$, mind that we can apply $(*)$ and $(**)$ to get 
          \begin{align*}
            f(\switch{T}{\uf}) \neq f(\switch{x}{\switch{T}{\uf}}) 
            \implies
            \cofactor f {\ass g 1}(\switch{F}{\uf}) \neq \cofactor f {\ass g 0}(\switch{F}{\uf}).
          \end{align*}
          Since $\cofactor f {\ass g 1} \geq \cofactor f {\ass g 0}$, this can only be the case if $F$ is a candidate solution to $\shortestDist^\uf(\cofactor f {\ass g 1} \overline{\cofactor f {\ass g 0}})$. By $(*)$ and $(**)$,
          \begin{align*}
            \cofactor f {\ass g 1}(\switch{F}{\uf}) = 1 \text{ and } \cofactor f {\ass g 0}(\switch{F}{\uf}) = 0
            \implies 
            f(\switch{T}{\uf}) = g(\switch{G}{\uf}) \text{ and }
            f(\switch{x}{\switch{T}{\uf}}) = g(\switch{x}{\switch{G}{\uf}}). 
          \end{align*}
          But then 
          \[ 
            c = g(\switch{G}{\uf}) \neq g(\switch{x}{\switch{G}{\uf}}),
          \]
          which shows that $G$ is a candidate solution to $\scs^\uf_x(g,c)$. Due to $G \cap F = \varnothing$,
          \[ 
            \scs^\uf_x(f,c) = |T| = |G| + |F| \geq \scs^\uf_x(g,c) + \shortestDist^\uf(\cofactor f {\ass g 1}\overline{\cofactor f {\ass g 0}}).
          \]
        \end{itemize}
        
    \end{enumerate}
\end{proof}

\end{toappendix}

Independent of $\rho$, the blame is always an IVF:

\begin{apxtheoremrep}
    \label{thr:blameSatisfactionOfProperties}
    \hspace{-.2em}$\Blame^\rho$\,is\,an\,unbiased\,IVF\,for\,any\,share\,function\,$\rho$.
\end{apxtheoremrep}
\begin{appendixproof}
  \hfill
  \begin{enumerate}[leftmargin=2.5cm]
    \item[$\Blame^\rho$ is unbiased] Note that \eqref{eq:scs-prop-1} also implies $\scs^\uf_x(g) = \scs^\uf_x(\overline g)$ for all Boolean functions $g$ and variables $x$. Then it is easy to see that $\Blame^\rho_x(g) = \Blame^\rho_x(\overline g)$.
    
    \item[$0 \leq \Blame^\rho_x(f) \leq 1$] Follows from the fact that $\Blame^\rho$ is a convex combination of terms limited by $0$ and $1$.
    \item[\ref{prop:DIC}] Let $x$ be a variable. We want to show $\Blame^\rho_x(x)=1$. $\Blame^\rho_x(\overline x)=1$ then follows from \ref{prop:TI}. Note that $\scs^\uf_x(x) = 0$ and thus $\rho(\scs^\uf_x(x)) = 1$ for all assignments $\uf$, so $\Blame^\rho_x(x) = 1$ follows.
    
    \item[\ref{prop:DUM}] Let $f$ be a Boolean function and $x \not\in \dep(f)$ be a variable. We want to show $\Blame^\rho_x(f) = 0$. Note that $\scs^\uf_x(f) = \infty$ for all assignments $\uf$ since flipping $x$ can never make a difference -- the set of candidate solutions to $\scs^\uf_x(f)$ must therefore be empty. But then $\rho(\scs^\uf_x(f)) =0$ for all assignments $\uf$ and thus $\Blame^\rho_x(f) = 0$.
   
    \item[\ref{prop:TI}] Suppose $f$ is a Boolean function, $\sigma$ a permutation and $x,y$ variables. We want to show (i) $\Blame^\rho_x(f) = \Blame^\rho_{\sigma(x)}(\sigma f)$ and (ii) $\Blame^\rho_x(f) = \Blame^\rho_x(\switchf{y}{f})$. To see (i), note that \eqref{eq:scs-prop-4} also implies $\scs^\uf_x(f) = \scs^{\sigma \uf}_{\sigma(x)}(\sigma f)$ due to the fact that $f(\uf )= \sigma f (\sigma \uf)$. Hence, 
    \[ 
      \Blame^\rho_x(f) = \E_{\uf \in \{0,1\}^X}[\rho( \scs^{\sigma \uf}_{\sigma(x)}(\sigma f))].
    \] 
    Using the fact that the mapping $\uf \mapsto \sigma \uf$ is a permutation of assignments over $X$ and that all assignments are equally likely, we get 
    \[ 
      \E_{\uf\in\{0,1\}^X}[\rho(\scs^{\sigma \uf}_{\sigma(x)}(\sigma f))] = \E_{\uf \in \{0,1\}^X}[\rho(\scs^\uf_{\sigma(x)}(\sigma f))] = \Blame^\rho_{\sigma(x)}(\sigma f).
    \] 
    For (ii), apply \eqref{eq:scs-prop-3} and $f(\uf) = \switchf{y}{f}(\switch{y}{\uf})$ to get $\scs^\uf_x(f) = \scs^{\switch{y}{\uf}}_x(\switchf{y}{f})$. Then argue analogously as for (i) and use the fact that $\uf \mapsto \switch{y}{\uf}$ is a permutation.

    \item[\ref{prop:SUD}] Let us assume that $f,h,g$ are Boolean functions such that (i) $f$ and $h$ are monotonically modular in $g$, (ii) $\cofactor f {\ass g 1} \geq \cofactor h {\ass g 1}$ and $\cofactor h {\ass g 0} \geq \cofactor f {\ass g 0}$ and (iii) $x \in \dep(g)$. We want to show $\Blame^\rho_x(f) \geq \Blame^\rho_x(h)$. We first show this assuming the following statement:
    \begin{equation}
      \begin{split}
        \rho(\scs^\uf_x(f)) + \rho(\scs^{\switch{x}{\uf}}_x(f)) 
        \geq \rho(\scs^\uf_x(h)) + \rho(\scs^{\switch{x}{\uf}}_x(h)) \label{eq:blameSUDInequality}
      \end{split}
      \qquad \text{for all } \uf \in \{0,1\}^X
    \end{equation}

    Since $\uf \mapsto \switch{x}{\uf}$ is a permutation and we're using the uniform distribution, we have:
    \begin{align*}
      \Blame^\rho_x(f) 
      = \textstyle\frac{1}{2} \E_{\uf \in \{0,1\}^X}[ \rho(\scs^\uf_x(f) )] + \textstyle\frac{1}{2} \E_{\uf \in \{0,1\}^X} [ \rho(\scs^{\switch{x}{\uf}}_x(f)) ]
    \end{align*}
    By linearity of $\E$ and inequality \eqref{eq:blameSUDInequality} we can conclude $\Blame^\rho_x(f) \geq \Blame^\rho_x(h)$.
    \\[0.1cm]
    \emph{Proof of (C.6):}
    If $f(\uf) \neq f(\switch{x}{\uf})$, then the left side is maximal and the inequality holds trivially. So let us assume $f(\uf) = f(\switch{x}{\uf})$. Furthermore, it must then be the case that $h(\uf) = h(\switch{x}{\uf})$: If $h(\uf) \neq h(\switch{x}{\uf})$, then $g(\uf) \neq g(\switch{x}{\uf})$, $\cofactor h {\ass g 1}(\uf) = 1$ and $ \cofactor h {\ass g 0}(\uf) = 0$ by (ii) and (iii), which implies $\cofactor f {\ass g 1}(\uf) = 1$ and $\cofactor f {\ass g 0}(\uf) = 0$ by (ii). But then $f(\uf) \neq f(\switch{x}{\uf})$, contradicting our assumption.
      
    Now that we have established $h(\uf) = h(\switch{x}{\uf})$, let us distinguish into two cases: $c = f(\uf) = h(\uf)$ and $c = f(\uf) \neq h(\uf)$. For the first case, using (ii) and \eqref{eq:scs-prop-5},
    \begin{align*}
      \scs^\uf_x(f) = \scs^\uf_x(g,c) + \shortestDist^\uf(\cofactor f {\ass g 1} \overline{\cofactor f {\ass g 0}}).
    \end{align*}
    By (ii), note that $\cofactor f {\ass g 1} \overline{\cofactor f {\ass g 0}} \geq \cofactor h {\ass g 1} \overline{\cofactor h {\ass g 0}}$, which implies 
    \begin{align*}
      \scs^\uf_x(f) 
      &\leq \scs^\uf_x(g,c) + \shortestDist^\uf(\cofactor h {\ass g 1} \overline{\cofactor h {\ass g 0}}) \\
      &= \scs^\uf_x(h)
    \end{align*}
    due to $h(\uf) = c$. The same reasoning can be applied to get $\scs^{\switch{x}{\uf}}_x(f) \leq \scs^{\switch{x}{\uf}}_x(h)$. This already suffices to obtain \eqref{eq:blameSUDInequality} by noting that $\rho$ is monotonically decreasing. For the case $c = f(\uf) \neq h(\uf)$, let us again apply decomposition \eqref{eq:scs-prop-5} to get 
    \begin{align*}
      \scs^\uf_x(f) 
        &= \scs^\uf_x(g,c) + \shortestDist^\uf(\cofactor f {\ass g 1} \overline{\cofactor f {\ass g 0}}) \\
        &= \scs^{\switch{x}{\uf}}_x(g,\overline c) + \shortestDist^{\switch{x}{\uf}}(\cofactor f {\ass g 1} \overline{\cofactor f {\ass g 0}})  
          && \eqref{eq:scs-prop-2} \text{ and } x \not\in \dep(\cofactor f {\ass g 1} \overline{\cofactor f {\ass g 0}}) \\
        &\leq \scs^{\switch{x}{\uf}}_x(g,\overline c) + \shortestDist^{\switch{x}{\uf}}(\cofactor h {\ass g 1} \overline{\cofactor h {\ass g 0}})
          && \cofactor f {\ass g 1} \overline{\cofactor f {\ass g 0}} \geq \cofactor h {\ass g 1} \overline{\cofactor h {\ass g 0}} \\
        &= \scs^{\switch{x}{\uf}}_x(h).
          && h(\switch{x}{\uf}) = \overline c
    \end{align*}
    We can derive $\scs^{\switch{x}{\uf}}_x(f) \leq \scs^{\uf}_x(h)$ analogously. Using again the fact that $\rho$ is monotonically decreasing, we can conclude that \eqref{eq:blameSUDInequality} holds. 
    
  \end{enumerate}
\end{appendixproof}

\begin{toappendix}
  \begin{table}
    \center
    \resizebox*{0.9\textwidth}{!}{
    \begin{tabular}{rrr rrrr rr rrrr rr}
      \toprule
      \multicolumn{3}{c}{input $\uf$} & \multicolumn{6}{c}{functions} & \multicolumn{6}{c}{$\scs$-values} \\ 
        \cmidrule(r){1-3} \cmidrule(r){4-9} \cmidrule{10-15}
      $x$ & $y$ & $z$ 
        & $f_1(\uf)$ & $f_2(\uf)$ & $f_3(\uf)$ & $f_4(\uf)$ 
        & $f(\uf)$ & $h(\uf)$ & 
        $\scs^{\uf}_x(f_1)$ & $\scs^{\uf}_x(f_2)$ & $\scs^{\uf}_x(f_3)$ & $\scs^{\uf}_x(f_4)$ & $\scs^{\uf}_x(f)$ & $\scs^{\uf}_x(h)$ \\
      \midrule
      0 & 0 & 0 & 0 & 0 & 0 & 0 & 0 & 0 & 0 & 0 & 0 & 0 & 1 & 1 \\
      0 & 0 & 1 & 1 & 1 & 0 & 0 & 1 & 0 & 2 & $\infty$ & 0 & 0 & 0 & 0 \\
      0 & 1 & 0 & 1 & 1 & 1 & 1 & 0 & 0 & 0 & $\infty$ & $\infty$ & 0 & 0 & 0 \\
      0 & 1 & 1 & 1 & 1 & 1 & 1 & 1 & 1 & 1 & $\infty$ & $\infty$ & 0 & 1 & $\infty$ \\
      \midrule
      1 & 0 & 0 & 1 & 1 & 1 & 1 & 0 & 0 & 0 & 0 & 0 & 0 & 1 & $\infty$ \\
      1 & 0 & 1 & 1 & 1 & 1 & 1 & 0 & 1 & 1 & 1 & 0 & 0 & 0 & 0 \\
      1 & 1 & 0 & 0 & 1 & 1 & 0 & 1 & 1 & 0 & 1 & 1 & 0 & 0 & 0 \\
      1 & 1 & 1 & 1 & 1 & 1 & 0 & 1 & 1 & 2 & 2 & 1 & 0 & 1 & 1 \\
      \bottomrule
    \end{tabular}}
    \caption{Functions and $\scs$-values from \Cref{thr:limitationsOfBlame}. \label{tab:limitationsOfBlame}}
  \end{table}
\end{toappendix}

\begin{toappendix}
  \Cref{thr:limitationsOfBlame} shows that some properties are satisfied only if one chooses a very specific share function. Proofs utilize Table \ref{tab:limitationsOfBlame}, which contains $\scs$-values for six different functions. We combine implications of properties such as derivative dependence for these functions in order to derive statements about $\rho$:
\end{toappendix}

In full generality, the blame violates the optional properties for IVFs (see \Cref{sec:further_properties}).
For example, if $\rho \neq \rhostep$, then the $\rho$-blame 
is neither chain-rule decomposable nor derivative dependent,
and one can find counterexamples for the rank-preservation property 
for $\rhofrac$ and $\rhoexp$:

\begin{apxpropositionrep}
  \label{thr:limitationsOfBlame}
  Let $\rho$ be a share function. Then the following statements are equivalent:
  \begin{enumerate}[noitemsep]
    \item[\textnormal{(i)}] $\Blame^\rho$ is weakly chain-rule decomposable,
    \item[\textnormal{(ii)}] $\Blame^\rho$ is derivative dependent, and
    \item[\textnormal{(iii)}] $\rho = \rhostep$. 
  \end{enumerate}
  Further, neither $\Blame^\rhofrac$ nor $\Blame^\rhoexp$ are weakly rank preserving.
\end{apxpropositionrep}
\begin{proof}
  For the equivalence of (i), (ii) and (iii), note that if $\rho = \rhostep$, then $\Blame^\rho$ is simply the influence, which is both chain rule decomposable and derivative dependent as we will see later on (cf. \Cref{thr:influenceMBlameSpecialCase}). Therefore, let us focus on the directions from (i) to (iii) and (ii) to (iii), i.e. we assume that $\Blame^\rho$ is chain rule decomposable (resp. derivative dependent) and we want to show that this implies $\rho = \rhostep$. 
 
  \begin{enumerate}[leftmargin=2.5cm]
    \item[(i) implies (iii):] Define 
    \begin{align*}
      &f_1 = (x \oplus y) \lor z \\
      &f_2 = x \lor y \lor z \\
      &f_3 = x \lor y \\
      &f_4 = x \oplus y
    \end{align*}  
    And we get by $\rho(0) = 1$ and $\rho(\infty) = 0$, using Table \ref{tab:limitationsOfBlame},
    \begin{align*}
      \Blame^\rho_x(f_1) &= \textstyle\frac{1}{2^3}(4 + 2\rho(1) + 2 \rho(2)), \\
      \Blame^\rho_x(f_2) &= \textstyle\frac{1}{2^3}(2 + 2\rho(1) + \rho(2)), \\
      \Blame^\rho_x(f_3) &= \textstyle\frac{1}{2^3}(4 + 2\rho(1)), \\
      \Blame^\rho_x(f_4) &= 1.
    \end{align*}
    Since $\Imp$ is weakly chain rule decomposable and satisfies \ref{prop:TI}, we can see that $\Blame^\rho_x(f_1) = \Blame^\rho_x(f_4)\Blame^\rho_x(f_3) = \Blame^\rho_x(f_3)$ and $\Blame^\rho_x(f_2) = \Blame^\rho_x(f_3)^2$. Solving the first equation implies $\rho(2) = 0$.
    Observe that we have:
    \begin{align*}
    &\Blame^\rho_x(f_3)^2 = \frac{(4+2\rho(1))^2}{64} = \frac{16+16\rho(1)+4\rho(1)^2}{64} 
    \end{align*}
    Now $\Blame^\rho_x(f_3)^2 = \Blame^\rho_x(f_2)$ together with $\rho(2) = 0$ yields:
    \begin{alignat*}{2}
    \Blame^\rho_x(f_3)^2 \; = \; &\frac{16+16\rho(1)+4\rho(1)^2}{64} \; &&= \;\frac{1+\rho(1)}{4} \; =  \; \Blame^\rho_x(f_2)\\
    \iff &\quad 16+16\rho(1)+4\rho(1)^2\; &&=\; 16 + 16\rho(1) \\
    \iff &\quad \rho(1) \; &&=\; 0
    \end{alignat*}
    Since $\rho$ is monotonically decreasing but bounded by zero, $\rho(n) = 0$ for $n \geq 1$. But then $\rho = \rhostep$.
    
    \item[(ii) implies (iii):] Define
    \begin{align*}
      f &= x  y \oplus \overline x z \\
      h &= x  (y \lor z) \oplus \overline x  y  z.
    \end{align*} 
    Note that $\Adiff_x f = \Adiff_x h$. Furthermore, using Table \ref{tab:limitationsOfBlame},
    \begin{align*}
      \Blame^\rho_x(f) &= \textstyle\frac{1}{2^3} (4 + 4 \rho(1)) \\
      \Blame^\rho_x(h) &= \textstyle\frac{1}{2^3} (4 + 2 \rho(1)).
    \end{align*}
    Since $\Imp$ is derivative dependent, we can see that $\Blame^\rho_x(f) = \Blame^\rho_x(h)$ must hold, which implies $\rho(1) = 0$ after resolving both equations. But then $\rho(n) = 0$ for all $n \geq 1$ since $\rho$ is monotonically decreasing. Hence, $\rho = \rhostep$.
  \end{enumerate}
  For the last part, we have automatically generated counterexamples for $\Blame^\rhoexp$ and $\Blame^\rhofrac$. For $\Blame^\rhoexp$, let us consider the following functions,
  \begin{align*}
    g &= \overline x_1 \overline x_0  \overline x_2 \lor 
        x_1  x_0  x_2 \lor 
        x_3  \overline x_0  \overline x_2 \lor 
        x_3  x_0  x_2 \lor 
        x_3  x_1, \\
    f &= g \lor z.
  \end{align*}
  We can clearly see that $f$ is monotonically modular in $g$. Furthermore, $\Blame^\rhoexp$ results in the following ranking,
  \begin{align*}
    &g: \quad x_3:0.6250 < x_0:0.7188 = x_2:0.7188 < x_1:0.7266, \\
    &f: \quad x_3:0.4062 < x_1:0.4980 < x_0:0.5000 = x_2:0.5000 < z:0.6172.
  \end{align*}
  Note how the positions of $x_1$ and $x_0$ are flipped when extending $g$ to $f$, so $\Blame^\rhoexp$ is not weakly rank preserving. A counterexample for $\Blame^\rhofrac$ is given by 
  \begin{align*}
    g &= x_1  \overline x_0  \overline x_2 \lor \overline x_1  x_0 \lor x_3, \\
    f &= g \lor z,
  \end{align*}
  which results in the ranking
  \begin{align*}
    &g: \quad x_2:0.2969 < x_1:0.6302 = x_0:0.6302 < x_3:0.7188 \\
    &f: \quad x_2:0.2250 < z:0.4688 = x_3:0.4688 < x_1:0.4802 = x_0:0.4802
  \end{align*}
  and reverses the positions of $x_0$ and $x_3$.
\end{proof}

To give an example for the reason why the $\rhofrac$-blame is not
weakly rank preserving, consider $g = x_1  \overline x_0 \overline x_2 \lor \overline x_1  x_0 \lor x_3$ and $f = g \lor z$. 
Note that $f$ is clearly monotonically modular in $g$ -- only $z$ is added as fresh variable. 
Nevertheless, the order of $x_0$ and $x_3$ changes:
\begin{align*}
  &\Blame^\rhofrac_{x_0}(g) = 0.6302 < 0.7188 = \Blame^\rhofrac_{x_3}(g) \\
  &\Blame^\rhofrac_{x_0}(f) = 0.4802 > 0.4688 = \Blame^\rhofrac_{x_3}(f).
\end{align*}
Intuitively, this is because by CHK's definition of critical sets: for all 
Boolean functions $h$, variables $x$ and assignments $\uf$,
\begin{center} $h(\uf) = 1, \cofactor h {\ass x 1} \geq \cofactor h {\ass x 0}, \uf(x)=0 \implies \scs^\uf_x(h) = \infty.$ \end{center}
Hence, whenever an assignment $\uf$ satisfies the premise for $x$ in $h$, the responsibility of $x$ for $h$ under $\uf$ will be zero.

For $x_3$, this is more frequently the case in $g$ than in $f$ ($19\%$ vs. $34\%$ of all assignments).
On the other hand, there is always a critical set for $x_0$ in both $f$ and $g$.
Partly for this reason, the importance of $x_3$ decreases more than $x_0$
when switching from $g$ to $f$.

\subsubsection{Modified Blame}
\begin{toappendix}
    \subsubsection*{Modified Blame} 
\end{toappendix}
We modify the definition of critical sets
in order to derive a \emph{modified blame} that 
satisfies more optional properties for a wider class of share functions.

For a Boolean function $f$, an assignment $\uf$ over $X$ and a variable $x$, the \emph{modified $\scs$} 
is defined as the size $\mscs^\uf_x(f)$ of the smallest set $S \subseteq X\setminus \{ x \}$ that satisfies 
\begin{center} $f\big(\switchmain{S}{\uf}\big) \neq f\big(\switchmain{S\cup\{x\}}{\uf}\big).$ \end{center} 
If there is no such set, we set $\mscs^\uf_x(f) = \infty$.

\begin{example*}
  The condition for critical sets is relaxed, hence $\mscs^\uf_x(f)$ provides a lower bound for $\scs^\uf_x(f)$.
  Let for example $f = x \lor y$ and $\uf = \ass x 0; \ass y 1$. Then 
  \begin{center}
    $\mscs^\uf_x(f) = 1 < \infty = \scs^\uf_x(f)$.
  \end{center}
\end{example*}

The definitions for responsibility and blame are analogous for the modified version, replacing $\scs$ by $\mscs$. We denote by $\MBlame^\rho$ the \emph{modified $\rho$-blame},
which is (in contrast to $\Blame^\rho$) always derivative dependent and even 
chain-rule decomposable if $\rho$ is an exponential- or stepping-function:
\begin{toappendix}
  Like \eqref{eq:scs-prop-5}, there is a way to decompose the value $\mscs_x^\uf(f)$ if $f$ is modular in $g$ and $x \in \dep(g)$. However, we can express this decomposition using only $\mscs$:

  \begin{apxlemma}
    \label{thr:decompositionmscs}
    If $f$ is modular in $g$, $\uf$ an assignment over $X$ and $x \in \dep(g)$, then
    \begin{center} %
      $\mscs^\uf_x(f) = \mscs^\uf_x(g) + \mscs^\uf_g (f).$
    \end{center} %
  \end{apxlemma}
  \begin{proof}
  Suppose $S$ is an optimal solution to $\mscs^\uf_x(f)$, and define $G =  S \cap \dep(g)$ and $F = S \cap (X \setminus \dep(g))$. By \Cref{thr:derivativeDecomposition},
    \begin{align*}
      \Adiff_x f(\switch{S}{\uf}) = 1 \quad\Rightarrow\quad \Adiff_x g(\switch{G}{\uf}) = \Adiff_{g}f(\switch{F}{\uf}) = 1,
    \end{align*}
    so $G$ is a candidate solution to $\mscs^\uf_x(g)$ and $F$ a candidate solution to $\mscs^\uf_g(f)$. But then 
    \[ 
      \mscs^\uf_x(f) = |S| = |G|+|F| \geq \mscs^\uf_x(g) + \mscs^\uf_g(f). 
    \]
    For the other direction, suppose that $G$ resp. $F$ are optimal solutions to $\mscs^\uf_x(g)$ and $\mscs^\uf_g(f)$. Define $S = G \cup F$. By \Cref{thr:derivativeDecomposition},
    \begin{align*}
      \Adiff_x g(\switch{G}{\uf}) = \Adiff_{g}f(\switch{F}{\uf}) = 1 \quad\Rightarrow\quad \Adiff_x f(\switch{S}{\uf}) = 1,
    \end{align*}
    which shows that $S$ is a candidate solution to $\mscs^\uf_x(f)$ and thus 
    \[ 
      \mscs^\uf_x(f) \leq |S| \leq |G| + |F| = \mscs^\uf_x(g) + \mscs^\uf_g(f).
    \]
  \end{proof}

    By \Cref{thr:implicationsOfDD}, if $\MBlame^\rho$ satisfies the properties \ref{prop:DIC}, \ref{prop:DUM}, derivative dependence and \ref{prop:TI}, then it must be an unbiased IVF. For the chain-rule property, we rely on \Cref{thr:decompositionmscs}:
\end{toappendix}

\begin{apxtheoremrep}
    \label{thr:MBlameSatisfactionOfProperties}
    $\MBlame^\rho$ is an unbiased, derivative-dependent IVF for any share function $\rho$. 
    If there is $0 \leq \lambda < 1$ so that $\rho(k) = \lambda^k$ for all $k \geq 1$, then $\MBlame^\rho$ is chain-rule decomposable.
\end{apxtheoremrep}
\begin{appendixproof}
  Let us first show that $\MBlame^\rho$ is indeed an unbiased importance value function.
  \begin{enumerate}[leftmargin=1.8cm]
  
    \item[\ref{prop:DIC}] Let $x$ be a variable. We want to show $\MBlame^\rho_x(x) = 1$. $\MBlame^\rho_x(\overline x)=1$ then follows from \ref{prop:TI}. Note that since $\Adiff_x x = 1$, we have $\mscs^\uf_x(x) = 0$ for all assignments $\uf$. Thus, since $\MBlame^\rho_x(x)$ is a convex combination of $\rho(0) = 1$'s, we get $\MBlame^\rho_x(x) = 1$.
    
    \item[\ref{prop:DUM}] Let $f$ be a Boolean function and $x \not\in \dep(f)$ be a variable. We want to show $\MBlame^\rho_x(f) = 0$. Note that $\Adiff_x f = 0$, which implies $\mscs^\uf_x(f) = \infty$ for all assignments $\uf$. Since $\MBlame^\rho_x(f)$ is a convex combination of $\rho(\infty) = 0$'s, we get $\MBlame^\rho_x(f) = 0$.

    \item[deriv. dep.] Let $f,h$ be Boolean functions and $x$ a variable such that $\Adiff_x f \geq \Adiff_x h$. We want to show $\MBlame^\rho_x(f) \geq \MBlame^\rho_x(h)$. First mind that the inequality
    $ %
      \mscs^\uf_x(f) \leq \mscs^\uf_x(h)
    $ %
    holds for all assignments $\uf$, which follows from the fact that $\mscs^\uf_x(f) = \shortestDist^\uf(\Adiff_x f)$ and $\mscs^\uf_x(h) = \shortestDist^\uf(\Adiff_x h)$ and by noting that $\ell \geq r$ implies $\shortestDist^\uf(\ell)  \leq \shortestDist^\uf(r)$ (every candidate solution to $\shortestDist^\uf(r)$ is automatically a candidate solution to $\shortestDist^\uf(\ell)$.) Turning back to derivative dependence, we us the fact that $\rho$ is monotonically non-increasing to obtain 
    \begin{align*}
      \MBlame^\rho_x(f) = \E_{\uf \in \{0,1\}^X}[ \rho(\mscs^\uf_x(f) ) ] \geq \E_{\uf \in \{0,1\}^X}[ \rho( \mscs^\uf_x(h)) ] = \MBlame^\rho_x(f).
    \end{align*}
    
    \item[\ref{prop:TI}] Let $f$ be a Boolean function, $\sigma$ a permutation and $x,y$ variables. We want to show (i) $\MBlame^\rho_x(f) = \MBlame^\rho_{\sigma(x)}(\sigma f)$ and (ii) $\MBlame^\rho_x(f) = \MBlame^\rho_x(\switchf{y}{f})$. For (i), let us first note that 
    $ %
      \mscs^\uf_x(f) = \mscs^{\sigma \uf}_{\sigma(x)}(\sigma f)
    $ %
    holds for all assignments $\uf$. We denote this fact by $(*)$. Too see why $(*)$ holds, note that
    \[ 
      \cofactor f {\ass x c}(\uf) = \cofactor {(\sigma f)} {\ass {\sigma(x)} c} (\sigma \uf)
    \] 
    for $c \in \{0,1\}$ due to \eqref{eq:perm-prop-4}, thus implying 
    \[ 
      \Adiff_x f(\uf) = \Adiff_{\sigma(x)} \sigma f(\sigma \uf),
    \]
    which shows that for all $S \subseteq X$,
    \[ 
      \Adiff_x f(\switch{S}{\uf}) = \Adiff_{\sigma(x)} \sigma f(\sigma(\switch{S}{\uf})) ) = \Adiff_{\sigma(x)} \sigma f(\switch{\sigma(S)}{\sigma \uf})
    \]
    due to \eqref{eq:perm-prop-10}. Thus, every candidate solution $S$ of $\mscs^\uf_x(f)$ translates to the candidate solution $\sigma(S)$ of $\mscs^{\sigma \uf}_{\sigma(x)}(\sigma f)$ and vice versa. Since $\sigma(S)$ has the same size as $S$, $(*)$ must follow. Finally,
    \begin{align*} 
      \MBlame^\rho_x(f) 
      &= \E_{\uf \in \{0,1\}^X}[\rho(\mscs^\uf_x(f))] 
        && \text{definition} \\
      &= \E_{\uf \in \{0,1\}^X}[\rho(\mscs^{\sigma \uf}_{\sigma(x)}(\sigma f))] 
        && (*) \\
      &= \E_{\uf \in \{0,1\}^X}[\rho(\mscs^\uf_{\sigma(x)}(\sigma f))]
        && \uf \mapsto \sigma \uf \text{ permutation of } \{0,1\}^X \\
      &= \MBlame^\rho_{\sigma(x)}(\sigma f).
    \end{align*}

    For (ii), note that for all $\uf$ over $X$, 
    \begin{align*} 
      \Adiff_x f(\uf) = 0 
      &\tiff f(\uf) = f(\switch{x}{\uf}) \\
      &\tiff \switchf{y}{f}(\switch{y}{\uf}) = \switchf{y}{f}(\switch{x}{(\switch{y}{\uf})}) \\
      &\tiff \Adiff_x (\switchf{y}{f})(\switch{y}{\uf}) = 0,
    \end{align*}
    which shows $\Adiff_x f(\uf) = \Adiff_x (\switchf{y}{f})(\switch{y}{\uf})$. Thus, $S$ is a candidate solution to $\mscs^\uf_x(f)$ iff it is a candidate solution to $\mscs^{\switch{y}{\uf}}_x(\switchf{y}{f})$. But this must mean that both values are equal, and we get $\MBlame^\rho_x(f) = \MBlame^\rho_x(\switchf{y}{f})$ using the fact that the mapping $\uf \mapsto \switch{y}{\uf}$ constitutes a bijection.
  \end{enumerate}

  \noindent Finally, let us assume that $\rho(k) = \lambda^k$ for a $0 \leq \lambda < 1$ and all $k \geq 1$. Then, clearly, $\rho(a+b) = \rho(a)\rho(b)$. Too see that $\MBlame^\rho$ is chain rule decomposable, suppose $f$ is modular in $g$ and $x \in\dep(g)$ a variable. We want to show $\MBlame^\rho_x(f) = \MBlame^\rho_x(g) \MBlame^\rho_g(f)$. This result can be obtained by application of \Cref{thr:decompositionmscs},
  \begin{align*}
    \MBlame^\rho_x(f) 
    &= \E_{\uf \in \{0,1\}^X}[ \rho( \mscs^\uf_x(f) )] \\
    &= \E_{\uf \in \{0,1\}^X}[ \rho( \mscs^{\uf}_x(g) + \mscs^{\uf}_g(f) )] \\
    &= \E_{\uf \in \{0,1\}^X}[ \rho( \mscs^{\uf}_x(g) ) \rho( \mscs^{\uf}_g(f) )] \\
    &= \E_{\uf \in \{0,1\}^{X}}[ \rho(\mscs^\uf_x(g))] \E_{\uf \in \{0,1\}^{X}}[\rho(\mscs^\uf_g(f))] \\
    &=  \MBlame^\rho_x(g) \MBlame^\rho_g(f).
  \end{align*}
  The penultimate equality is due to \Cref{thr:decompositionOfExpectations}, since $\dep( \rho(\mscs^{(\cdot)}_x(g)) ) \subseteq \dep(g)$ and $\dep( \rho(\mscs^{(\cdot)}_g(f)) ) \subseteq X \setminus \dep(g)$.
\end{appendixproof}

\subsection{Influence}\label{ssec:instances_influence}
\begin{toappendix}
    \subsection*{Influence} 
\end{toappendix}

The influence~\cite{ben-orCollectiveCoinFlipping1985,kahnInfluenceVariablesBoolean1988,o2014analysis} is a popular importance measure,
defined as the probability that flipping the variable changes the function's outcome 
for uniformly distributed assignments: 

\begin{definition}
  The \emph{influence} is the value function $\Inf$ defined by 
  $
    \Inf_x(f) = \E[ \Adiff_x f]
  $
  for all $f \in \boolfuncs$ and variables $x \in X$.
\end{definition}

It turns out that the influence is a special case of blame:
\begin{apxpropositionrep}
  \label{thr:influenceMBlameSpecialCase}
  $
    \Inf = \MBlame^\rhostep = \Blame^\rhostep.
  $
\end{apxpropositionrep}
\begin{appendixproof}
  Follows from the fact that $\Adiff_x f(\uf) = \rhostep( \mscs^{\uf}_x(f) ) = \rhostep(\scs^\uf_x(f))$ for all Boolean functions $f$, variables $x$ and assignments $\uf$.
\end{appendixproof}

Since $\rhostep(k)=0^k$ for $k \geq 1$, \Cref{thr:influenceMBlameSpecialCase} and \Cref{thr:MBlameSatisfactionOfProperties} 
show that the influence is a derivative-dependent, rank-preserving, and chain-rule decomposable IVF.

\subsubsection{Characterizing the Influence}
Call a value function $\Imp$ \emph{cofactor-additive} if
for all Boolean functions $f$ and variables $x \neq z$:
\begin{center}
  $\Imp_x(f) = \nicefrac{1}{2} \cdot \Imp_x(\cofactor f {\ass z 0}) + \nicefrac{1}{2} \cdot \Imp_x(\cofactor f {\ass z 1}).$
\end{center}
Using this notion, we \emph{axiomatically characterize} the influence as follows. 

\begin{toappendix}
  The additional additivity property states that the importance of a variable to a function should always be the average importance with respect to the cofactors of another variable. We can applying this property repeatedly to obtain special cases which are covered by \ref{prop:DUM} and \ref{prop:DIC}:
\end{toappendix}

\begin{apxtheoremrep}
  \label{thr:influenceAxiomatic}
A value function $\Imp$ satisfies \ref{prop:DIC}, \ref{prop:DUM}, and \emph{cofactor-additivity} if and only if $\Imp = \Inf$.
\end{apxtheoremrep}
\begin{appendixproof}
  Let us first establish that $\Inf$ indeed satisfies all of these properties. \Cref{thr:influenceMBlameSpecialCase} suffices to show that $\Inf$ satisfies \ref{prop:DIC} and \ref{prop:DUM}. For cofactor-additivity, note that
  \begin{align*}
    \Inf_x(f)
     & = \textstyle\frac{1}{2^{n}} \sum_{\uf \in \{0,1\}^X} \Adiff_x f(\uf)
     &                                                                                                                          & \text{definition}                            \\
     & = \textstyle\frac{1}{2^{n}} \sum_{\uf \in \{0,1\}^{X \setminus \{ z \}}} \Adiff_x f(\ass z 0; \uf) + \Adiff_x f(\ass z 1;\uf)
     &                                                                                                                          & \text{expanding w.r.t. } z \\
     & = \textstyle\frac{1}{2^{n}} \sum_{\uf \in \{0,1\}^{X \setminus \{ z \}}} \Adiff_x (\cofactor f {\ass z 0})(\uf) + \Adiff_x (\cofactor f {\ass z 1})(\uf)
     &                                                                                                                          & \Adiff_x f(\ass z c; \uf) = \Adiff_x (\cofactor f {\ass z c})(\uf) \text{ for } \uf \in \{0,1\}^{X \setminus \{ z\}}\\
     & = \textstyle\frac{1}{2}\frac{1}{2^{n}} \sum_{\uf \in \{0,1\}^X}\Adiff_x (\cofactor f {\ass z 0})(\uf) + \Adiff_x (\cofactor f {\ass z 1})(\uf)
     &                                                                                                                          & z\not\in \dep(\Adiff_x (\cofactor f {\ass z c}))              \\
     & = \textstyle\frac{1}{2} \Inf_x(\cofactor f {\ass z 0}) + \frac{1}{2}\Inf_x(\cofactor f {\ass z 1}).
     &                                                                                                                          & \text{definition}
  \end{align*}
  For the other direction, let us assume that $\Imp$ is cofactor-additive and that it satisfies \ref{prop:DIC} and \ref{prop:DUM}. Use cofactor-additivity repeatedly on every variable except for $x$ to get
  \begin{align*}
    \Imp_x(f) 
    &= \textstyle\frac 1 2 \Imp_x(\cofactor f {\ass {z_1} 0}) + \frac 1 2 \Imp_x(\cofactor f {\ass {z_1} 1}) \\
    &= \textstyle\frac 1 4 \Imp_x(\cofactor f {\ass {z_1} 0; \ass {z_2} 0}) + \frac 1 4 \Imp_x(\cofactor f {\ass {z_1} 0; \ass {z_2} 1}) 
      + \frac 1 4 \Imp_x(\cofactor f {\ass {z_1} 1; \ass {z_2} 0}) + \frac 1 4 \Imp_x(\cofactor f {\ass {z_1} 1; \ass {z_2} 1})  \\
    &= \dots \\
    &=  \textstyle\frac{1}{2^{n-1}} \sum_{\uf \in \{ 0,1\}^{X \setminus \{ x \}}} \Imp_x(\cofactor f \uf).
  \end{align*}
  Since $\uf$ assigns all variables except for $x$ a value, $\cofactor f {\uf}$ can only depend on $x$. So there are four possible functions that $\cofactor f \uf$ can represent: $f = x$, $f = \overline x$, $f = 0$ or $f= 1$. In the latter two cases, use \ref{prop:DUM} to get $\Imp_x(\cofactor f \uf) = 0$. In the former two cases, we can apply \ref{prop:DIC} to obtain $\Imp_x(\cofactor f \uf) = 1$. But then $\Imp_x(\cofactor f \uf) = \Adiff_x f(\uf)$ and therefore
  $ %
    \Imp_x(f)
    = \Inf_x(f)
  $ %
  after writing $\Imp_x(f)$ as a sum of assignments over $X$ instead of $X \setminus \{x \}$.
\end{appendixproof}

\begin{remark*}
A relaxed version of \emph{cofactor-additivity} assumes the existence of 
$\alpha_z,\beta_z \in \R$ for $z \in X$ such that for all $x \neq z$:
\begin{center}
  $\Imp_x(f) = \alpha_z \Imp_x(\cofactor f {\ass z 0}) + \beta_z \Imp_x(\cofactor f {\ass z 1})$.
\end{center}
This, together with the assumption that $\Imp$ satisfies \ref{prop:TI}, \ref{prop:DUM} and \ref{prop:DIC}, implies 
$\alpha_z = \beta_z = \nicefrac 1 2$. Hence, another characterization of the influence consists of \ref{prop:TI}, \ref{prop:DUM}, \ref{prop:DIC}, and \emph{relaxed cofactor-additivity}. 
\ifx\arxiv\undefined
\else
(See \Cref{thr:influenceAxiomatic2} in the appendix.)
\fi
\end{remark*}

\begin{toappendix}
  A different axiomatic characteristic of the influence uses the \emph{relaxed cofactor-additvity} property, which is defined as follows.
  Say that a value function $\Imp$ is \emph{relaxed cofactor-additive} if there are $\alpha_z,\beta_z \in \R$ for every $z \in X$ such that 
  for all $x \neq z$:
  \begin{align*}
    \Imp_x(f) = \alpha_z \Imp_x(\cofactor f {\ass z 0}) + \beta_z \Imp_x(\cofactor f {\ass z 1}).
  \end{align*}
  If we \emph{additionally} assume that $\Imp$ satisfies \ref{prop:TI}, \ref{prop:DIC} and \ref{prop:DUM}, then $\alpha_z = \beta_z = \nicefrac 1 2$, and so 
  it must be the influence:
  \begin{corollary}
    \label{thr:influenceAxiomatic2}
    A value function $\Imp$ satisfies \ref{prop:TI}, \ref{prop:DIC}, \ref{prop:DUM}, and \emph{relaxed cofactor-additivity} if and only if $\Imp = \Inf$.
  \end{corollary}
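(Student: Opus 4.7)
The plan is to reduce the statement to Theorem~\ref{thr:influenceAxiomatic} by proving that under \ref{prop:TI}, \ref{prop:DIC}, and \ref{prop:DUM}, the coefficients $\alpha_z$ and $\beta_z$ appearing in relaxed cofactor-additivity must both equal $\nicefrac 1 2$, so that $\Imp$ is in fact (plain) cofactor-additive. The forward direction is immediate: the influence is cofactor-additive with $\alpha_z = \beta_z = \nicefrac 1 2$, which is a particular instance of relaxed cofactor-additivity, and the proof of Theorem~\ref{thr:influenceAxiomatic} together with \Cref{thr:influenceMBlameSpecialCase} (and the \ref{prop:TI} part of $\MBlame^\rho$) already establishes the remaining axioms for $\Inf$.

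For the non-trivial direction, I would first show $\alpha_z = \beta_z$ by exploiting \ref{prop:TI}(ii). Since $\Imp_x(f) = \Imp_x(\switchf{z}{f})$ while $\cofactor{(\switchf{z}{f})}{\ass z 0} = \cofactor f {\ass z 1}$ and $\cofactor{(\switchf{z}{f})}{\ass z 1} = \cofactor f {\ass z 0}$, applying relaxed cofactor-additivity to both $f$ and $\switchf{z}{f}$ and subtracting the two expressions gives
\[
(\alpha_z - \beta_z)\bigl(\Imp_x(\cofactor f {\ass z 0}) - \Imp_x(\cofactor f {\ass z 1})\bigr) = 0
\]
for every Boolean function $f$. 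Instantiating with $f = \overline z\, x$ (legal because $x \neq z$) yields cofactors $x$ and $0$, whose importances are $1$ and $0$ by \ref{prop:DIC} and \ref{prop:DUM} respectively. The second factor is therefore nonzero, forcing $\alpha_z = \beta_z$. Next, applying relaxed cofactor-additivity to $f = x$ itself (again with $x \neq z$), both cofactors w.r.t.\ $z$ equal $x$, and using \ref{prop:DIC} one obtains $1 = \Imp_x(x) = (\alpha_z + \beta_z)\, \Imp_x(x) = \alpha_z + \beta_z$. Combined with $\alpha_z = \beta_z$, this pins down $\alpha_z = \beta_z = \nicefrac 1 2$, so $\Imp$ satisfies cofactor-additivity in the original sense, and Theorem~\ref{thr:influenceAxiomatic} concludes $\Imp = \Inf$.

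The only subtlety is finding a witness for the first step: one needs a single $f$ for which the two cofactor importances are provably different. The function $\overline z\, x$ is the natural choice, since \ref{prop:DIC} and \ref{prop:DUM} are precisely calibrated to separate the values on $x$ and on a constant. Once that witness is identified, the remaining manipulations are purely algebraic and Theorem~\ref{thr:influenceAxiomatic} does the rest of the work.
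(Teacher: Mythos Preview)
Your proposal is correct and follows essentially the same route as the paper: both derive $\alpha_z + \beta_z = 1$ from the dictator $x$ via \ref{prop:DIC}, and both obtain $\alpha_z = \beta_z$ by applying \ref{prop:TI}(ii) to the pair $xz$, $x\overline z$ (your $f = \overline z\, x$ is exactly this instantiation, just packaged as a general subtraction identity first). The reduction to Theorem~\ref{thr:influenceAxiomatic} is then identical.
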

  \begin{proof}
    The direction from right to left is clear since the influence is a cofactor-additive IVF (cf. \Cref{thr:influenceAxiomatic}).
    For the direction from left to right, we only need to show that $\alpha_z = \beta_z = \nicefrac 1 2$ holds: since $\Imp$ is then 
    cofactor-additive, \Cref{thr:influenceAxiomatic} implies that it must be the influence.

    Pick an arbitrary $z \in X$ and choose $x \neq z$.
    Apply \emph{relaxed cofactor-additvity} and \ref{prop:DIC} to the dictator function:
    \begin{align*}
      \Imp_x(x) = \alpha_z\Imp_x(x) + \beta_z \Imp_x(x) \implies \beta_z + \alpha_z = 1.
    \end{align*}
    Moreover, note that due to \emph{relaxed cofactor-additvity}, \ref{prop:DIC} and \ref{prop:DUM},
    \begin{align*}
      \Imp_x(x z) &= \alpha_z \Imp_x(x) + \beta_z \Imp_x(0) = \alpha_z, \\
      \Imp_x(x \overline z) &= \alpha_z \Imp_x(0) + \beta_z \Imp_x(x) = \beta_z.
    \end{align*}
    The application of \ref{prop:TI} yields $\Imp_x(xz) = \Imp_x(x \overline z)$, and so: $\alpha_z = \beta_z =\nicefrac{1}{2}$.
  \end{proof}
\end{toappendix}

Moreover, we give a \emph{syntactic characterization} of the influence by a comparison
to the two-sided Jeroslow-Wang heuristic used for 
SAT-solving~\cite{jeroslowSolvingPropositionalSatisfiability1990,hookerBranchingRulesSatisfiability1995,marques-silvaImpactBranchingHeuristics1999}. This value is defined for families of sets of literals, which are sets of subsets of $X \cup \{ \overline z : z \in X \}$, and it weights subsets that contain $x$ or $\overline x$ by their respective lengths:

\begin{definition}[\cite{hookerBranchingRulesSatisfiability1995}]
  Let $\Dc$ be a family of sets of literals. The \emph{two-sided Jeroslow-Wang value} 
  for a variable $x$ is defined as
  \begin{center}
    $\JW_x(\Dc) = \textstyle\sum_{ C \in \Dc \ \text{s.t.} \ x \in C \text{ or } \overline x \in C } 2^{-|C|}$ %
  \end{center}
\end{definition}

We call a set $C$ of literals \emph{trivial} if there is a variable $x$ such that $x \in C$ 
and $\overline x \in C$. For a variable $x$, say that $\Dc$ is \emph{$x$-orthogonal} if for all $C, C' \in \Dc$, $C \neq C'$, there is a literal $\eta \not\in \{x,\overline x\}$ such that $\eta \in C$ and $\overline \eta \in C'$. Orthogonality is well-studied for DNFs~\cite{HammerC2011}. The two-sided Jeroslow-Wang value and 
the influence agree up to a factor of two for some families of sets of literals when interpreting them as DNFs:

\begin{toappendix}
  The following connection to the two-sided Jeroslow-Wang heuristic is established by counting the number of satisfying assignments. This heavily relies on the fact that every pair of clauses is distinguished by a literal that occurs negatively in one and positively in the other:
\end{toappendix}

\begin{apxtheoremrep}
  \label{thr:influenceJW}
  Let $\Dc$ be a family of sets of literals such that all of its elements are non-trivial, 
  and let $x$ be variable such that $\Dc$ is $x$-orthogonal. Then:
  \begin{center}
    $\Inf_x( \textstyle\bigvee_{C \in \Dc} \bigwedge_{\eta \in C} \eta ) \ = \ 2 \cdot \JW_x(\Dc).$
  \end{center}
\end{apxtheoremrep}
\begin{appendixproof}
  For a set of literals $C$, denote their conjunction as $\pi_{C} = \textstyle\bigwedge_{\eta \in C} \eta$. We can write $f = \textstyle\bigvee_{C \in D} \pi_{C}$ as
  \begin{align*}
    f = h_1 \lor x h_2 \lor \overline x h_3 = x (h_1 \lor h_2) \lor \overline x (h_1 \lor h_3),
  \end{align*}
  where
  \begin{align*}
    h_1 & = \textstyle\bigvee_{C \in \Dc \colon x, \overline x \not\in C}  \pi_{C},                       \\
    h_2 & = \textstyle\bigvee_{C \in \Dc \colon x \in C } \pi_{C \setminus \{ x \}},                     \\
    h_3 & = \textstyle\bigvee_{C \in \Dc \colon \overline x \in C } \pi_{C \setminus \{ \overline x \}}.
  \end{align*}
  Since all $C \in \Dc$ are non-trivial, it must be the case that no element of $\{ C \setminus \{ x \} \colon x \in C \in \Dc \}$, $\{ C \setminus \{ \overline x \} \colon \overline x \in C \in \Dc \}$ or $\{ C \in \Dc : x,\overline x \not\in C \}$ contains $x$ or $\overline x$. Therefore, neither $h_1$ nor $h_2$ nor $h_3$ can depend on $x$. Thus, $h_1 \lor h_2$ and $h_1 \lor h_3$ must indeed be the positive and negative cofactors of $f$ with respect to $x$. Hence,
  \begin{align*}
    \Inf_x(f) = \E[ (h_1 \lor h_2) \oplus (h_1 \lor h_3) ],
  \end{align*}
  which can be written as
  \begin{align*}
    \Inf_x(f) = \E[ \overline{h_1} h_2 \overline{h_3} ] + \E[\overline{h_1}\, \overline{h_2} h_3 ].
  \end{align*}
  The goal is now to show that $h_i \overline{h_j}$ is equal to $h_i$ for $i,j \in \{1,2,3\}$, $i \neq j$. Using the distributivity law, write the conjunction of $h_i$ and $h_j$ as
  \begin{align*}
    h_i h_j = \textstyle\bigvee_{C \in I} \bigvee_{C' \in J} \pi_{C} \pi_{C'},
  \end{align*}
  where $I$ and $J$ are chosen accordingly from $\{ C \in \Dc \colon x,\overline x \not\in C \}$, $\{ C \setminus \{ x \} \colon x \in C \in \Dc \}$ and $\{ C \setminus \{\overline x\} \colon \overline x \in C \in \Dc \} $. Since $\Dc$ is $x$-orthogonal, for every $C \in I$ and $C' \in J$, there is a literal $\eta$ such that $\eta \in C$ and $\overline \eta \in C'$. Thus, every inner term $\pi_{C} \pi_{C'}$ must evaluate to zero. But then $h_i h_j = 0$, and we can see that $h_i \overline{h_j} = h_i$ is implied. This allows the simplification
  \begin{align*}
    \Inf_x(f) = \E[h_3] + \E[h_2].
  \end{align*}
  Let us now compute the expected value of $h_2$, and, by a symmetric argument, that of $h_3$. Again, note that for $C, C' \in \Dc$ with $x \in C$ and $x \in C'$ and $C \neq C'$, it must be the case that $\pi_{C\setminus\{x\}} \pi_{C'\setminus\{x\}} = 0$, thus implying $\pi_{C\setminus\{x\}} \overline{\pi_{C'\setminus\{x\}}} = \pi_{C\setminus\{x\}}$. Therefore, for every $C' \in \Dc$ with $x \in C'$,
  \begin{align*}
    \E[h_2]
     & = \E[ \textstyle\bigvee_{C \in \Dc : x \in C} \pi_{C \setminus \{ x \}} ]                                                                                        \\
     & = \E[ \pi_{C' \setminus \{ x \}} ] + \E[ \textstyle\bigvee_{C \in \Dc \setminus \{ C' \}: x \in C}  \overline{\pi_{C'\setminus \{ x \}}} \pi_{C \setminus \{ x \}} ] \\
     & = \E[ \pi_{C' \setminus \{ x \}} ] + \E[ \textstyle\bigvee_{C \in \Dc \setminus \{ C' \}: x \in C}  \pi_{C \setminus \{ x \}} ]                         \\
     & = \dots = \textstyle\sum_{C \in \Dc : x \in C} \E[ \pi_{C \setminus \{ x \}} ].
  \end{align*}
  For $C \in \Dc$ with $x \in C$, the number of satisfying assignments of $\pi_{C \setminus \{ x \}}$ is simply $2^{n-(|C|-1)}$. Thus, $\E[ \pi_{C\setminus\{x\}}] = 2^{-|C|+1}$. We can now apply a symmetric argument to $h_3$, resolve the above equation of $\Inf_x(f)$ and then apply the fact that all $C \in \Dc$ are non-trivial to get
  \begin{align*}
    \Inf_x(f) = 2 \cdot \textstyle\sum_{C \in \Dc\colon x\in C \text{ or } \overline x \in C} 2^{-|C|}
    = 2 \cdot \JW_x(D).
  \end{align*}
\end{appendixproof}
A simple example that illustrates~\Cref{thr:influenceJW} would 
be $\Dc = \{ \{ x, y, z \}, \{ y, \overline z \} \}$. Note that we can 
interpret $\Dc$ as a CNF as well, since the influence does not distinguish between 
a function and its dual (\Cref{thr:propertyRelationsOfVFs}). 
Note that every Boolean function can be expressed by a family $\Dc$ that satisfies the conditions of~\Cref{thr:influenceJW}. For this, we construct the canonical DNF corresponding to $f$ and 
resolve all monomials that differ only in $x$. 
\ifx\arxiv\undefined
\else
(See \Cref{thr:existenceOfXOrthogonalDNF} in the appendix.)
\fi

\begin{toappendix}
  The following proposition shows that every Boolean function $f$ can be represented by DNF made up of a family of sets of literals $\Dc$ that satisfies the premise in~\Cref{thr:influenceJW}. Say that a family of sets of literals $\Ac$ \emph{is a DNF representation of} $f$ if $f = \bigvee_{C \in \Ac} \bigwedge_{\eta \in C} \eta$. Then:
  \begin{apxproposition}
    \label{thr:existenceOfXOrthogonalDNF}
    Suppose $f$ is a Boolean function and $x$ a variable. Then there exists a family of sets of literals $\Dc$ such that (i) $\Dc$ is a DNF representation of $f$, (ii) all $C \in \Dc$ are non-trivial and (iii) $\Dc$ is $x$-orthogonal.
  \end{apxproposition}
  \begin{proof}
    Let us first define $\Rc$ as the family of sets of literals that corresponds to the canonical DNF of $f$, i.e.
    \begin{align*}
      \Rc = \{ \{ z : \uf(z) = 1 \} \cup \{ \overline z : \uf(z) = 0 \} : \uf \in \{0,1\}^X, f(\uf) = 1 \}.
    \end{align*}
    We now resolve all sets of literals $C$ for which $C \cup \{ x \}\in \Rc$ and $C \cup \{ \overline x \} \in \Rc$ holds:
    \begin{align*}
      \Dc = \{ C \in \Rc : x, \overline x \not\in C \} \cup \{ C \setminus \{x,\overline x\} : C \cup \{ x\}, C \cup \{ \overline x \} \in \Rc \}.
    \end{align*}
    For example, if $f = y \land (x \lor z)$, then $\Rc = \{ \{ y, x, \overline z \}, \{ y, x, z \}, \{ y, \overline x, z \} \}$ and $\Dc = \{ \{ y,x,\overline z \}, \{ y,z \} \}$.

    If $\Ac$ is a DNF representation of $f$ and $C$ a set of literals such that $C \cup \{ x \}, C \cup \{\overline x \}\in \Ac$, then note that both $C \cup \{ x \}$ and $C \cup \{\overline x\}$ can be resolved to $C \setminus \{x,\overline x\}$ without changing the fact that the resulting family is a DNF representation of $f$. More precisely, $(\Ac \setminus \{ C \cup \{ x \}, C \cup \{\overline x\}\}) \cup (C \setminus \{x,\overline x\})$ must also be a DNF representation of $f$. Repeatedly applying this argument for all $C$ such that $C \cup \{x \}, C \cup \{\overline x \} \in \Rc$ shows that $\Dc$ must indeed be a DNF representation of $f$, i.e. satisfy (i). 
    For (ii), note that there is no $C \in \Rc$ that is trivial. Since every set in $\Dc$ is also a subset of a set in $\Rc$, no element of $\Dc$ can be trivial. 
    Finally, for (iii), note that $\Rc$ is orthogonal, i.e. for all $C,C' \in \Rc$, $C \neq C'$ there exists a literal $\eta$ such that $\eta \in C$ and $\overline \eta \in C'$. After resolving all sets that differ \emph{only} in $\eta \in \{x,\overline x\}$, all resulting pairs of sets must differ in at least one literal not in $\{x,\overline x\}$. Thus, $\Dc$ must be $x$-orthogonal. 
  \end{proof}
\end{toappendix}

\subsection{Cooperative Game Mappings}\label{ssec:cgms}
\begin{toappendix}
    \subsection*{Cooperative Game Mappings} 
\end{toappendix}

Attribution schemes analogous to what we call \emph{value functions} were already studied in the 
context of game theory, most often with emphasis on Shapley- and Banzhaf 
values~\cite{shapleyValueNPersonGames1953,banzhafWeightedVotingDoesn1965}.
They are studied w.r.t. \emph{cooperative games}, which are a popular way of modeling collaborative behavior. 
Instead of Boolean assignments, their domains are subsets (coalitions) of $X$. 
Specifically, cooperative games are of the form $v\colon 2^X \rightarrow \R$, in which the value $v(S)$ is associated 
with the payoff that variables (players) in $S$ receive when collaborating. Since more cooperation generally means higher payoffs, they are often assumed to be monotonically increasing w.r.t. set inclusion.
In its unconstrained form, they are essentially pseudo Boolean functions.

We denote by $\cgs$ the set of all cooperative games. 
If $\image(v) \subseteq \{0,1\}$, then we call $v$ \emph{simple}. 
For a cooperative game $v$, we denote by $\marginal_x v$ the cooperative game that computes the 
``derivative'' of $v$ w.r.t. $x$, which is $\marginal_x v(S) = v(S \cup \{ x \}) - v(S \setminus \{ x \})$. %
We compose cooperative games using operations such as $\cdot,+,-,\land,\lor$ etc., where $(v\circ w)(S) = v(S) \circ w(S)$. 
For $\mathbin{\sim} \in \{\geq,\leq, =\}$, we also write $v \sim w$ if $v(S) \sim w(S)$ for all $S \subseteq X$.
The set of variables $v$ depends 
on is defined as $\dep(v) = \{ x \in X: \marginal_x v \neq 0 \}$.

Cooperative game mappings map Boolean functions to cooperative games.
Specific \emph{instances} of such mappings have previously been
investigated by~\cite{hammerEvaluationStrengthRelevance2000,biswasInfluenceSetVariables2022}.
We provide a \emph{general definition} of this concept to 
show how it can be used to construct IVFs.

\begin{definition}[CGM]
    \label{def:coalitionGameMapping}
    A \emph{cooperative game mapping (CGM)} is a function $\tau \colon \boolfuncs \rightarrow \cgs$ with $f\mapsto \tau_f$. 
    We call $\tau$ \emph{importance inducing} if for all $x,y \in X$, permutations $\sigma\colon X\rightarrow X$,
    and $f,g,h \in \boolfuncs$:
    \begin{enumerate}[leftmargin=2.2cm]
        \item[(\namedlabel{prop:CGM_B}{\makeprop{Bound\textsubscript{CG}}})] $0 \leq \marginal_x \tau_f \leq 1$.
        \item[(\namedlabel{prop:CGM_DUM}{\makeprop{Dum\textsubscript{CG}}})] $\marginal_x \tau_f = 0$ if $x\not\in \dep(f)$.
        \item[(\namedlabel{prop:CGM_DIC}{\makeprop{Dic\textsubscript{CG}}})] $\marginal_x \tau_x = \marginal_x \tau_{\overline x}=1$.
        \item[(\namedlabel{prop:CGM_TI}{\makeprop{Type\textsubscript{CG}}})] 
        (i) $\tau_f(S) = \tau_{\sigma f}(\sigma(S))$ and \\
        (ii) $\tau_f(S) = \tau_{\switchf{y}{f}}(S)$ for all $S \subseteq X$.
        \item[(\namedlabel{prop:CGM_SUD}{\makeprop{ModEC\textsubscript{CG}}})] $\marginal_x \tau_f \geq \marginal_x \tau_h$ if \\
        (i) $f$ and $h$ are monotonically modular in $g$,  \\
        (ii) $\cofactor f {\ass g 1} \geq \cofactor h {\ass g 1}$ and $\cofactor h {\ass g 0} \geq \cofactor f {\ass g 0}$ and  \\
        (iii) $x \in \dep(g)$.
    \end{enumerate}
    We call $\tau$ \emph{unbiased} if $\tau_g = \tau_{\overline g}$ for all $g\in\boolfuncs$.
\end{definition}

An example is the \emph{characteristic} CGM $\zeta$ given by $\zeta_f(S) = f(\idf{S})$, where $\idf{S}(x) = 1$ iff $x \in S$.
We study various importance-inducing CGMs in the following sections.
Note that $\zeta$ is not importance inducing: for example, it violates \ref{prop:CGM_B} since $\marginal_x \zeta_f(\varnothing)=-1$ for $f = \overline x$.

The restriction to \emph{importance-inducing} CGMs ensures that compositions with the Banzhaf or 
Shapley value are valid IVFs (\Cref{thr:cgmsInduceIVFs}). These CGMs satisfy 
properties that are related to \Cref{def:importanceValueFunction}: $\tau_f$ should be 
monotone ($0 \leq \marginal_x \tau_f$), irrelevant variables of $f$ are also irrelevant for 
$\tau_f$ (\ref{prop:CGM_DUM}), etc. In an analogous fashion, we can think of properties related to~\Cref{def:optionalProperties}:

\begin{definition}
    \label{def:coalitionGameMappingOptional}
    A CGM $\tau$ is called
    \begin{itemize}%
        \item \emph{chain-rule decomposable}, if for all $f,g\in\boolfuncs$ such that $f$ is modular in $g$ and $x\in \dep(g)$:
        	\begin{center}
				$\marginal_x \tau_f = (\marginal_x \tau_g) (\marginal_g \tau_f),$
            \end{center}
	        where $\marginal_g \tau_f = \marginal_{x_g} \tau_{f[g/x_g]}$ for some $x_g \not\in \dep(f)$.
	    	We call $\tau$ \emph{weakly cain-rule decomposable} if this holds for all cases where $f$ is monotonically modular in $g$.
        \item \emph{derivative dependent}, if for all $f,g\in\boolfuncs$, $x\in X$
			\begin{center}
				$\Adiff_x f \geq \Adiff_x g \implies \marginal_x \tau_f \geq \marginal_x \tau_g.$
            \end{center}
    \end{itemize}
    
\end{definition}

Since (weak) rank-preservation for value functions uses an IVF in its premise, it cannot be stated naturally at 
the level of CGMs. 
Let us now define the following abstraction, which captures Shapley and Banzhaf values:

\begin{definition}%
    Call $\ImpCG \colon X \times \cgs \rightarrow \R, (x,v) \mapsto \ImpCG_x(v)$ 
    a \emph{value function for cooperative games}. Call $\ImpCG$
    an \emph{expectation of contributions} if there are weights $c(0),\ldots,c(n{-}1) \in \R$
    such that for all $v \in \cgs$ and $x\in X$:
    \begin{align*}
        \sum_{S \subseteq X \setminus \{ x \}} \hspace*{-0.75em} 
        c(|S|) = 1 \quad
        \text{and}\quad \ImpCG_x(v) = \hspace*{-0.75em}\sum_{S \subseteq X \setminus \{ x \}} \hspace*{-0.75em} c(|S|) \,{\cdot}\, \marginal_x v(S).
    \end{align*}
\end{definition}
If $\ImpCG$ is an expectation of contributions, then $\ImpCG_x(v)$ is indeed the expected 
value of $\marginal_x v(S)$ in which every $S \subseteq X \setminus \{x \}$ has 
probability $c(|S|)$. The \emph{Banzhaf} and \emph{Shapley values} are defined as the
expectations of contributions with weights:\\
\begin{tabular}{lcl}\\[-1em]
    $\cbanzhaf(k) = \textstyle\frac{1}{2^{n{-}1}}$ $(\Banzhaf)$ \hspace{1em} and \hspace{1em}
    $\cshapley(k) = \textstyle\frac{1}{n} \binom{n{-}1}{k}^{-1}$ $(\Shapley)$. \\[.4em]
\end{tabular}
Observe that there are $\binom{n{-}1}{k}$ sets of size $k \in \{ 0,\dots,n{-}1\}$, so the weights of the Shapley value indeed sum up to one.

If $\tau$ is a CGM, then its composition with $\ImpCG$
yields $(\ImpCG \circ \tau)_x(f) = \ImpCG_x(\tau_f)$, which is a value function for Boolean functions.
Then every composition with an expectation of contributions is an IVF
if the CGM is importance inducing:

\begin{apxlemmarep}
    \label{thr:cgmsInduceIVFs}
    If $\tau$ is an importance-inducing CGM and $\ImpCG$ an expectation of contributions, then $\ImpCG \circ \tau$ is an IVF. If $\tau$ is unbiased/derivative dependent, then so is $\ImpCG \circ \tau$. 
    Finally, if $\tau$ is (weakly) chain-rule decomposable, then so is $\Banzhaf \circ \tau$.
\end{apxlemmarep} 
\begin{appendixproof}
    Let $c(0),\dots,c(n{-}1) \in \R_{\geq 0}$ be $\ImpCG$'s weights. Let us first show that $\ImpCG \circ \tau$ is indeed an importance value function.
    \begin{enumerate}[leftmargin=1.5cm]
        \item[\ref{prop:B}] Suppose $f$ is a Boolean function and $x$ a variable. We want to show $0 \leq (\ImpCG \circ \tau)_x(f) \leq 1$. First, $0 \leq (\ImpCG \circ \tau)_x(f)$ follows from the fact that $\ImpCG \circ \tau$ is a linear combination of non-negative terms due to $c$'s non-negativity and $0 \leq \marginal_x \tau_f$. Since $\marginal_x \tau_f \leq 1$,
        \[ %
            (\ImpCG \circ \tau)_x(f) 
            \leq \textstyle\sum_{S \subseteq X \setminus \{ x \}} c(|S|)   
            = 1.
        \] %

        \item[\ref{prop:DUM}] Suppose $x\not\in\dep(f)$. We want to show $(\ImpCG \circ \tau)_x(f) = 0$. Since in that case $\marginal_x \tau_f(S) = 0$ for all $S \subseteq X$ due to \ref{prop:CGM_DUM}, mind that $(\ImpCG \circ \tau)_x(f)$ must also be zero since it is a linear combination of $\marginal_x \tau_f(S)$ for $S \subseteq X \setminus \{ x \}$.
        \item[\ref{prop:DIC}] Suppose $x$ is a variable. We want to show $(\ImpCG \circ \tau)_x(x) = 1$. $(\ImpCG \circ \tau)_x(\overline x) = 1$ then follows from \ref{prop:TI}. Since $\tau$ satisfies \ref{prop:CGM_DIC}, note that $\marginal_x \tau_x(S) = 1$ for all $S \subseteq X$. Now use the fact that $(\ImpCG \circ \tau)_x(f)$ is a convex combination of $\marginal_x \tau_x(S) = 1$'s, which implies $(\ImpCG \circ \tau)_x(f) = 1$.

        \item[\ref{prop:TI}] Suppose $f$ is a Boolean function, $x,y$ variables and $\sigma$ is a permutation. We want to show that (i) $(\ImpCG \circ \tau)_x(f) = (\ImpCG \circ \tau)_{\sigma(x)}(\sigma f)$ and (ii) $(\ImpCG \circ \tau)_x(f) = (\ImpCG \circ \tau)_x(\switchf{y}{f})$. 
        
        For (i), note that $\marginal_x \tau_f(S) = \marginal_{\sigma(x)} \tau_{\sigma f}(\sigma(S))$ due to \ref{prop:CGM_TI}. Thus, since $|S| = |\sigma(S)|$,
        \begin{align*}
            (\ImpCG \circ \tau)_x(f) 
            &= \textstyle\sum_{S \subseteq X \setminus \{ x \}} c(|\sigma(S)|) \marginal_{\sigma(x)}  \tau_{\sigma f}(\sigma(S)) \\
            &= \textstyle\sum_{S \subseteq X \setminus \{ \sigma(x) \}} c(|S|) \marginal_{\sigma(x)} \tau_{\sigma f}(S) \\
            &= (\ImpCG \circ \tau)_{\sigma(x)}(\sigma f),
        \end{align*}
        as the mapping $S \mapsto \sigma(S)$ constitutes a bijection that takes $2^{X\setminus \{x \}}$ to $2^{X \setminus \{ \sigma(x)\}}$. 
        
        For (ii), note that $\tau_f = \tau_{\switchf{y}{f}}$ implies $\marginal_x \tau_f = \marginal_x \tau_{\switchf{y}{f}}$, and thus $(\ImpCG \circ \tau)_x(f) = (\ImpCG \circ \tau)_x(\switchf{y}{f})$. 

        \item[\ref{prop:SUD}] We can see that the premise, i.e. the requirements that $f$ encapsulates $h$ on $g$ and $x \in \dep(g)$, is exactly the same as in \ref{prop:CGM_SUD}. Therefore, it suffices to pick such $f,g,h$ and $x$ and then apply \ref{prop:CGM_SUD} to obtain $\marginal_x \tau_f \geq \marginal_x \tau_h$. Using non-negativity of $c$, we can see that $(\ImpCG \circ \tau)_x(f) \geq (\ImpCG \circ \tau)_x(h)$ must indeed hold.
    \end{enumerate}
    For the second part, 
    \begin{itemize}
        \item If $\tau$ is unbiased, then so is $\ImpCG \circ \tau$: Suppose $g$ is a Boolean function and $x$ a variable. We can clearly see that $(\ImpCG \circ \tau)_x(g) = (\ImpCG \circ \tau)_x(\overline g)$ must hold due to the fact that $\marginal_x \tau_g = \marginal_x \tau_{\overline g}$ is implied by $\tau_g = \tau_{\overline g}$.
        
        \item If $\tau$ is derivative dependent, then so is $\ImpCG \circ \tau$: Let $f,h$ be Boolean functions such that $\Adiff_x f \geq \Adiff_x h$. Since $\tau$ is derivative dependent, we can see that $\marginal_x \tau_f \geq \marginal_x \tau_h$ must hold. Use $c$'s non-negativity to see that $c(|S|) \marginal_x \tau_f(S) \geq c(|S|) \marginal_x \tau_h(S)$ is the case for all $S \subseteq X \setminus \{x \}$. Application of $\ImpCG$'s definition shows $(\ImpCG \circ \tau)_x(f) \geq (\ImpCG \circ \tau)_x(h)$.
       
        \item If $\tau$ is (weakly) chain-rule decomposable, then so is $\Banzhaf \circ \tau$: First note that if $v$ is a cooperative game and $\dep(v) \subseteq R$, then $\E_{S \subseteq X}[ v(S) ] = \E_{S \subseteq R}[ v(S) ]$. This follows from \Cref{thr:decompositionOfExpectations}, when viewing $2^X$ as $\{0,1\}^X$. Denote this by $(*)$.
        Let us now show the original claim. Suppose that $f$ is (monotonically) modular in $g$ and $x \in \dep(g)$. Then 
        \begin{align*}
            (\Banzhaf \circ \tau)_x(f) 
            &= \E_{S \subseteq X \setminus \{ x \}} [(\marginal_x \tau_f)(S)] 
                && \text{definition} \\
            &= \E_{S \subseteq X} [(\marginal_x \tau_g)(S) \cdot (\marginal_g \tau_f)(S)] 
                && \text{(weak) chain rule decomposability, } (*) \\
            &= \E_{S \subseteq \dep(g)} [\E_{T \subseteq X \setminus \dep(g)}[(\marginal_x \tau_g)(S\cup T) \cdot (\marginal_g \tau_f)(S \cup T)]]
                && \text{uniform distribution} \\
            &= \E_{S \subseteq \dep(g)} [\E_{ T \subseteq X \setminus \dep(g)}[(\marginal_x \tau_g)(S) \cdot (\marginal_g \tau_f)(T)] ]
                && \text{\ref{prop:CGM_DUM}} \\
            &= \E_{S \subseteq \dep(g)}[(\marginal_x \tau_g)(S)] \cdot \E_{T \subseteq X \setminus \dep(g)}[(\marginal_g \tau_f)(T)] 
                && \text{independent random variables} \\
            &= \E_{S \subseteq X}[(\marginal_x \tau_g)(S)] \cdot \E_{T \subseteq X}[(\marginal_g \tau_f)(T)] 
                && (*) \\
            &= \E_{S \subseteq X \setminus \{ x \}}[(\marginal_x \tau_g)(S)] \cdot \E_{T \subseteq X \setminus \{ x_g\}}[(\marginal_{x_g} \tau_{f[g/x_g]})(T)] 
                && (*) \\
            &= (\Banzhaf \circ \tau)_x(g) \cdot (\Banzhaf \circ \tau)_{g}(f).
        \end{align*}
        The fourth equality is true since
        \begin{align*}
          \dep(g) \cap T = \varnothing 
          &\implies
            \dep(\tau_g) \cap T = \varnothing 
            && \dep(\tau_g) \subseteq \dep(g) \text{ by } \ref{prop:CGM_DUM} \\
          & \implies
            \dep(\marginal_x \tau_g) \cap T = \varnothing 
            && \dep(\marginal_x \tau_g) \subseteq \dep(\tau_g) \\
          &\implies 
            \marginal_x \tau_g(R \cup \{ z \}) = \marginal_x \tau_g(R \setminus \{ z \}) 
            \quad\forall R \subseteq X, z \in T \\
          &\implies
            \marginal_x \tau_g(S \cup T) = \marginal_x \tau_g(S).
        \end{align*} 
        The identity $\marginal_g \tau_f(S \cup T) = \marginal_g \tau_f(T)$ applies for a similar reason. Set $\ell = f[g/x_g]$. Then:
        \begin{align*}
          (X\setminus \dep(g)) \cap S = \varnothing 
          &\implies
            \dep(\ell) \cap S = \varnothing 
            && \dep(\ell) \subseteq X \setminus \dep(g) \\
          &\implies
            \dep(\tau_\ell) \cap S = \varnothing 
            && \dep(\tau_\ell) \subseteq \dep(\ell) \text{ by } \ref{prop:CGM_DUM} \\
          & \implies
            \dep(\marginal_{x_g} \tau_\ell) \cap S = \varnothing 
            && \dep(\marginal_{x_g} \tau_\ell) \subseteq \dep(\tau_\ell) \\
          &\implies 
            \marginal_{x_g} \tau_\ell(R \cup \{ z \}) = \marginal_{x_g} \tau_\ell(R \setminus \{ z \}) 
            \quad\forall R \subseteq X, z \in S \\
          &\implies
            \marginal_{x_g} \tau_\ell(S \cup T) = \marginal_{x_g} \tau_{\ell}(T) \\
          &\implies 
            \marginal_{g} \tau_f(S \cup T) = \marginal_{g} \tau_{f}(T).
        \end{align*}
        For the seventh equality, note that we can apply $(*)$ due to $x\not\in \dep(\marginal_x \tau_g)$ and $x_g \not\in \dep(\marginal_{x_g} \tau_{\ell})$.
    \end{itemize}
\end{appendixproof}

In the following sections, we study two novel 
and the already-known CGM of 
\cite{hammerEvaluationStrengthRelevance2000}.
By \Cref{thr:cgmsInduceIVFs} we can focus on their 
properties as CGMs, knowing that any composition with the Shapley value or other expectations of contributions will induce IVFs.

\subsubsection{Simple\,Satisfiability-Biased\,Cooperative\,Game\,Mappings}
\begin{toappendix}
    \subsubsection*{Simple Satisfiability-Biased Cooperative Game Mappings} 
\end{toappendix}

The first CGM interprets the ``power'' of a coalition as its ability to force a function's outcome 
to one:
If there is an assignment for a set of variables that yields outcome one \emph{no matter} the values 
of other variables, we assign this set a value of one, and zero otherwise.

\begin{definition}
  \label{def:dominatingCoalitionGameMapping}
  The \emph{dominating CGM} $\omega$ is defined as
  \begin{center}
    $
    \omega_f(S) = \begin{cases}
      1 & \text{if } \exists \uf \in \{0,1\}^S.\; \forall \wf \in \{0,1\}^{X \setminus S}.\; f(\uf; \wf). \\
      0 & \text{otherwise.}
    \end{cases}
    $
  \end{center}
\end{definition}

\begin{example*}
 Let $f = x \lor (y \oplus z)$. We have $\omega_f(\{y,z\}) = 1$ since $\cofactor f \uf = 1$ for $\uf = \ass y 1;\ass z 0$. On the other hand, $\omega_f(\{y\})=0$, since $\ass x 0; \ass z 1$ resp. $\ass x 0;\ass z 0$ falsify $\cofactor f {\ass y 1}$ and $\cofactor f {\ass y 0}$.
\end{example*}

\begin{toappendix}
  If $\omega$ is the dominating cooperative game mapping (cf. Definition \ref{def:dominatingCoalitionGameMapping}), then we will also sometimes use the following inequalities, which are easy to show: If $s$ and $t$ are Boolean functions such that $s \geq t$, then $\omega_s \geq \omega_t$. Furthermore, one can then see that $\omega_s \lor \omega_t = \omega_s$. The following decompositions are helpful as well:

  \begin{apxlemma}
    \label{thr:dcgmDecompositionVariables}
    Suppose $f$ is a Boolean function. Then for all $S \subseteq X$ and variables $z$,
    \begin{align}
       & \omega_f(S \cup \{ z \}) =  (\omega_{\cofactor f {\ass z 1}} \lor \omega_{\cofactor f {\ass z 0}})(S)
      \label{eq:dcgmDecompositionUnion}                                        \\
       & \omega_f(S \setminus \{ z \}) =  \omega_{\cofactor f {\ass z 1} \land \cofactor f {\ass z 0}}(S)
      \label{eq:dcgmDecompositionDifference}
    \end{align}
  \end{apxlemma}
  \begin{proof}
    For \eqref{eq:dcgmDecompositionUnion}, note that
    \begin{align*}
      \omega_f(S \cup \{ z \})
       & = \exists c \in \{0,1\}.
      \exists \uf \in \{0,1\}^{S \setminus \{ z \}}.
      \forall \vf \in \{0,1\}^{X \setminus (S \cup \{ z \})}. f(\ass z c; \uf; \vf). \\
       & = \exists c \in \{0,1\}.
      \exists \uf \in \{0,1\}^{S}.
      \forall \vf \in \{0,1\}^{X \setminus S}. \cofactor f {\ass z c}(\uf; \vf).               \\
       & = \omega_{\cofactor f {\ass z 1}}(S) \lor \omega_{\cofactor f {\ass z 0}}(S).
    \end{align*}
    And for \eqref{eq:dcgmDecompositionDifference},
    \begin{align*}
      \omega_f(S \setminus \{ z \})
       & =
      \exists \uf \in \{0,1\}^{S \setminus \{ z \}}.
      \forall \vf \in \{0,1\}^{X \setminus (S \cup \{ z \})}.
      \forall c \in \{0,1\}.
      f(\ass z c; \uf; \vf).                                                                  \\
       & =
      \exists \uf \in \{0,1\}^{S}.
      \forall \vf \in \{0,1\}^{X \setminus S}. \cofactor f {\ass z 1}(\uf; \vf) \land \cofactor f {\ass z 0}(\uf;\vf). \\
       & = \omega_{\cofactor f {\ass z 1} \land \cofactor f {\ass z 0}}(S).
    \end{align*}
  \end{proof}

  The following decomposition of $\omega$ might be useful even in other contexts. (For example, for solving problems in $\Sigma^\mathtt{P}_2$.) Note that the decomposition provided in \Cref{thr:dcgmDecomposition} is essentially a case-distinction that distinguishes between all four values that $\omega_g(S)$ and $\omega_{\overline g}(S)$ can take.

  \begin{apxlemma}
    \label{thr:dcgmDecomposition}
    If $f$ is modular in $g$ with positive and negative cofactor $s$ and $t$, then
    \[ %
      \omega_f 
      = \omega_g \omega_{\overline g} (\omega_s \lor \omega_t)
      \lor \omega_g \overline{\omega_{\overline g}} \omega_s
      \lor \overline{\omega_g} \omega_{\overline g} \omega_t
      \lor \overline{\omega_g}\, \overline{\omega_{\overline g}} \omega_{s t}.
    \]  %
    Moreover, if $f$ is monotonically modular in $g$, then this simplifies to
    $ %
      \omega_g \omega_s \lor \overline{\omega_g} \omega_t.
    $ %
  \end{apxlemma}
  \begin{proof}
    Choose $S \subseteq X$ arbitrarily and define $U = S \cap (X \setminus \dep(g))$ and $V = S \cap \dep(g)$. This partitions $S$ into $S = U \cup V$ such that $s$ and $t$ only depend on variables in $U$ and $g$ only depends on variables in $V$. Thus,
    \begin{align*}
      \omega_f(S) = 1 \Longleftrightarrow \text{ there are } \uf \in \{0,1\}^{U}, \vf \in \{0,1\}^{V} \text{ such that } \cofactor f {\uf;\vf} = \cofactor g {\vf} \cofactor s {\uf} \lor \cofactor {\overline g} {\vf} \cofactor t {\uf} = 1.
    \end{align*}
    Furthermore, let us define:
    \begin{align*}
      v = \omega_g \omega_{\overline g} (\omega_s \lor \omega_t)
      \lor \omega_g \overline{\omega_{\overline g}} \omega_s
      \lor \overline{\omega_g} \omega_{\overline g} \omega_t
      \lor \overline{\omega_g}\, \overline{\omega_{\overline g}} \omega_{s t}.
    \end{align*}
    We want to establish $\omega_f = v$ by showing both directions $\omega_f(S) \leq v(S)$ and $\omega_f(S) \geq v(S)$:

    \noindent For ``$\leq$'', assume $\omega_f(S)= 1$, i.e. the existence of $\uf \in \{0,1\}^{U}, \vf \in \{0,1\}^{V}$ such that $\cofactor g \vf \cofactor s \uf \lor \cofactor {\overline g} \vf \cofactor t \uf = 1$. Let us distinguish into the following cases,
    \begin{enumerate}[leftmargin=3.5cm]
      \item[$\cofactor g \vf$ is one everywhere:] Then $\cofactor s \uf$ must be one everywhere. But this implies  $\omega_g \omega_s(S) = 1$ and therefore $v(S) = 1$.
      \item[$\cofactor g \vf$ is zero everywhere:] Then $\cofactor t \uf$ and $\cofactor {\overline g} {\vf}$ must be one everywhere. But this implies $\omega_{\overline g} \omega_t(S) = 1$ and therefore $v(S) = 1$.
      \item[$\cofactor g \vf$ is not constant:] Then there are $\wf_1, \wf_2$ over $\dep(g) \setminus V$ such that $\wf_1$ satisfies and $\wf_2$ falsifies $\cofactor g \vf$. However, due to $\cofactor f {\uf;\vf;\wf_1} = \cofactor f {\uf; \vf; \wf_2} = 1$, note that the first case implies $\cofactor s \uf = 1$ and the second $\cofactor t \uf=1$. Thus, $\omega_{st}(U) = 1$, which also establishes $\omega_s(U) = \omega_t(U) = (\omega_s \lor \omega_t)(U) = 1$ due to $s t \leq s$, $st \leq t$ and $st \leq s \lor t$. This implies $v(S) = 1$ as well.
    \end{enumerate}

    \noindent For ``$\geq$'', assume $v(S) = 1$, and let us distinguish into the following cases,
    \begin{enumerate}[leftmargin=3cm]
      \item[$\omega_g \omega_{\overline g}(V) = 1$:] Then $(\omega_s \lor \omega_t)(U) =1$. If $\omega_s(U) = 1$, then $\omega_{g s}(S)= 1$ and $gs \leq f$ implies $\omega_f(S) = 1$. If $\omega_t(U) = 1$, then $\omega_{\overline g t}(S) = 1$ and $\overline g t \leq f$ implies $\omega_f(S) = 1$.
      \item[$\omega_g \overline{\omega_{\overline g}}(V) = 1$:] Then $\omega_s(U) = 1$. By the same argument as above, note that $\omega_{gs}(S) = 1$, which implies $\omega_f(S) = 1$.
      \item[$\overline{\omega_g} \omega_{\overline g}(V) = 1$:] Symmetric.
      \item[$\overline{\omega_g}\, \overline{\omega_{\overline g}}(V) = 1$:] Then $\omega_{st}(U) = 1$. Note that $st \leq f$, which implies $\omega_f(S)= 1$.
    \end{enumerate}

    \noindent For the simplified formula, suppose that $f$ is monotonically modular in $g$. Mind that $s \geq t$ implies $\omega_s \lor \omega_t = \omega_s$ and $\omega_{st} = \omega_t$. But then $v$ reduces to
    $
      \omega_g \omega_s \lor \overline{\omega_g} \omega_t.
    $
  \end{proof}
\end{toappendix}

\begin{toappendix}
  We prove the following proposition using \Cref{thr:dcgmDecompositionVariables}, by checking that every property is satisfied.
\end{toappendix}

\begin{apxtheoremrep}
  \label{thr:dcgmImportanceInducing}
  The dominating CGM is weakly chain-rule decomposable and importance inducing.
\end{apxtheoremrep}
\begin{appendixproof}
  \hfill
  \begin{enumerate}[leftmargin=2.5cm]
    \item[\ref{prop:CGM_B}] Let $f$ be a Boolean function and $x$ be a variable. We want to show $0 \leq \marginal_x \omega_f \leq 1$. Since $\omega_f(S) \in \{0,1\}$ for all $S \subseteq X$, we can easily see that $\marginal_x \omega_f \leq 1$. Furthermore, note that for all $S \subseteq X$, by \Cref{thr:dcgmDecompositionVariables},
    \begin{align*}
      \omega_f(S \cup \{ x \}) = (\omega_{\cofactor f {\ass x 1}} \lor \omega_{\cofactor f {\ass x 0}})(S) \geq \omega_{\cofactor f {\ass x 1}}(S) \geq \omega_{\cofactor f {\ass x 1} \land \cofactor f {\ass x 0}}(S) = \omega_f(S \setminus \{ x \}),
    \end{align*}
    which implies $0 \leq \marginal_x \omega_f$.
    
    \item[\ref{prop:CGM_DIC}] Let $x$ be a variable. We want to show $\marginal_x \omega_x = 1$. $\marginal_x \omega_{\overline x} = 1$ then follows from \ref{prop:CGM_TI}. Note that $\omega_x(S) = 1$ if and only if $x \in S$ for $S \subseteq X$. Thus, $\marginal_x \omega_x(S) = 1$ for all $S \subseteq X$.
    
    \item[\ref{prop:CGM_DUM}] Let $f$ be a Boolean function and $x \not\in \dep(f)$ be a variable. We want to show $\marginal_x \omega_f = 0$. Since $f = \cofactor f {\ass x 1} = \cofactor f {\ass x 0}$, we simply apply \Cref{thr:dcgmDecompositionVariables} and get $\omega_f(S \cup \{ x \}) = \omega_f(S\setminus \{ x \})$ and thus $\marginal_x \omega_f (S)= 0$ for all $S \subseteq X$.
    
    \item[\ref{prop:CGM_TI}] Suppose $f$ is a Boolean function, $\sigma$ a permutation and $x,y$ variables. Note that for $\omega_f(S) = \omega_{\sigma f}(\sigma (S))$, 
          \begin{align*}
            \omega_{\sigma f}(S)
             & = \exists \uf \in\{0,1\}^S. \forall \vf \in \{0,1\}^{X \setminus S}.
            \sigma f( \uf;\vf )                                                                 \\
             & = \exists \uf \in\{0,1\}^S. \forall \vf \in \{0,1\}^{X \setminus S}.
            f(
            \sigma^{-1}\uf;
            \sigma^{-1}\vf )                                                  \\
             & = \exists \uf \in \{0,1\}^{\sigma^{-1}(S)}. \forall \vf \in \{0,1\}^{X \setminus \sigma^{-1}(S)}.
            f(\uf; \vf)                                                                                          
              && \eqref{eq:perm-prop-2} \\
             & = \omega_{f}(\sigma^{-1}(S))
          \end{align*}
          This is the same as asking for $\omega_f(S) = \omega_{\sigma f}(\sigma(S))$ for all $S \subseteq X$. For $\omega_f(S) = \omega_{\switchf{y}{f}}(S)$, note that we can apply \eqref{eq:perm-prop-2} for $\uf \mapsto \switch{y}{\uf}$ in an analogous way.

    \item[\ref{prop:CGM_SUD}] Suppose that (i) $f$, $h$ are monotonically modular in $g$ such that (ii) $\cofactor f {\ass g 1} \geq \cofactor h {\ass g 1}$ and $\cofactor h {\ass g 0} \geq \cofactor f {\ass g 0}$ and (iii) $x \in \dep(g)$. We want to show $\marginal_x \omega_f \geq \marginal_x \omega_h$. By \Cref{thr:dcgmDecomposition},
          \begin{align*}
            \omega_f
             & = \omega_g \omega_{\cofactor f {\ass g 1}} + (1-\omega_g) \omega_{\cofactor f {\ass g 0}}
          \end{align*}
          Therefore,
          \begin{align*}
            \marginal_x \omega_f
             & = (\marginal_x \omega_g) (\omega_{\cofactor f {\ass g 1}}
            - \omega_{\cofactor f {\ass g 0}}),                          \\
            \marginal_x \omega_h
             & = (\marginal_x \omega_g) (\omega_{\cofactor h {\ass g 1}}
            - \omega_{\cofactor h {\ass g 0}}).
          \end{align*}
          By using $\omega_{\cofactor f {\ass g 1}} \geq \omega_{\cofactor h {\ass g 1}}$ and $\omega_{\cofactor h {\ass g 0}} \geq \omega_{\cofactor f {\ass g 0}}$ and $\omega_{\cofactor f {\ass g 1}} \geq \omega_{\cofactor f {\ass g 0}}$ and $\omega_{\cofactor h {\ass g 1}} \geq \omega_{\cofactor h {\ass g 0}}$ (due to (i) and (ii)) and $\marginal_x \omega_g \geq 0$, we obtain $\marginal_x \omega_f \geq \marginal_x \omega_h$.

  \end{enumerate}
  Finally, to see that $\omega$ is weakly chain rule decomposable, suppose that $f$ is monotonically modular in $g$ and $x \in \dep(g)$. The same decomposition as in the proof for \ref{prop:CGM_SUD} applies, i.e.
  \begin{align*}
    \marginal_x \omega_f
      & = (\marginal_x \omega_g) (\omega_{\cofactor f {\ass g 1}}
    - \omega_{\cofactor f {\ass g 0}}).
  \end{align*}
  Denote $\ell = f[g/x_g]$ and note that for all $S \subseteq X$, since $\cofactor \ell {x_\ass g 1} = \cofactor f {\ass g 1} \geq \cofactor f {\ass g 0} = \cofactor \ell {x_\ass g 0}$ and due to \Cref{thr:dcgmDecompositionVariables},
  \begin{align*}
      & \omega_{\cofactor f {\ass g 1}}(S) = (\omega_{\cofactor \ell {x_\ass g 1}} \lor \omega_{\cofactor \ell {x_\ass g 0}})(S) = \omega_{\ell}(S \cup \{ x_g \}) \\
      & \omega_{\cofactor f {\ass g 0}}(S) = \omega_{\cofactor \ell {x_\ass g 0} \land \cofactor \ell {x_\ass g 1}}(S) = \omega_{\ell}(S \setminus \{ x_g \}),
  \end{align*}
  which suffices to show $\omega_{\cofactor f {\ass g 1}} - \omega_{\cofactor f {\ass g 0}} = \marginal_{x_g} \omega_\ell = \marginal_g \omega_f$. Thus, $\marginal_x \omega_f = (\marginal_x \omega_g) (\marginal_g \omega_f)$.
\end{appendixproof}

\begin{example*}
Let $\ZeroVF$ be the expectation of contributions with $c(0)=1$, i.e.,
    $ \ZeroVF_x(v) = v(\{x\}) - v(\varnothing).$
  By \Cref{thr:cgmsInduceIVFs} and \Cref{thr:dcgmImportanceInducing}, the mapping 
  \begin{center}
    $(\ZeroVF \circ \omega)_x(f) = \begin{cases} 
      1 & \text{if } f \neq 1 \text{ and } \cofactor f {\ass x 0} = 1 \text{ or } \cofactor f {\ass x 1} = 1 \\
      0 & \text{otherwise}
    \end{cases}$
  \end{center}
	is an IVF.
  Intuitively, $x$ has the highest importance if the function is 
  falsifiable and there is a setting for $x$ that forces the function to one.
  Otherwise, $x$ has an importance of zero.
\end{example*}

\paragraph{Biasedness and rank preservation.}
The dominating CGM is biased:
Consider $g = x \lor (y \oplus z)$ with $\overline g = \overline x \land (\overline y \oplus z)$. 
Note that $\omega_g(S) = 1$ for $S = \{ x \}$ while $\omega_{\overline g}(S) = 0$, which shows biasedness.
Composing $\omega$ with the Banzhaf value yields
  \begin{center}
    $(\Banzhaf \circ \omega)_{(\cdot)}(g) : \quad z : 0.25 \; = \; y : 0.25 \; < \; x : 0.75$, \\
   $(\Banzhaf \circ \omega)_{(\cdot)}(\overline g) : \quad z : 0.25 \; =  \;y : 0.25 \; =  \;x : 0.25$,
  \end{center}
One can force $g$ to one by controlling either $x$ or \emph{both} $y$ and $z$,
so $x$ is rated higher than the others.
But to force $\overline g$ to one, control over all variables is required, 
so all variables in $\overline g$ have the same importance.

Since $g$ is modular in $\overline g$, we also obtain a counterexample for rank preservation:
\begin{align*}
& (\Banzhaf \circ \omega)_y(\overline g) \geq (\Banzhaf \circ \omega)_x(\overline g) \\
\text{does not imply} \quad &(\Banzhaf \circ \omega)_y(g) \geq (\Banzhaf \circ \omega)_x(g).
\end{align*}
However, \emph{weak} rank preservation is fulfilled by $\Banzhaf \circ \omega$ 
since it is weakly chain-rule decomposable by \Cref{thr:dcgmImportanceInducing} and \Cref{thr:cgmsInduceIVFs}. Then the claim follows with \Cref{thr:propertyRelationsOfVFs}.

\paragraph{A dual to the dominating CGM.}
One can think of a dual notion of the CGM $\omega$ that reverses the order of both quantifiers. 
Intuitively, we are now allowed to 
choose an assignment \emph{depending} on the values of the remaining variables:

\begin{definition}
  The \emph{rectifying CGM} $\nu$ is defined as
  \begin{center}
    $
    \nu_f(S) = \begin{cases}
      1 & \text{if } \forall \wf \in \{0,1\}^{X \setminus S}.\; \exists \uf \in \{0,1\}^S.\; f(\uf; \wf). \\
      0 & \text{otherwise.}
    \end{cases}
    $
  \end{center}
\end{definition} 

If we compose $\nu$ with an expectation of contributions that
satisfies $c(k) = c(n{-}1{-}k)$ for all $k \in \{0,\dots,n{-}1\}$,
which is a condition satisfied both by the Shapley and Banzhaf values,
the induced importance of a variable equals its importance w.r.t. $\omega$ and the negated function:

\begin{toappendix}
  The proof of \Cref{thr:dcgmFunctionVsNegation} uses the fact that $\omega$ can be expressed in terms of $\nu$. The additional condition $c(k) = c(n{-}1{-}k)$ then allows us to rearrange the terms of the sum of the expectation of contributions:
\end{toappendix}

\begin{apxpropositionrep}
  \label{thr:dcgmFunctionVsNegation}
  Let $\ImpCG$ be an expectation of contributions with $c(k) = c(n{-}1{-}k)$ for all $k \in \{0,\dots,n{-}1\}$. 
  Then for all $g\in\boolfuncs$ and $x\in X$:
  \begin{center}
    $(\ImpCG \circ \omega)_x(g) = (\ImpCG \circ \nu)_x(\overline g)$
  \end{center}
\end{apxpropositionrep}
\begin{appendixproof}
  Note that $\omega_g(S) = 1 - \nu_{\overline g}(X \setminus S)$. Thus, $\marginal_x \omega_g(S) = \marginal_x \nu_{\overline g}(X \setminus S)$ and $\marginal_x \nu_{\overline g}(X \setminus S) = \marginal_x \nu_{\overline g}((X \setminus \{ x \}) \setminus S)$ since $\marginal_x \nu_{\overline g}$ does not depend on $x$. Therefore,
  \begin{align*}
    (\ImpCG \circ \omega)_x(g)
     & = \sum_{S \subseteq X \setminus \{ x \}} c(|S|) \marginal_x \omega_g(S)                                                                           \\
     & = \sum_{S \subseteq X \setminus \{ x \}}  c(|S|) \marginal_x \nu_{\overline g}(X \setminus S)                                                     \\
     & = \sum_{S \subseteq X \setminus \{ x \}}  c(|(X \setminus \{ x \}) \setminus S|) \marginal_x \nu_{\overline g}((X \setminus \{ x \}) \setminus S) \\
     & = \sum_{S \subseteq X \setminus \{ x \}}  c(|S|) \marginal_x \nu_{\overline g}(S)                                                                 \\
     & = (\ImpCG \circ \nu)_x(\overline g),
  \end{align*}
  due to the fact that $n{-}1-|S| = |(X \setminus \{ x \}) \setminus S|$ for $x \not\in S$ and that the 
  mapping $S \mapsto (X \setminus \{ x \}) \setminus S$ constitutes a bijection from $2^{X \setminus \{ x \}}$ to itself. (Note that $(X \setminus \{ x \}) \setminus S$ is just the complement of $S$ w.r.t. $X \setminus \{ x \}$.)
\end{appendixproof}

We now discuss %
connections to the influence. If a Boolean function is monotone, 
and we ``control'' a set of variables $S$, the best towards satisfaction (resp. falsification) 
is to set all variables in $S$ to one (resp. to zero).
This can be used to show that both 
$\Banzhaf \circ \omega$ and $\Banzhaf \circ \nu$ agree with the influence:
\begin{apxpropositionrep}
  \label{thr:dcgmAndInfluence}
  Let $f$ be a monotone Boolean function and $x$ a variable. Then
  $ %
    (\Banzhaf \circ \omega)_x(f) = (\Banzhaf \circ \nu)_x(f) = \Inf_x(f).
  $ %
\end{apxpropositionrep}
\begin{appendixproof}
  Recall that if $f$ is monotone in $z$, then by \Cref{thr:dcgmDecompositionVariables},
  \begin{align*}
     & \omega_f(S) = (\omega_{\cofactor f {\ass z 1}} \lor \omega_{\cofactor f {\ass z 0}})(S) = \omega_{\cofactor f {\ass z 1}}(S) = \omega_{\cofactor f {\ass z 1}}(S \setminus \{ z \}) && \text{if $z \in S$} \\
     & \omega_f(S) = \omega_{\cofactor f {\ass z 1} \land \cofactor f {\ass z 0}}(S) = \omega_{\cofactor f {\ass z 0}}(S) && \text{if $z \not\in S$}
  \end{align*}
  due to $z \not\in \dep(\cofactor f {\ass z 1})$. Since $f$ is monotone in all of its variables, we repeatedly apply this argument for all variables except for $x$ to get 
  \begin{align*}
    (\Banzhaf \circ \omega)_x(f)
     & = \frac{1}{2^{n{-}1}} \sum_{S \subseteq X \setminus \{ x \}} \omega_{f}(S \cup \{ x \}) - \omega_{f}(S)     \\
     & = \frac{1}{2^{n{-}1}} \sum_{S \subseteq X \setminus \{ x \}} \omega_{\cofactor f {\idf{S}}}(\{ x \}) - \omega_{\cofactor f {\idf{S}}}(\varnothing)     \\
     & = \frac{1}{2^{n{-}1}} \sum_{\uf \in \{0,1\}^{X \setminus \{ x \}}} \omega_{\cofactor f \uf}( \{ x \}) - \omega_{\cofactor f \uf}(\varnothing).
  \end{align*}
  By monotonicity of $f$, we have three options for $\cofactor f \uf$ if $\uf$ is an assignment over $X \setminus \{ x \}$: either $\cofactor f \uf = x$, $\cofactor f \uf = 0$ or $\cofactor f \uf = 1$. In the first case, $\omega_{\cofactor f \uf}(\{ x \}) = 1$ and $\omega_{\cofactor f \uf}(\varnothing) = 0$. In the last two, $\omega_{\cofactor f \uf}(\{ x \}) = \omega_{\cofactor f \uf}(\varnothing) = 0$. Therefore, $\omega_{\cofactor f \uf}(\{ x \}) - \omega_{\cofactor f \uf}(\varnothing) = \Adiff_x f(\uf)$, and we get $(\Banzhaf \circ \omega)_x(f) = \Inf_x(f)$. Finally, since (i) the influence does not distinguish between a function and its dual by \Cref{thr:propertyRelationsOfVFs}, (ii) the dual of a monotone boolean function is itself monotone, (iii) \ref{prop:TI} and (iv) \Cref{thr:dcgmFunctionVsNegation},
  \[ 
    \Inf_x(f) 
    \stackrel{\textrm{(i)}}{=} \Inf_x(\switchfset{X}{\overline f}{y}) 
    \stackrel{\textrm{(ii)}}{=} (\Banzhaf \circ \omega)_x(\switchfset{X}{\overline f}{y}) 
    \stackrel{\textrm{(iii)}}{=} (\Banzhaf \circ \omega)_x(\overline f) 
    \stackrel{\textrm{(iv)}}{=} (\Banzhaf \circ \nu)_x(f).
  \]
\end{appendixproof}

\subsubsection{A Constancy-Based Cooperative Game Mapping}
\begin{toappendix}
    \subsubsection*{A Constancy-Based Cooperative Game Mapping} 
\end{toappendix}

\newcommand{\timeout}{\textcolor{black!20}{timeout}}

\begin{table*}[h]
  \centering
    \resizebox{0.9\textwidth}{!}{%
    \begin{tabular}{crr rrr rrrr}
      \toprule
      & & & \multicolumn{3}{c}{(projected) model counting approaches} & \multicolumn{3}{c}{BDD-based approaches} \\ \cmidrule(r){4-6} \cmidrule(r){7-10}
        Instance & \#Variables & \#Clauses & Influence (CNF) & Influence (formula) & Blame & Construction & Influence & DCGM & Blame \\
        \midrule 
b02 & 26 & 66 & 5\,ms & 49\,ms & \timeout & 1\,ms & $<$1\,ms & 2\,ms & 3'649\,ms \\
b06 & 44 & 122 & 7\,ms & 99\,ms & \timeout & 3\,ms & $<$1\,ms & 6\,ms & 697'573\,ms \\
b01 & 45 & 120 & 7\,ms & 110\,ms & \timeout & 4\,ms & $<$1\,ms & 8\,ms & 3'068'667\,ms \\
b03 & 156 & 376 & 11\,ms & 442\,ms & \timeout & 53'934\,ms & 24\,ms & 1'776\,ms & \timeout \\
b13 & 352 & 847 & 34\,ms & 1'088\,ms & \timeout & \timeout & \timeout & \timeout & \timeout \\
b12 & 1'072 & 2'911 & 230\,ms & 8'555\,ms & \timeout & \timeout & \timeout & \timeout & \timeout \\
        \bottomrule
      \end{tabular}}
  \vspace{-.5em}
  \caption{\label{tab:iscas99-main}Computation time for instances of the ISCAS'99 dataset, timeout set to one hour. 
  BDD columns \emph{Influence}, \emph{DCGM} (construction of the BDD for the dominating CGM), and \emph{Blame} are \emph{without} 
  the BDD construction time for the initial CNF (cf. column \emph{Construction}).}
  \vspace{-.3em}
\end{table*}

Hammer, Kogan and Rothblum \cite{hammerEvaluationStrengthRelevance2000} (HKR) 
defined a CGM that measures the power of variables %
by how constant they make a function if assigned random values. 
It depends on the following notion of \emph{constancy measure}:

\begin{definition}\label{def:constancy}%
  We call a mapping $\kappa\colon [0,1] \rightarrow [0,1]$ a \emph{constancy measure} if 
  (i) $\kappa$ is convex, 
  (ii) $\kappa(0) = 1$, 
  (iii) $\kappa(x) = \kappa(1{-}x)$, and 
  (iv) $\kappa(\nicefrac{1}{2}) = 0$.
\end{definition}

\noindent The following functions are instances of constancy measures: 
  \begin{itemize}[noitemsep]
    \item $\kappaquad(a) = 4(a-\nicefrac{1}{2})^2$,
    \item $\kappalog(a) = 1 + a \mathrm{lb}(a) + (1{-}a) \mathrm{lb} (1{-}a)$ with $0 \mathrm{lb}(0) = 0$,
    \item $\kappaabs(a) = 2 | a-\nicefrac{1}{2} |$.
  \end{itemize}
For a constancy measure $\kappa$ and a Boolean function $f$, the \emph{$\kappa$-constancy of $f$} is the value $\kappa(\E[f])$, which measures how balanced the share of ones and zeros is.
It is close to one if $f$ is very unbalanced and close to zero if the share of zeros and ones in $f$ is (almost) the same. 
The power of a set of variables $S$ is now measured in terms of the expected $\kappa$-constancy of $f$ if variables in $S$ are fixed to random values:

\begin{definition}[\cite{hammerEvaluationStrengthRelevance2000}]
  Given a constancy measure $\kappa$, we define the CGM $\HKR^\kappa$ by
  \begin{center}
    $\HKR^\kappa_f(S) = \E_{\af \in \{0,1\}^S}[ \kappa( \E[\cofactor f \af] ) ].$
  \end{center}
\end{definition}

\begin{example*}
  Let $f = x \lor y\lor z$ and $S = \{ x \}$. We obtain 
  $
    \HKR^\kappa_f(S) = 
    \nicefrac 1 2 \cdot \kappa(\nicefrac 3 4) + \nicefrac 1 2 \cdot \kappa(1),
  $
  since
  \begin{center}
    $\E[\cofactor f {\ass x 0}] = \nicefrac 3 4 \quad\text{and}\quad \E[\cofactor f {\ass x 1}] = 1$.
  \end{center}
  Setting $x$ to zero does not determine $f$ completely, while setting it to one also sets $f$ to one, i.e., makes it constant.
  The measure then gives a lower value to the less-constant cofactor, a higher value to the more-constant cofactor and computes the average.
  For this example and $\kappa = \kappaabs$, we obtain $\HKR^\kappa_f(S) = \nicefrac 3 4$ due to
    $\kappa(\nicefrac 3 4) = \nicefrac 1 2$
    and $\kappa(1) = 1.$
\end{example*}

\Cref{thr:propertiesOfHKR} shows that $\HKR^\kappaquad$ is a chain-rule 
decomposable and importance-inducing CGM.
It is open whether other constancy measures 
are importance inducing too.

\begin{toappendix}
  The first four properties of importance inducing CGMs hold for Hammer, Kogan and Rothblum's CGM under all constancy measures $\kappa$, since we only have to rely on general properties of constancy measures. The ``trick'' for $\kappa = \kappaquad$ is that the quantity $\frac{1}{2} \kappa(a) + \frac{1}{2}\kappa(b) - \kappa(\frac{1}{2}a+\frac{1}{2}b)$ can be expressed as $(a-b)^2$, which in turn implies properties such as \ref{prop:CGM_SUD} for $\HKR^\kappa$:
\end{toappendix}

\begin{apxtheoremrep}
  \label{thr:propertiesOfHKR}
  Suppose $\kappa$ is a constancy measure. Then $\HKR^\kappa$ is an unbiased CGM that satisfies \ref{prop:CGM_B}, \ref{prop:CGM_DIC}, \ref{prop:CGM_DUM}, and \ref{prop:CGM_TI}. Further, $\HKR^\kappaquad$ is chain-rule decomposable and satisfies \ref{prop:CGM_SUD}.
\end{apxtheoremrep}
\begin{appendixproof}
  \hfill
  \begin{enumerate}[leftmargin=2.3cm]
    \item[Unbiasedness] Suppose that $g$ is a Boolean function and $x$ a variable. We want to show $\HKR^\kappa_g = \HKR^\kappa_{\overline g}$. Note that for all $S \subseteq X$ and assignments $\af$ over $S$,
          \begin{align*}
            \kappa(\E[\cofactor g \af]) = \kappa(\E[1-\cofactor {\overline{g}} \af]) = \kappa(1 - \E[\cofactor {\overline{g}} \af]) = \kappa(\E[\cofactor {\overline{g}} \af])
          \end{align*}
          due to $\kappa(a) = \kappa(1-a)$. But then it is easy to see that $\HKR^\kappa_g(S) = \HKR^\kappa_{\overline g}(S)$ for all $S \subseteq X$.

    \item[\ref{prop:CGM_B}] Suppose $f$ is a Boolean function and $x$ a variable. We want to show $0 \leq \marginal_x \HKR^\kappa_f \leq 1$. For $S \subseteq X$, note that $\HKR^\kappa_f(S)$ is a convex combination of terms limited by $0$ and $1$, so $\marginal_x \HKR^\kappa_f(S) \leq 1$. The requirement $\marginal_x \HKR^\kappa_f \geq 0$ states that $\HKR^\kappa_f$ is a monotone cooperative game, which was already established by \cite{hammerEvaluationStrengthRelevance2000}.

    \item[\ref{prop:CGM_DIC}] Suppose that $x$ is a variable. We want to show $\marginal_x \HKR^\kappa_x = 1$. $\marginal_x \HKR^\kappa_{\overline x} = 1$ then follows from \ref{prop:CGM_TI}. Note that if $S \subseteq X$ and $\af$ is an assignment over $S$ and $f=x$, it is the case that $\E[\cofactor f \af] = \frac{1}{2}$ if $x \not\in S$ and $\E[\cofactor f \af] \in \{0,1\}$ if $x \in S$. Thus, due to $\kappa(\frac{1}{2}) = 0$ and $\kappa(0) = \kappa(1) = 1$, we have $\marginal_{x} \HKR^\kappa_x(S) = 1$.

    \item[\ref{prop:CGM_DUM}] Suppose that $f$ is a Boolean function and $x \not\in \dep(f)$ a variable. Then for all $S \subseteq X$,
          \begin{align*}
            \marginal_x \HKR^\kappa_f(S)
             & = \E_{\af \in \{0,1\}^{S \setminus \{ x \}}} [
            \textstyle\frac{1}{2} \kappa( \E[\cofactor f {\ass x 1;\af}]) + \frac{1}{2} \kappa(\E[\cofactor f {\ass x 0; \af}]) - \kappa(\E[\cofactor f \af])
            ],
          \end{align*}
          where $\E[\cofactor f {\ass x 1; \af}] = \E[\cofactor f {\ass x 0;\af}] = \E[\cofactor f \af] $ due to $\cofactor f {\ass x 1} = \cofactor f {\ass x 0}$. But then $\marginal_x \HKR^\kappa_f(S) = 0$.

    \item[\ref{prop:CGM_TI}] Suppose that $f$ is a Boolean function, $\sigma$ a permutation and $x,y$ variables. We want to show (i) $\HKR^\kappa_f(S) = \HKR^\kappa_{\sigma f}(\sigma(S))$ and (ii) $\HKR^\kappa_f(S) = \HKR^\kappa_{\switchf{y}{f}}(S)$ for all $S \subseteq X$.

          For (i), take an arbitrary $S \subseteq X$ and assignment $\af$ over $S$. Then
          \begin{align*}
            \E_{\uf \in \{0,1\}^X}[ \cofactor {(\sigma f)} {\sigma \af}(\uf) ]
             & = \E_{\uf \in \{0,1\}^X}[ \cofactor f \af(\sigma^{-1} \uf) ]
             &                                       & \eqref{eq:perm-prop-4}                              \\
             & = \E_{\uf \in \{0,1\}^X}[ \cofactor f \af(\uf) ],
             &                                       & \text{uniform distribution, \eqref{eq:perm-prop-2}}
          \end{align*}
          Thus,
          \begin{align*}
            \HKR^\kappa_f(S)
             & = \E_{\af \in \{0,1\}^S}[ \kappa( \E[ \cofactor f \af ] ) ]                                                                      \\
             & = \E_{\af \in \{0,1\}^S}[ \kappa( \E[ \cofactor {(\sigma f)} {\sigma \af} ] )]                                                           \\
             & = \E_{\af \in \{0,1\}^{\sigma(S)}}[ \kappa( \E[ \cofactor{(\sigma f)} {\af} ] ) ]
             &                                                                        & \text{uniform distribution}, \eqref{eq:perm-prop-2} \\
             & = \HKR^\kappa_{\sigma f}(\sigma(S)).
          \end{align*}
          For (ii), take again $S \subseteq X$. Then, if $y \in S$,
          \begin{align*}
            \HKR^\kappa_f(S) 
            &= \textstyle\frac{1}{2} \HKR^\kappa_{\cofactor f {\ass y 1}}(S) + \textstyle\frac{1}{2} \HKR^\kappa_{\cofactor f {\ass y 0}}(S) \\
            &= \textstyle\frac{1}{2} \HKR^\kappa_{\cofactor {(\switchf{y}{f})} {\ass y 0}}(S) + \textstyle\frac{1}{2} \HKR^\kappa_{\cofactor {(\switchf{y}{f})} {\ass y 1}}(S)  \\
            &= \HKR^\kappa_{\switchf{y}{f}}(S).
          \end{align*}
          And if $y \not\in S$, note that for all assignments $\af$ over $S$, 
          \begin{align*}
            \E[\cofactor f \af] 
            &= \textstyle\frac{1}{2} \E[\cofactor f {\ass y 0; \af}] + \textstyle\frac{1}{2} \E[ \cofactor f {\ass y 1; \af} ] \\
            &= \textstyle\frac{1}{2} \E[\cofactor {(\switchf{y}{f})} {\ass y 1; \af}] + \textstyle\frac{1}{2} \E[ \cofactor {(\switchf{y}{f})} {\ass y 0; \af} ] \\
            &= \E[\cofactor {(\switchf{y}{f})} \af].
          \end{align*}
          This however implies 
          \begin{align*}
            \HKR^\kappa_f(S) 
            &= \E_{\af \in \{0,1\}^S}[ \kappa( \E[\cofactor f \af] ) ] \\
            &= \E_{\af \in \{0,1\}^S}[ \kappa( \E[\cofactor {(\switchf{y}{f})} \af] ) ] \\
            &= \HKR^\kappa_{\switchf{y}{f}}(S).
          \end{align*}
  \end{enumerate}

  \noindent Let us now focus on the case where $\kappa = \kappaquad$, i.e. $\kappa(a) = 4(a-\frac{1}{2})^2$. We can then see that
  \begin{align*}
    \textstyle\frac{1}{2}\kappa(a) + \frac{1}{2} \kappa(b) - \kappa(\frac{1}{2} a + \frac{1}{2} b)
    = (a-b)^2.
  \end{align*}
  Thus,
  \begin{align*}
    \marginal_x \HKR^\kappa_f(S)
     & = \E_{\af \in \{0,1\}^{S \cup \{ x \}}}[ \kappa(\E[\cofactor f \af]) ]
    - \E_{\af \in \{0,1\}^{S \setminus \{ x \}}}[ \kappa(\E[\cofactor f \af])]
     &                                                                                         & \text{definition}                                                                                             \\
     & = \E_{\af \in \{0,1\}^{S \setminus \{ x \}}}[
    \textstyle\frac{1}{2}\kappa(\E[\cofactor f {\af; \ass x 1}]) +
    \textstyle\frac{1}{2}\kappa(\E[\cofactor f {\af; \ass x 0}])
    - \kappa(\E[\cofactor f \af])]
     &                                                                                         & \text{unwrapping $\E$ with respect to $x$}                                                                    \\
     & = \E_{\af \in \{0,1\}^{S \setminus \{ x \}}}[ \E[ \cofactor {(\cofactor f {\ass x 1} - \cofactor f {\ass x 0})} {\af} ]^2 ]
     &                                                                                         & \text{above fact and } \E[\cofactor f \af] = \textstyle\frac{1}{2}\E[\cofactor f {\af;\ass x 1}] + \frac{1}{2}\E[\cofactor f {\af;\ass x 0}] \\
     & = \E_{\af \in \{0,1\}^{S}}[ \E[ \cofactor {(\cofactor f {\ass x 1} - \cofactor f {\ass x 0})} {\af} ]^2 ].
     &                                                                                         & \text{independence of $x$}
  \end{align*}
  This allows us to derive the remaining properties, i.e. the fact that $\HKR^\kappa$ is chain-rule decomposable and that it satisfies \ref{prop:CGM_SUD}:
  \begin{enumerate}[leftmargin=2.3cm]
    \item[c.r.d.] Suppose that $f$ is modular in $g$ and $x \in \dep(g)$. We want to show $\marginal_x \HKR^\kappa_f = (\marginal_x \HKR^\kappa_g) (\marginal_g \HKR^\kappa_f)$. First note that for $S \subseteq X$ and $\af; \bfat \in \{0,1\}^S$ with $A= S \cap \dep(g)$ and $B = S \cap (X \setminus \dep(g))$, due to \Cref{thr:decompositionOfExpectations},
          \begin{align*}
            \E[ \cofactor f {\af; \bfat} ] = \E[ \cofactor g \af ] \E[ (\cofactor f {\ass g 1})_{\bfat} ] + (1 - \E[ \cofactor g \af ] ) \E[ (\cofactor f {\ass g 0})_{\bfat} ]
          \end{align*}
          and therefore
          \begin{align*}
            \E[ \cofactor {(\cofactor f {\ass x 1} - \cofactor f {\ass x 0})} {\af; \bfat} ]
            & = \E[ \cofactor {\left( \cofactor g {\ass x 1} \cofactor f {\ass g 1} + (1-\cofactor g {\ass x 1}) \cofactor f {\ass g 0} - (\cofactor g {\ass x 0} \cofactor f {\ass g 1} + (1-\cofactor g {\ass x 0}) \cofactor f {\ass g 0}) \right)} {\af;\bfat}  ] 
              && \text{$f$ modular in $g$} \\
            & = \E[ \cofactor {((\cofactor g {\ass x 1} - \cofactor g {\ass x 0}) (\cofactor f {\ass g 1} - \cofactor f {\ass g 0}))} {\af;\bfat} ]
              && \text{simplifying} \\
            & = \E[ \cofactor {(\cofactor g {\ass x 1} - \cofactor g {\ass x 0})} {\af} \cofactor {(\cofactor f {\ass g 1} - \cofactor f {\ass g 0})} {\bfat} ] 
              && \text{independence} \\
            & = \E[ \cofactor{(\cofactor g {\ass x 1} - \cofactor g {\ass x 0})} {\af} ] \cdot \E[ \cofactor {(\cofactor f {\ass g 1} - \cofactor f {\ass g 0})} {\bfat}],
              && \text{\Cref{thr:decompositionOfExpectations}}
          \end{align*}
          which implies, using $\ell = f[g/x_g]$,
          \begin{align*}
            \marginal_x \HKR^\kappa_f(S)
             & = \E_{\af \in \{ 0,1\}^{S\cap \dep(g)}, \bfat \in \{0,1\}^{S \cap (X \setminus \dep(g)} }[\E[ \cofactor {(\cofactor g {\ass x 1} - \cofactor g {\ass x 0})} {\af} ]^2 \cdot \E[ \cofactor {(\cofactor f {\ass g 1} - \cofactor f {\ass g 0})} {\bfat}]^2]
             &                                                                                                                                                                                  & \text{uniform distribution}                              \\
             & = \E_{\af \in \{ 0,1\}^{S\cap \dep(g)}}[\E[ \cofactor {(\cofactor g {\ass x 1} - \cofactor g {\ass x 0})} {\af} ]^2] \cdot \E_{\bfat \in \{0,1\}^{S \cap (X \setminus \dep(g))}} [\E[\cofactor {(\cofactor f {\ass g 1} - \cofactor f {\ass g 0})} {\bfat}]^2]
             &                                                                                                                                                                                  & \text{independence}                                      \\
             & = \E_{\af \in \{ 0,1\}^{S}}[\E[ \cofactor {(\cofactor g {\ass x 1} - \cofactor g {\ass x 0})}{\af} ]^2] \cdot \E_{\bfat \in \{0,1\}^{S}} [\E[ \cofactor{(\ell_{x_g/1} - \ell_{x_g/0})}{\bfat}]^2]
             &                                                                                                                                                                                  & \text{\Cref{thr:decompositionOfExpectations}} \\
             & = \marginal_x \HKR^\kappa_g(S) \cdot \marginal_{x_g} \HKR^\kappa_{\ell}(S)                                                                                                                                                                        \\
             & = \marginal_x \HKR^\kappa_g(S) \cdot \marginal_{g} \HKR^\kappa_{f}(S).
          \end{align*}

    \item[\ref{prop:CGM_SUD}] Suppose that (i) $f,h$ are monotonically modular in $g$, (ii) $\cofactor f {\ass g 1} \geq \cofactor h {\ass g 1}$ and $\cofactor h {\ass g 0} \geq \cofactor f {\ass g 0}$ and (iii) $x \in \dep(g)$. We want to show $\marginal_x \HKR^\kappa_f \geq \marginal_x \HKR^\kappa_h$. Note that since \ref{prop:CGM_B} is satisfied and because $\HKR^\kappa$ is chain-rule decomposable, it suffices to show $\marginal_g \HKR^\kappa_f(S) \geq \marginal_g \HKR^\kappa_h(S)$ for all $S \subseteq X$. We have seen that
          \begin{align*}
            \marginal_g \HKR^\kappa_f(S) =
            \E_{\af \in \{0,1\}^{S}}[ \E[ \cofactor {(f_{\ass g 1} - \cofactor f {\ass g 0})} {\af}]^2].
          \end{align*}
          Note that due to $\cofactor f {\ass g 1} \geq \cofactor h {\ass g 1}$ and $\cofactor h {\ass g 0} \geq \cofactor f {\ass g 0}$ and $\cofactor h {\ass g 1} \geq \cofactor h {\ass g 0}$,
          \begin{align*}
            \cofactor {(\cofactor f {\ass g 1} - \cofactor f {\ass g 0})} {\af} \geq \cofactor {(\cofactor h {\ass g 1} - \cofactor h {\ass g 0})}{\af} \geq 0
          \end{align*}
          for every assignment $\af$ over $S$. This shows
          $ %
            \E[ \cofactor {(f_{\ass g 1} - \cofactor f {\ass g 0})}{\af}]^2 \geq \E[ \cofactor{(h_{\ass g 1} - \cofactor h {\ass g 0})} {\af}]^2,
          $ %
          so $\marginal_g \HKR^\kappa_f(S) \geq \marginal_g \HKR^\kappa_h(S)$ is true as well.
  \end{enumerate}
\end{appendixproof}

\begin{example*}
  For the special case where $\kappa = \kappaquad$, note that 
  \begin{center}
    $ \nicefrac 1 2 \cdot \kappa(a)+\nicefrac 1 2 \cdot \kappa(b) - \kappa(\nicefrac 1 2 \cdot a + \nicefrac 1 2 \cdot b) = (a-b)^2. $
  \end{center}
  Using $\E[f] = \nicefrac 1 2 \cdot \E[\cofactor f {\ass x 1}] + \nicefrac 1 2 \cdot \E[\cofactor f {\ass x 0}]$, this implies 
  \begin{center}
    $(\ZeroVF \circ \HKR^\kappa)_x(f) = (\E[\cofactor f {\ass x 1}] - \E[\cofactor f {\ass x 0}])^2,$
  \end{center}
  where $\ZeroVF$ is again the expectation of contributions with
  \begin{center}
    $\ZeroVF_x(v) = v( \{ x \}) - v(\varnothing)$.
  \end{center}
  The value $\ZeroVF \circ \HKR^\kappa$ is an IVF according \Cref{thr:cgmsInduceIVFs} and \Cref{thr:propertiesOfHKR}.
  In contrast to derivative-dependent IVFs, $\ZeroVF \circ \HKR^\kappa$ assigns 
  low values to variables in parity-functions: for $f = x \oplus y$, we have $\E[f_{\ass x 1}] = \E[f_{\ass x 0}]$, and thus 
  $(\ZeroVF \circ \HKR^\kappa)_{x}(f) = 0$.
\end{example*}

\paragraph{Derivative dependence.}

This property cannot be achieved, as witnessed by $f = x \oplus y$ and $g = x$. 
Due to $\Adiff_x f = \Adiff_x g$, it suffices to show that $\partial_x \HKR^\kappa_f \neq \partial_x \HKR^\kappa_g$ holds for all $\kappa$.
Note that
\begin{center}
  $
  \E[\cofactor f {\ass x 0}] = \nicefrac 1 2,
  \ \E[\cofactor f {\ass x 1 }] = \nicefrac 1 2,
  \ \E[\cofactor g {\ass x 0}] = 0,
  \ \E[\cofactor g {\ass x 1 }] = 1,
  $
\end{center}
and $\E[f] = \E[g] = \nicefrac 1 2$. Thus, for all constancy measures $\kappa$,
\begin{align*}
  \partial_x \HKR^\kappa_f(\varnothing) 
  &= \nicefrac 1 2 \cdot \kappa(\nicefrac 1 2) + \nicefrac 1 2 \cdot \kappa(\nicefrac 1 2) - \kappa(\nicefrac 1 2) = 0, \\
  \partial_x \HKR^\kappa_g(\varnothing) 
  &= \nicefrac 1 2 \cdot \kappa(1) 
   + \nicefrac 1 2 \cdot \kappa(0) - \kappa(\nicefrac 1 2) = 1,
\end{align*}
which shows $\marginal_x \HKR^\kappa_f \neq \marginal_x \HKR^\kappa_g$.

\section{Computing Importance Values}\label{sec:computation}

In this section, we present and evaluate computation schemes for blame, influence, and CGMs.
While there exists a practical approach based on model 
counting for the influence in CNFs~\cite{traxler2009variable}, we are only aware of naïve computations of 
CHK's blame~\cite{dubslaff2022causality}.
\ifx\arxiv\undefined
\else
Details are given in the appendix. 
\fi

\paragraph{Blame.} 
We focus on the modified blame. CHK's blame can be computed in a very similar fashion. 
Observe that for a Boolean function $f$ and $x\in X$, %
  \begin{center}
  $\MBlame^\rho_x(f) = \E[\gamma_0] + \textstyle\sum_{k=1}^{n{-}1} \rho(k)(\E[\gamma_k] - \E[\gamma_{k{-}1}]),$
  \end{center}
where $\gamma_k$ is the Boolean function for which $\gamma_k(\uf) = 1$ iff $\mscs^\uf_x(f) \leq k$. 
We devise two approaches for computing $\E[\gamma_k]$.
The first represents $\gamma_k$ through BDDs using the following recursion scheme: $\mscs^\uf_x(f) \leq k$ holds iff
\begin{itemize}[noitemsep]
  \item $k=0$ and $f(\uf) \neq f(\switchmain{\{x\}}{\uf})$, or
  \item $k>0$ and
  \begin{itemize}[noitemsep]
    \item $\mscs^\uf_x(f) \leq k{-}1$ or
    \item there is $y \neq x$ such that $\mscs^\uf_x(\switchf{y}{f}) \leq k{-}1$.
  \end{itemize}
\end{itemize}
This allows us to construct BDDs for $\gamma_k$ from $\gamma_{k{-}1}$, which lends itself to BDD-based approaches since $\gamma_k$ does not necessarily increase in size as $k$ grows.
The second approach %
introduces new existentially quantified variables in the input formula of $f$ %
to model occurrences of variables in critical sets of $\mscs^\uf_x(f)$.
With an additional cardinality constraint restricting the number of variables in critical sets to at most $k$,
we can use projected model counting to compute $\E[\gamma_k]$.

\begin{toappendix}
  \subsection{Computation of Blame Values} \label{sec:comp-blame}
  Suppose $f$ is a Boolean function and $x$ a variable. We are interested in computing the quantity $\Blame^\rho_x(f)$ for any share function $\rho$. First mind that $\scs^{\uf}_x(f)$ can only take values in $\{ 0,\dots,n{-}1,\infty\}$ since there are not more than $n{-}1$ elements a candidate solution to $\scs^\uf_x(f)$ can contain. Counting the number of times $\scs^\uf_x(f)=k$ occurs, we obtain
  \begin{align*}
    \Blame^\rho_x(f) 
    &= \frac{1}{2^n} \sum_{\uf \in \{0,1\}^X} \rho( \scs^\uf_x(f) ) \\
    &= \frac{1}{2^n} \sum_{k=0}^{n{-}1} \rho(k) | \{ \uf \in \{ 0,1\}^X : \scs^\uf_x(f) = k \} | \\
    &= \E[\gamma_{f,x,0}] + \sum_{k=1}^{n{-}1} \rho(k) ( \E[\gamma_{f,x,k}] - \E[\gamma_{f,x,k{-}1}] ),
  \end{align*}
  where $\gamma_{f,x,k}$ is a Boolean function such that $\gamma_{f,x,k}(\uf) = 1$ holds if and only if $\scs^\uf_x(f) \leq k$, so the number of its models must agree with the number of elements in the set $\{ \uf \in \{0,1\}^X : \scs^\uf_x(f) \leq k\}$.

  \subsubsection*{BDD-Based Approach}
  Let us first show how one can compute $\gamma_{f,x,k}$ for $\Blame^\rho_x(f)$ using BDD-based representations. So suppose that $f$ is represented by a BDD with an arbitrary order of variables. Using this representation, we recursively define the following Boolean function for $k \in \{0,\dots,n{-}1\}$ and $c \in \{0,1\}$:

  \begin{align*}
      \theta_{f,x,c,k} = \begin{cases}
          f \land \overline{\switchf{x}{f}}  & k= 0, c= 1 \\
          \overline{f} \land \switchf{x}{f}  & k= 0, c= 0 \\
          \theta_{f,x,c,k{-}1} \lor \bigvee_{y \in X \setminus \{ x \}} \switchf{y}{\theta_{f,x,c,k{-}1}} & k > 0
      \end{cases}
  \end{align*}
  Note that for BDDs, flipping variables and complementing and composing them once(!) using $\land,\lor$, etc. increases BDD sizes only polynomially and sometimes even decreases them. The last fact is especially important for $k > 0$, as it is often times the case that the BDD's size decreases as $k$ grows, which is due to the fact that $\theta_{f,x,c,k}$ becomes more and more satisfiable. Other approaches, such as the representation of $\theta_{f,x,c,k}$ by propositional formulas, do not have this advantage and will always ``explode''. We now show that for all assignments $\uf$, 
  \[ 
    \theta_{f,x,c,k}(\uf)=1 \tiff \scs^\uf_x(f,c) \leq k.
  \] 
  In order to see this, we need the following technical lemma:

  \begin{apxlemma}
      \label{thr:blameBDD1}
      For all Boolean functions $f$, variables $x,z$, constants $c \in \{0,1\}$ and $k \geq 0$,
      \begin{align*}
          \theta_{\switchf{z}{f},x,c,k} = \switchf{z}{\theta_{f,x,c,k}}.
      \end{align*}
  \end{apxlemma}
  \begin{proof}
      Note that for all Boolean functions $\ell$ and $h$, variables $y$ and component-wise operations $\circ$ (such as $\land,\lor$, etc.),
      \begin{align*}
          \switchf{z}{(\ell \circ h)} &= \switchf{z}{\ell} \circ \switchf{z}{h}, \\
          \overline{\switchf{z}{\ell}} &= \switchf{z}{\overline\ell}, \\
          \switchf{z}{\switchf{y}{\ell}} &= \switchf{y}{\switchf{z}{\ell}}.
      \end{align*}
      Using induction over $k$, we can see that these facts suffice to get us
      $ %
          \theta_{\switchf{z}{f},x,c,0} = \switchf{z}{\theta_{f,x,c,0}}
      $ %
      in the base case. For $k > 0$, use the induction hypothesis, which implies $\theta_{\switchf{z}{f}, x,c,k{-}1} = \switchf{z}{\theta_{f, x,c,k{-}1}}$, and simplify according to the rules given above to get $\theta_{\switchf{z}{f}, x,c,k} = \switchf{z}{\theta_{f,x,c,k}}$.
  \end{proof}

  \begin{apxtheorem}
      \label{thr:blameBDD2}
      For all Boolean functions $f$, variables $x$, $c \in \{0,1\}$, $k \geq 0$ and assignments $\uf$,
      \begin{align*}
          \theta_{f,x,c,k}(\uf) = 1 \tiff \scs^\uf_x(f,c) \leq k.
      \end{align*}
  \end{apxtheorem}
  \begin{proof}
      This can be shown by induction over $k$. If $k = 0$, then $\scs^\uf_x(f,c) = 0$ if and only if $c = f(\uf) \neq f(\switch{x}{\uf})$, which can be seen to be identical to asking for $\theta_{f,x,c,0}(\uf) = 1$. For a fixed $k > 0$, let us assume for the induction hypothesis that the claim of \Cref{thr:blameBDD2} already holds for $k{-}1$:
      \begin{align*}
        \scs^\uf_x(f,c) \leq k 
        &\tiff \scs^\uf_x(f,c) \leq k{-}1 \text{ or } \exists y \in X \setminus \{ x \}.\; \scs^\uf_x(\switchf{y}{f},c) \leq k{-}1 
          && (*) \\
        &\tiff \theta_{f,x,c,k{-}1}(\uf) =1 \text{ or } \exists y \in X \setminus \{ x \}.\; \theta_{\switchf{y}{f},x,c,k{-}1}(\uf) = 1  
          && \text{induction hypothesis} \\
        &\tiff \theta_{f,x,c,k{-}1}(\uf) =1 \text{ or } \exists y \in X \setminus \{ x \}.\; \switchf{y}{\theta_{f,x,c,k{-}1}}(\uf) = 1 
          && \text{\Cref{thr:blameBDD1}} \\
        &\tiff \theta_{f,x,c,k}(\uf)=1.
          && \text{definition}
      \end{align*}
      Here $(*)$ is true since for all $S \subseteq X \setminus \{ x \}$ and $y \in S$,
      \begin{align*}
        &c = f(\switch{S}{\uf}) \neq f(\switch{x}{\switch{S}{\uf}})  \\
        \tiff\quad &c = \switchf{y}{f}(\switch{S \setminus \{ y \}}{\uf}) \neq \switchf{y}{f}(\switch{x}{\switch{S \setminus \{ y \}}{\uf}}).
      \end{align*}
      Thus, if $S$ is a candidate solution to $\scs^\uf_x(f,c)$ of size $k$, then $S \setminus \{ y \}$ is a candidate solution to $\scs^\uf_x(\switchf{y}{f}, c)$ of size $k{-}1$ and vice versa.
  \end{proof}

  Now, let us turn back to the computation of CHK's blame. To obtain a BDD that models $\scs^\uf_x(f) \leq k$, we set 
  \[ 
    \gamma_{f,x,k} = f \theta_{f,x,1,k} \lor \overline f \theta_{f,x,0,k},
  \] 
  which introduces a case distinction on $f$. For the modified blame, we can use a similar approach and simply have to redefine $\gamma_{f,x,k}$ as  
  \begin{align*}
    \gamma_{f,x,k} = \begin{cases}
      f \oplus \switchf{x}{f} & k = 0 \\
      \gamma_{f,x,k{-}1} \lor \bigvee_{y \in X \setminus \{ x \}} \switchf{y}{\gamma_{f,x,k{-}1}}. & k > 0
    \end{cases}
  \end{align*}
  
\end{toappendix}

\begin{toappendix}
  \subsubsection*{Projected Model Counting Based Approach}

  Let $\phi$ be a propositional formula that represents $f$ and suppose $x \in \var(\phi) = \Phi$ occurs in $\phi$. The basis for this approach lays in the observation that we can devise a sequence of ``short'' propositional formulas $\{ \chi_k \}_{k=0}^{n{-}1}$ with some extra variables $N \subseteq X \setminus \Phi$ such that for all assignments $\uf$ over $\Phi$ and $k \in \{0,\dots,n{-}1\}$,
  \begin{align*}
    \exists \vf \in \{0,1\}^N . \chi_k(\uf; \vf)
    \tiff 
    \scs^\uf_x(f) \leq k.
  \end{align*}
  Counting the number of $\uf$'s that satisfy the left hand side is done through projected model counting. The general idea is that these new variables contained in $N$ certify the existence of a candidate solution $S$ of size $k$ or less for the right side. The definition of $\chi_k$ contains roughly two parts: one that states that $S$ cannot have more than $k$ elements, and one that makes sure that $S$ is indeed a candidate solution, i.e. satisfies $\phi(\uf) = \phi(\switch{S}{\uf}) \neq \phi(\switch{x}{\switch{S}{\uf}})$. The former part is ensured by using a cardinality constraint and the latter part by introducing a set of new variables that model the action of adding elements to $S$:
  
  First, introduce $|\Phi|-1$ new variables by setting $V = \{ v_y : y \in \Phi \setminus \{ x \} \} \subseteq X \setminus \Phi$. These variables are used in the replacement 
  \[ 
    \psi = \phi[y/(y \oplus v_y) : v_y \in V], 
  \] 
  which is easy to carry out on formulas and doubles $\phi$'s size in the worst case. Furthermore, we use the \emph{totalizer encoding} \cite{bailleuxEfficientCNFEncoding2003} to devise a formula $\atmost_{k}$ that introduces $O(n \log n)$ new variables denoted by the set $W$ such that $\var(\atmost_k) = W \cup V$ and 
  \begin{align*}
    \exists \wf \in \{0,1\}^W.\; \atmost_{k}(\vf; \wf) 
    \tiff 
    | \vf | \leq k.
  \end{align*}
  for all assignments $\vf$ over $V$. The length of $\atmost_k$ is in the range $O(n^2)$. Finally, we set:
  \begin{align*}
    \chi_k(\uf;\vf;\wf) = \atmost_k(\vf;\wf) \land (\phi(\uf) \leftrightarrow \psi(\uf;\vf)) \land (\phi(\uf) \oplus \switchf{x}{\psi}(\uf;\vf)),
  \end{align*}
  where $\leftrightarrow$ denotes the syntactic equivalence.
  Note that the latter part increases $\phi$'s size only linearly. Thus, if $m$ represents the size of $\phi$, then this transformation creates a formula of size $O(n^2 + m)$ and $O(n \log n)$ variables. The following theorem now shows our original claim:  
  \begin{apxtheorem}
    For all $k \in \{0,\dots,n{-}1\}$ and assignments $\uf$ over $\Phi$, we have 
    \begin{align*}
      \exists \vf \in \{0,1\}^V, \wf \in \{0,1\}^W . \chi_k(\uf; \wf; \vf) 
      \tiff 
      \scs^\uf_x(f) \leq k
    \end{align*}
  \end{apxtheorem}
  \begin{proof}
    Define $\sigma : V \rightarrow \Phi \setminus \{ x \}, v_y \mapsto y$. Then note that
    \begin{align}
      &\exists \vf \in \{0,1\}^V, \wf \in \{0,1\}^W.\; \chi_k(\uf;\wf;\vf) \nonumber \\
      &\tiff \exists \vf \in \{0,1\}^V.\; (|\vf|\leq k) \land (\phi(\uf) = \psi(\uf;\vf) \neq \psi((\switch{x}\uf);\vf)) 
        && \label{eq:blamePMC1} \\
      &\tiff \exists \vf \in \{0,1\}^V.\; (|\sigma \vf| \leq k) \land (\phi(\uf) = \phi(\uf\oplus (\sigma\vf; \ass x 0)) \neq \phi((\switch{x}{\uf}) \oplus (\sigma\vf; \ass x 0))) 
        && \label{eq:blamePMC2} \\
      &\tiff \exists \vf \in \{0,1\}^{\Phi\setminus \{ x \}}.\; (|\vf| \leq k) \land (\phi(\uf) = \phi(\uf\oplus (\vf;\ass x 0)) \neq \phi((\switch{x}{\uf}) \oplus (\vf;\ass x 0))) 
        && \label{eq:blamePMC3} \\
      &\tiff \exists S \subseteq \Phi\setminus \{ x \}.\; (|S| \leq k) \land (\phi(\uf) = \phi(\switch{S}{\uf}) \neq \phi(\switch{x}{\switch{S}{\uf}})) 
        && \label{eq:blamePMC4} \\
      &\tiff \scs^\uf_x(f) \leq k.
        && \label{eq:blamePMC5}  
    \end{align}
    Equivalence \eqref{eq:blamePMC1} moves the existential quantification of $\wf$ into $\chi_k$, replaces $\exists \wf \in \{0,1\}^W. \atmost_k(\vf;\wf)$ by the constraint $|\vf|\leq k$, and swaps $\switchf{x}{\psi}(\uf;\vf)$ by $\psi((\switch{x}{\uf}); \vf)$. The next equivalence \eqref{eq:blamePMC2} uses the fact that for all $\vf$ over $V$,
    \begin{align*}
      \psi(\uf; \vf) = \phi(\uf \oplus (\sigma\vf; \ass x 0)).
    \end{align*}
    We show this fact as follows. Note that for a single element $v_y \in V$ and $c,e \in \{0,1\}$,
    \begin{align*}
      \cofactor {\phi[y/(y \oplus v_y)]} {\ass y c; v_\ass y e} 
      &= \cofactor{\left( (y \oplus v_y) \cofactor \phi {\ass y 1} \lor (\overline{y \oplus v_y}) \cofactor \phi {\ass y 0} \right)} {\ass y c; v_\ass y e} \\
      &= \cofactor {\left( y \cofactor \phi {\ass y 1} \lor \overline{y} \cofactor \phi {\ass y 0} \right)} {\ass y {(c \oplus e)}} \\
      &= \cofactor \phi {(\ass y c) \oplus (\ass {\sigma(v_y)} e)}.
    \end{align*}
    Repeatedly applying this argument for all $v_y \in V$ gives us for all $\wf$ over $\Phi \setminus \{ x \}$,
    \begin{align*}
      \cofactor \psi {\wf;\vf} = \cofactor \phi {\wf \oplus \sigma \vf}
    \end{align*}
    and therefore, for $\uf = (\wf; \ass x c)$ over $\Phi$ with $W = \Phi \setminus \{ x \}$,
    \begin{align*}
      \psi(\uf; \vf) 
      = \cofactor \psi {\wf; \vf}(\ass x c) 
      = \cofactor \phi {\wf \oplus \sigma \vf}(\ass x c) 
      = \phi ((\wf \oplus \sigma \vf); \ass x c)
      = \phi (\uf \oplus (\sigma \vf; \ass x 0)).
    \end{align*}
    For \eqref{eq:blamePMC3}, we use the fact that $\sigma : \vf \mapsto \sigma \vf$ is a bijection from the set of assignments over $V$ to the set of assignments over $\Phi \setminus \{ x\}$. For \eqref{eq:blamePMC4}, we use again a bijection, but this time taking subsets of $\Phi \setminus \{ x \}$ to assignments over $\Phi \setminus \{ x \}$ through $S \mapsto \idf{S}$. Finally, for \eqref{eq:blamePMC5}, we simply apply the definition of $\scs$.

  \end{proof}
\end{toappendix}

\paragraph{Influence.} %
In case $f$ is given as a CNF formula, we use Traxler's method to compute the influence~\cite{traxler2009variable}.
For all other formulas, note that
standard satisfiability-preserving transformations do not preserve influence values:
For example, applying the Tseytin transformation to
$x {\lor} \overline x y$ results in a CNF where $x$ has a higher influence than $y$.

However, the influence is proportional to the number of models of $\Adiff_x f$. 
If $f$ is given by a BDD, computing a representation of $\Adiff_x f$ means squaring $f$'s size in the worst case, 
while the formula-based representation only doubles it. 
For the latter case, we can count the models of $\Adiff_x f $ using a Tseytin transformation 
and a standard model counter. %

\paragraph{BDD representations of satisfiability-biased CGMs.} 
The dominating CGM computes a simple game, which is essentially a 
Boolean function, and therefore permits a representation by BDDs. Moreover, 
using a BDD representation of $f$, we compute $\omega_f$ 
using a recursion on cofactors of variables $z$,
  \begin{center}
  $\cofactor {(\omega_f)}{\ass z 1}  = \omega_{\cofactor f {\ass z 1}}\!\!\lor \omega_{\cofactor f {\ass z 0}}$ \quad and \quad
  $\cofactor {(\omega_f)}{\ass z 0}  = \omega_{\cofactor f {\ass z 0} \land \cofactor f {\ass z 1}}$.
  \end{center}
The Banzhaf value of $x$ in $\omega_f$ is then just 
\begin{center} $\E[\cofactor {(\omega_{f})} {\ass x 1}] - \E[\cofactor {(\omega_{f})} {\ass x 0}],$\end{center} 
  which poses no effort once the BDD of $\omega_f$ is constructed. The rectifying CGM can be computed analogously. 

\begin{figure}
  \hspace*{-1em}\resizebox{0.54\textwidth}{!}{\input{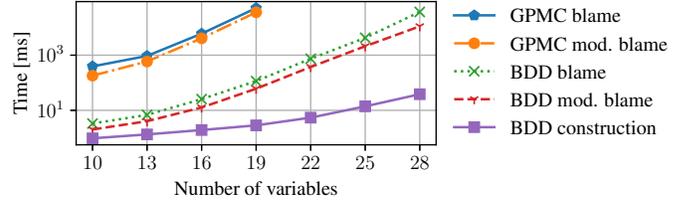}}
  \vspace{-2em}
  \caption{\label{fig:blame}Computation of blame values on random $(n,3n,7)$-CNFs (number of variables, 
  number of clauses, clause width). BDD times \emph{include} construction time of the BDD for the initial CNF.}
\vspace{-0.5em}
\end{figure}

\paragraph{Implementation and evaluation.}
We have implemented Traxler's method
and our new computation schemes
in Python, using BuDDy~\cite{lind-nielsenBuDDyBinaryDecision1999} %
as BDD backend with automatic reordering and %
GPMC~\cite{suzukiExtensionDPLLBasedModelCounting2015,suzukiImprovementProjectedModelCounting2017} for (projected) model counting.
To evaluate our approaches, we conducted experiments on Boolean functions given as CNFs that
were either randomly generated or generated from the ISCAS'99 dataset~\cite{davidson1999itc,compileProject}.
We always computed importance values w.r.t. the first variable in the input CNF
and averaged the timings over 20 runs each.
Our experiments were carried out on a Linux system with an i5-10400F CPU at 2.90GHz and 16GB of RAM.
To compare our BDD-based and model counting approaches, \Cref{fig:blame} shows timings for 
blame computations on random CNFs. Here, the BDD-based approach clearly outperforms
the one based on projected model counting. This is also reflected in real-world benchmarks from ISCAS'99 shown 
in~\Cref{tab:iscas99-main}, where the approach based on model counting runs into timeouts
for even small instances. 
\Cref{tab:iscas99-main} shows that computations for influence values based on model counting
scale better than the BDD-based approach, mainly due to an expensive initial BDD construction. 
Computing the BDD of the dominating CGM is done without much overhead once the BDD for the CNF is given.
\vspace{-0.4em}

\begin{toappendix}
  \subsection{Cooperative Game Mappings}
  For the scope of this section, we interpret the simple cooperative games produced by $\omega$ and $\nu$ as Boolean functions, e.g. write $\omega_f(\idf{S}) = \omega_f(S)$ or equivalently $\omega_f(\uf) = \omega_f(\{ x \in X : \uf(x) = 1 \})$ for all subsets $S \subseteq X$ resp. assignments $\uf$. Recall from \Cref{thr:dcgmDecompositionVariables} that for all variables $z$,
  \begin{align*}
    &\omega_f(S \cup \{ z \}) = (\omega_{\cofactor f {\ass z 1}} \lor \omega_{\cofactor f {\ass z 0}})(S), \\
    &\omega_f(S \setminus \{ z \}) = \omega_{\cofactor f {\ass z 1} \land \cofactor f {\ass z 0}}(S),
  \end{align*}
  which means for $\omega_f$'s cofactors w.r.t. $z$,
  \begin{align*}
    &\cofactor {(\omega_f)} {\ass z 1} = \omega_{\cofactor f {\ass z 1}} \lor \omega_{\cofactor f {\ass z 0}}, \\
    &\cofactor {(\omega_f)} {\ass z 0} = \omega_{\cofactor f {\ass z 1} \land \cofactor f {\ass z 0}}.
  \end{align*}
  Using the Shannon decomposition, note that 
  \begin{align*}
    \omega_f = z \cofactor {(\omega_{f})} {\ass z 1} \lor \overline z \cofactor {(\omega_{f})} {\ass z 0} = z (\omega_{\cofactor f {\ass z 1}} \lor \omega_{\cofactor f {\ass z 0}}) \lor \overline z \omega_{\cofactor f {\ass z 0} \land \cofactor f {\ass z 1}},
  \end{align*}
  which corresponds to the decomposition as presented in the main part. The Banzhaf value for $\omega_f$ is simply $\Banzhaf_x(\omega_f) = \E[\cofactor {(\omega_f)} {\ass x 1}] - \E[\cofactor {(\omega_f)} {\ass x 0}]$, which can easily be computed if $\omega_f$ is represented by a BDD. For the rectifying CGM, note that 
  \begin{align*}
    &\nu_f(S \cup \{ z \}) = \nu_{\cofactor f {\ass z 1} \lor \cofactor f {\ass z 0}}(S), \\
    &\nu_f(S \setminus \{ z \}) = (\nu_{\cofactor f {\ass z 1}} \nu_{\cofactor f {\ass z 0}})(S),
  \end{align*}
  which produces the decomposition 
  \begin{align*}
    \nu_f = z \nu_{\cofactor f {\ass z 1} \lor \cofactor f {\ass z 0}} \lor \overline z \nu_{\cofactor f {\ass z 1}} \nu_{\cofactor f {\ass z 0}}.
  \end{align*}
  The remainder is analogous.

  \subsection{Experiments and Results}
  
  \begin{figure}[ht]
    \centering
    \begin{subfigure}{0.9\textwidth}
      \resizebox{\textwidth}{!}{\input{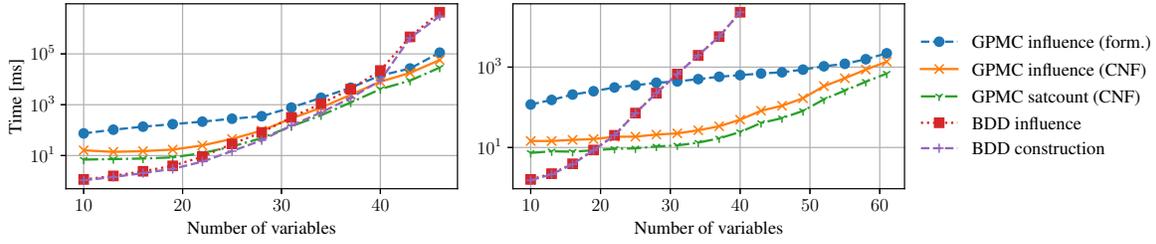}}
      \caption{Performance trends for influence. \emph{GPMC satcount (CNF)} is the time spent counting the number of satisfying assignments of the initial CNF and \emph{GPMC influence (CNF)} is the time of the method from \cite{traxler2009variable}.}
      \label{fig:performance-trends-influence}
    \end{subfigure}%
    \hfill
    \begin{subfigure}{0.9\textwidth}
      \resizebox{\textwidth}{!}{\input{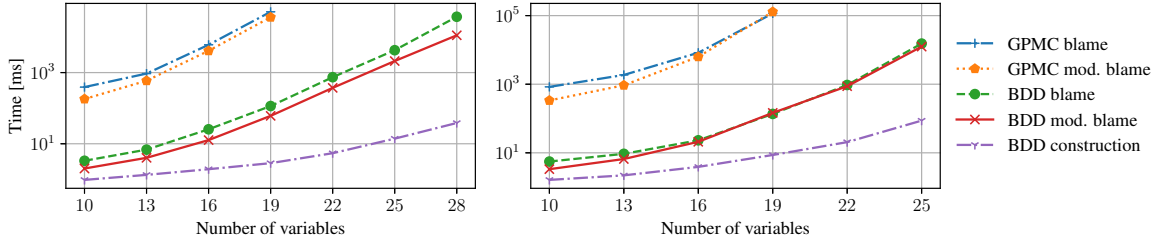}}
      \caption{Performance trends for blame.}
      \label{fig:performance-trends-blame}
    \end{subfigure}%
    \hfill
    \begin{subfigure}{0.9\textwidth}
      \resizebox{\textwidth}{!}{\input{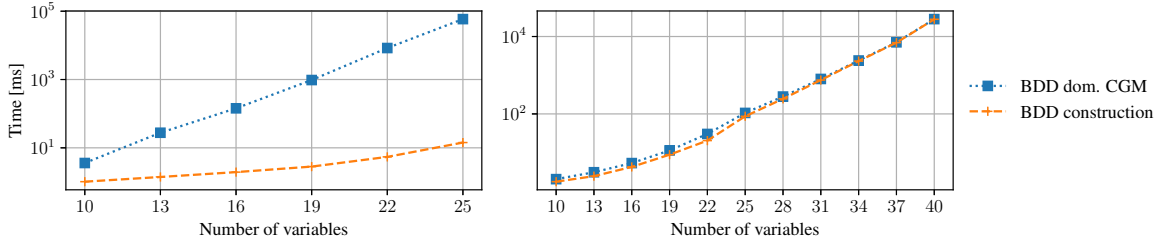}}
      \caption{Performance trends for the dominating CGM.}
      \label{fig:performance-trends-dcgm}
    \end{subfigure}%
    \caption{Performance results for random $(n,3n,7)$-CNFs (left column) and random $(n,7n,4)$-CNFs (right column). \emph{BDD construction} is always the time used to compute the BDD representing the initial CNF and is \emph{included} in all BDD-based methods.}
    \label{fig:performance-trends}
  \end{figure}

  \begin{figure}
    \centering
    \resizebox{0.8\textwidth}{!}{\input{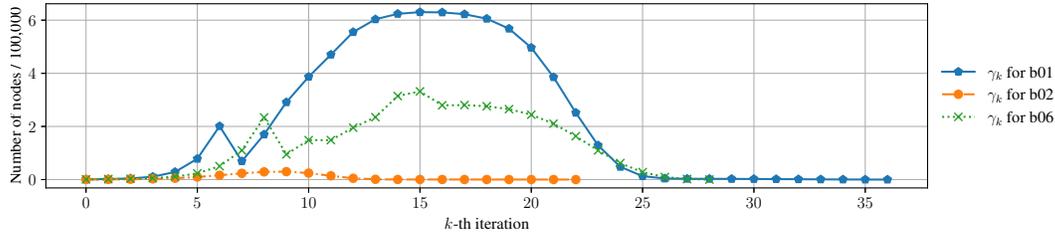}}
    \caption{Number of nodes in the BDD of $\gamma_k$ used in the computation of the blame for growing $k$. (cf. \Cref{sec:computation}.)}
    \label{fig:bdd-sizes}
  \end{figure}

  \begin{table}[ht]
    \centering
    \resizebox{0.9\textwidth}{!}{%
    \begin{tabular}{crr rrr rrrr}
      \toprule
      & & & \multicolumn{3}{c}{(projected) model counting approaches} & \multicolumn{3}{c}{BDD-based approaches} \\ \cmidrule(r){4-6} \cmidrule(r){7-10}
        Instance & \#Variables & \#Clauses & Influence (CNF) & Influence (formula) & Blame & Construction & Influence & DCGM & Blame \\
        \midrule 
b02 & 26 & 66 & 5\,ms & 49\,ms & \timeout & 1\,ms & $<$1\,ms & 2\,ms & 3'649\,ms \\
b06 & 44 & 122 & 7\,ms & 99\,ms & \timeout & 3\,ms & $<$1\,ms & 6\,ms & 697'573\,ms \\
b01 & 45 & 120 & 7\,ms & 110\,ms & \timeout & 4\,ms & $<$1\,ms & 8\,ms & 3'068'667\,ms \\
b03 & 156 & 376 & 11\,ms & 442\,ms & \timeout & 53'934\,ms & 24\,ms & 1'776\,ms & \timeout \\
b09 & 169 & 417 & 22\,ms & 483\,ms & \timeout & 23'373\,ms & 13\,ms & 1'137\,ms & \timeout \\
b08 & 180 & 455 & 26\,ms & 577\,ms & \timeout & 11'703\,ms & 18\,ms & 1'729\,ms & \timeout \\
b10 & 201 & 525 & 27\,ms & 658\,ms & \timeout & 19'079\,ms & 44\,ms & 3'221\,ms & \timeout \\
b13 & 352 & 847 & 34\,ms & 1'088\,ms & \timeout & \timeout & \timeout & \timeout & \timeout \\
b07 & 435 & 1'132 & 64\,ms & 2'058\,ms & \timeout & \timeout & \timeout & \timeout & \timeout \\
b04 & 730 & 1'919 & 324\,ms & 4'147\,ms & \timeout & \timeout & \timeout & \timeout & \timeout \\
b11 & 765 & 2'104 & 606\,ms & 5'872\,ms & \timeout & \timeout & \timeout & \timeout & \timeout \\
b05 & 966 & 2'798 & 173\,ms & 9'150\,ms & \timeout & \timeout & \timeout & \timeout & \timeout \\
b12 & 1'072 & 2'911 & 230\,ms & 8'555\,ms & \timeout & \timeout & \timeout & \timeout & \timeout \\
        \bottomrule
      \end{tabular}}
      \caption{\label{tab:iscas99}Computation time for instances of the ISCAS'99 dataset, timeout set to one hour. 
      BDD columns \emph{Influence}, \emph{DCGM} (construction of the BDD for the dominating CGM), and \emph{Blame} are \emph{without} 
      the BDD construction time for the initial CNF (cf. column \emph{Construction}).}
    \label{tab:iscas99}
  \end{table}

  The proposed algorithms were tested on two datasets of randomly generated CNFs and the ISCAS'99 benchmark. The former two are specified by tuples $(n,m,k)$ where $n$ is the number of variables, $m$ the number of clauses and $k$ the width of every clause.
  The first dataset contains random $(n,3n,7)$-CNFs which have expected values from at least $0.3$ up to $0.9$. These CNFs are very balanced; the probability that a random assignment satisfies a random CNF from this dataset is $0.54$. The second dataset consists of random $(n,7n,4)$-CNFs which have expected values from $7 \cdot 10^{-14}$ up to $0.0195$. Only small amounts of assignments satisfy CNFs of this dataset. For each $n$, we have sampled $10$ CNFs and computed the mean running time of every algorithm. This includes parsing and constructing the BDDs with fixed but essentially random variable orders. The (modified) blame is computed for $\rho = \rhofrac$. Blame and influence values are computed with respect to the first variable.
 
  Let us compare the results of both datasets as shown in \Cref{fig:performance-trends}:
  \begin{itemize}
    \item {\bf \Cref{fig:performance-trends-influence}}: For formula based approaches, computing the influence includes parsing the CNF as a formula $\phi$, applying the transformation $\phi \mapsto \cofactor \phi {\ass x 0} \oplus \cofactor \phi {\ass x 1}$ and then re-translating the result back into a CNF via a Tseytin transformation. This increases the original size only linearly, which is arguably the reason for why GPMC's performance is close (or becomes close over time) to the approach of \cite{traxler2009variable} for both datasets. Our BDD-based implementation is faster for smaller numbers of variables, but is overtaken by GPMC eventually. As one can see on the right side, GPMC is much faster for random CNFs with low expected values. Note the influence is computed with little to no overhead once the BDD construction is done.
    \item {\bf \Cref{fig:performance-trends-blame}}: Recall that we have presented two possible ways of computing the (modified) blame. In our case, the projected model counting approach via GPMC performs worse than the BDD-based approach. Note that the modified blame is computed slightly faster than the blame in both cases. However, the computation is still limited to small amounts of variables. %
    \item {\bf \Cref{fig:performance-trends-dcgm}}: Our proposed approach for the representation of the dominating cooperative game mapping through BDDs performs best in the second dataset. This is because most of these CNFs have a very low share of satisfied assignments, so one expects the BDDs to have small sizes, which in turn allows one to compute the dominating cooperative game mapping efficiently. (Only few subsets $S \subseteq X$ satisfy $\omega_f(S) = 1$.) Thus, the bottle neck for sparsely satisfied CNFs seems to be the BDD's construction time.
  \end{itemize}
  
  We conclude that (i) the influence can be computed efficiently using modern model counting software, (ii) the blame can be computed efficiently for smaller amounts of variables using a BDD-based approach and (iii) given a BDD representation of a Boolean function, the dominating cooperative game mapping can usually be computed efficiently. For both influence and dominating cooperative game mapping, the BDD construction time is the biggest bottle neck.

  Finally, let us consider the ISCAS'99 benchmark \cite{compileProject}. \Cref{tab:iscas99} shows how well our proposed approaches perform on average for 20 runs. The smallest instances of this benchmark still permit the computation of blame values, which was done for the instances b02, b06 and b01. When plotting the number of nodes in the BDDs representing the functions $\{ \gamma_{k}\}_{k=0}^n$ (recall that for a fixed Boolean function $f$ and variable $x$, $\gamma_k(\uf) = 1$ if and only if $\scs^\uf_x(f) \leq k$), we obtain \Cref{fig:bdd-sizes}. Observe that as $k$ grows, $\gamma_{k}$ becomes more and more satisfiable. Consequently, we can see that the sizes of BDDs representing $\gamma_k$ first increase but later on decrease until they hit zero nodes (i.e. are one everywhere).
\end{toappendix}

\section{Conclusion}

This paper introduced IVFs as a way to formally reason about importance of variables in Boolean functions.  
We established general statements about IVFs, also providing insights on notions of importance 
from the literature by showing that they all belong to the class of IVFs.
Apart from revealing several relations between known IVFs, we have shown how 
to generate new ones inspired by cooperative game theory.

For future work, we  %
will study properties with strict importance inequalities,
IVFs for sets of variables, IVFs for pseudo Boolean functions,
and global values %
similar to the \emph{total influence}~\cite{o2014analysis}.
On the empirical side, %
the generation of splitting rules for SAT-solvers and
variable-order heuristics for BDDs based on different 
instances of IVFs are promising avenues to pursue.

~\\\textbf{Acknowledgments.}
The authors were partly supported by the DFG through
the DFG grant 389792660 as part of \href{https://perspicuous-computing.science}{TRR~248} and
the Cluster of Excellence EXC 2050/1 (CeTI, project ID 390696704, as part of Germany's Excellence Strategy)
and ``SAIL: SustAInable Life-cycle of Intelligent Socio-Technical Systems'' (Grant ID NW21-059D), 
funded by the program ``Netzwerke 2021'' of the Ministry of Culture and Science of the State of North Rhine-Westphalia, Germany.

\bibliographystyle{named}
\bibliography{../bibliography}

\end{document}